\newtheorem{defi}{Definition}
\newtheorem{teo}{Theorem}
\newtheorem{prop}[teo]{Proposition}
\newtheorem{cor}[teo]{Corollary}
\newtheorem{lem}[teo]{Lemma}
\newtheorem{rmk}{Remark }
\newtheorem{exm}{Example}
\newenvironment{proof1}[1]
{\textit{Proof of #1.}}{\hfill$\square$}
\renewcommand{\aa}{a_{0}}
\newcommand{\bb}{a_1}
\newcommand{\ee}{a_2}
\newcommand{\dok}{c_N}
\begin{document}
	
	
\title{Extended Hamiltonians and shift, ladder functions and operators}
\author{Claudia Maria Chanu, Giovanni Rastelli \\  Dipartimento di Matematica, \\ Universit\`a di Torino.  Torino, via Carlo Alberto 10, Italia.\\ \\ e-mail: claudiamaria.chanu@unito.it \\  giovanni.rastelli@unito.it }

\maketitle

\begin{abstract} In recent years, many natural Hamiltonian systems, classical and quantum, with constants of motion of high degree, or symmetry operators of high order, have been found and studied. Most of these Hamiltonians, in the classical case, can be included in the family of extended Hamiltonians, geometrically characterized by the structure of warped manifold of their configuration manifold. For the extended manifolds, the characteristic constants of motion of high degree are polynomial in the momenta of determined form. We consider here a different form of the constants of motion, based on the factorization procedure developed by S. Kuru, J. Negro and others. We show that an important subclass of the extended Hamiltonians admits factorized constants of motion and we determine their expression. The
classical constants may be non-polynomial in the momenta, but the factorization procedure allows, in a type of extended Hamiltonians, their quantization via shift and ladder operators, for systems of any finite dimension. 
\end{abstract}	
	
\section{Introduction}

In the past few years, the systematic study of classical Hamiltonian systems admitting polynomial constants of motion of high degree, and quantum Hamiltonian operators with symmetry operators of high order, produced several articles \cite{CKNd,Bo, KKM,KKMncf,6bis,7bis,Qu,7ter,Ra2,TTW}, many of them focused on the superintegrability of the systems considered. In several articles \cite{CDR0,CDRPol,CDRfi,CDRgen,CDRsuext,CDRraz,TTWcdr, CDRpw}, we, together with Luca Degiovanni, were able to unify most of those  Hamiltonian systems under the common structure of \emph{extended Hamiltonians}. For the case of natural Hamiltonians, characterized geometrically in \cite{CDRgen}, to be an extended Hamiltonian implies that the configuration manifold is a particular \emph{warped Riemannian manifold} \cite{Ta}, equipped with a conformal  Killing vector field of special type.  Extended Hamiltonians allow the existence of privileged coordinate systems adapted to the warped geometry, with the property of splitting a given $(N+1)$-dimensional  system into  one-dimensional and  $N$-dimensional parts. 	The most important consequence of the extended Hamiltonian structure, in the classical case, is the existence of constants of motion, that we call \emph{characteristic}, determined by the power of an operator applied to a well determined function, in correspondence of a rational parameter employed in the construction of the extended Hamiltonian itself. The classical constants of motion of high degree in the Tremblay-Turbiner-Winternitz (TTW) system (and in its particular cases: three-particles Jacobi-Calogero and Wolfes systems), in the Post-Winternitz  system, and in their  multi-dimensional and non-zero-curvature manifold generalizations, as shown in \cite{TTWcdr}, are exactly of that type. Until now, we did not have a similar unified procedure for determining the symmetry operators of the quantum Hamiltonian corresponding to an extended Hamiltonian. In the cases considered in literature, essentially the TTW system and its generalizations, the symmetry operators are determined case by case, making use of the complete separation of variables of the Hamiltonians. The symmetry operators are therefore always obtained by some factorization procedure. One of these factorization techniques is that developed by S. Kuru, J. Negro and  co-authors \cite{KN0, B, B1, CKNd}. The procedure is based on the well known factorization in creation-annihilation, or shift and ladder, operators, developed firstly for the harmonic oscillator. This technique is applied to the classical case also, obtaining the factorization of classical constants of motion in shift and ladder functions. As S. Kuru and J. Negro pointed out in a private communication, the basic ansatz of their factorization technique (the existence of particular ladder functions) implies the structure of an extended manifold for the configuration manifold of the system. By assuming this ansatz, we are able to show here how the classical characteristic constants of motion of  extended Hamiltonian systems  factorize in shift and ladder functions. Moreover, we show that, in some important cases, the extended Hamiltonian structure implies the validity of the ansatz, that  it is then no longer a restriction.  Most importantly, in analogy with these results we are able to build a general quantization procedure of the characteristic constant of motion,  for one of the three possible forms of extensions, in any finite dimension, the model being the shift-ladder factorization of above. The content of the article, Section by Section, is the following. 
In Section \ref{s2} we recall the basics of the theory of extended Hamiltonians; in Section \ref{s3} we show that the existence of certain ladder functions implies the factorization of the characteristic first integrals of the extended Hamiltonians
(Theorem \ref{GN}); in Section \ref{s4}, on the contrary, we show that a large class of extended Hamiltonians admit a factorized first integral, according to Theorem \ref{GN}. In Section \ref{s5}, we make a comparison between the Kuru-Negro factorization and the one proposed here. In Section \ref{s6}, we consider the quantum version of the procedure and we propose a quantized extension procedure, holding for a subclass of the extended Hamiltonians, based on the Kuru-Negro technique. Given a type of extended Hamiltonian, we obtain an extended quantum Hamiltonian and its factorized characteristic symmetry operator (Theorem \ref{teoq}), the construction of which is described in subsections \ref{6.1}-\ref{6.3}. 
	
\section{Extensions: definitions and relevant properties}\label{s2}
Let $L(q^i,p_i)$ be a Hamiltonian with $N$ degrees of freedom, that is defined on the cotangent bundle of an $N$-dimensional manifold.
We say that $L$ {\em admits extensions}, if there exists $(c,c_0)\in \mathbb R^2
- \{(0,0)\}$ such that there exists a non null solution $G(q^i,p_i)$ of
\begin{equation}\label{e1}
X_L^2(G)=-2(cL+c_0)G,
\end{equation}
where $X_L$ is the Hamiltonian vector field of $L$.
If $L$ admits extensions, then, for any $\gamma(u)$ solution of the ODE
\begin{equation}\label{eqgam}
\gamma'+c\gamma^2+C=0,
\end{equation}
depending on the arbitrary constant parameter 
$C$,	
we say that any Hamiltonian 
$H(u,q^i,p_u,p_i)$ with $N+1$ degree of
freedom of the form
\begin{equation}\label{Hest}
H=\frac{1}{2} p_u^2-k^2\gamma'L+ k^2c_0\gamma^2
+\frac{\Omega}{\gamma^2},
\qquad k\in \mathbb{Q}-\{0\}, \   \Omega\in\mathbb{R}
\end{equation}
is an {\em  extension of $L$}.

Extensions of Hamiltonians where introduced
in \cite{CDRfi} and studied because they admit polynomial in the momenta first integrals generated via a recursive algorithm. Moreover, the degree of the first integrals is related with the (choice of) numerator and denominator of the rational parameter $k=m/n$, $m,n\in \mathbb{N}^*$.
Indeed, for any $m,n\in \mathbb N-\{0\}$, 
let us consider the operator
\begin{equation}\label{Umn}
U_{m,n}=p_u+\frac m{n^2}\gamma X_L
\end{equation}

\begin{prop}\cite{CDRraz}
For $\Omega=0$,
the Hamiltonian (\ref{Hest}) 
is in involution with the function
	\begin{equation}\label{mn_int}
	K_{m,n}=U_{m,n}^m(G_n)=\left(p_u+\frac{m}{n^2} \gamma(u)   X_L\right)^m(G_n)
	\end{equation}
	where $m/n=k$ and $G_n$ is the $n$-th term of the recursion
	\begin{equation}\label{rec}
	G_1=G, \qquad G_{n+1}=X_L(G)\,G_n+\frac{1}{n}G\,X_L(G_n), 
	\end{equation}
starting from any solution $G$ of (\ref{e1}).
\end{prop}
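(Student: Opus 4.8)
The plan is to verify directly that $\{H,K_{m,n}\}=0$, with $X_L$ acting on functions of $(q^i,p_i)$ as the derivation $\{\,\cdot\,,L\}$. Since $\Omega=0$ and $\gamma'=-c\gamma^2-C$ by (\ref{eqgam}), I would first rewrite the extension (\ref{Hest}) as $H=\tfrac12p_u^2+k^2\kappa\gamma^2+k^2CL$, where $\kappa:=cL+c_0$. In this form $\{H,L\}=k^2\gamma^2c\{L,L\}=0$, hence $\{H,\kappa\}=0$ and more generally $\{H,\mu\}=0$ for any function $\mu$ of $\kappa$ alone; moreover, for any $F=F(q^i,p_i)$ one gets $\{H,F\}=k^2(c\gamma^2+C)\{L,F\}=k^2\gamma'X_L(F)$, and $\{H,p_u\}=\partial_uH=2k^2\kappa\gamma\gamma'$, $\{H,\gamma\}=-p_u\gamma'$.

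Next I would record two structural facts. The first is $X_L^2(G_n)=-2n^2\kappa\,G_n$: this is (\ref{e1}) for $n=1$, and applying $X_L^2$ to the recursion (\ref{rec}) and using $X_L^2(G)=-2\kappa G$, $X_L(\kappa)=0$ and the inductive hypothesis, one checks that in $X_L^2(G_{n+1})$ the coefficients of $X_L(G)\,G_n$ and of $G\,X_L(G_n)$ are exactly $-2(n+1)^2\kappa$ times the coefficients of those two terms in $G_{n+1}$. The second is that multiplication by $p_u$, which acts only on $(u,p_u)$, commutes with the derivation $\tfrac{m}{n^2}\gamma X_L$, which acts only on $(q^i,p_i)$; hence the binomial theorem gives
\begin{equation*}
K_{m,n}=U_{m,n}^m(G_n)=\sum_{j=0}^m\binom mj\Big(\tfrac{m}{n^2}\Big)^j\gamma^j\,p_u^{m-j}\,X_L^j(G_n),
\end{equation*}
and replacing $X_L^{2l}(G_n)$ by $(-2n^2\kappa)^lG_n$ and $X_L^{2l+1}(G_n)$ by $(-2n^2\kappa)^lX_L(G_n)$ and separating even and odd $j$ collapses this to
\begin{equation*}
K_{m,n}=\tfrac12\,G_n\,(V_+^m+V_-^m)+\tfrac1{2\mu}\,X_L(G_n)\,(V_+^m-V_-^m),\qquad V_\pm:=p_u\pm\tfrac{m}{n^2}\gamma\mu,\quad\mu^2:=-2n^2\kappa,
\end{equation*}
in which $\mu$ occurs only through $V_+^m\pm V_-^m$, i.e. polynomially in $p_u$ and $\mu^2$.

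Then I would compute the bracket. From the first paragraph and the first structural fact, $\{H,G_n\}=k^2\gamma'X_L(G_n)$ and $\{H,X_L(G_n)\}=k^2\gamma'X_L^2(G_n)=k^2\mu^2\gamma'\,G_n$, while $\{H,\mu\}=0$; and using $\{H,p_u\}$, $\{H,\gamma\}$ above together with $\mu^2=-2n^2\kappa$ one finds $\{H,V_\pm\}=\mp\gamma'\tfrac{m}{n^2}\mu\,V_\pm$, so $\{H,V_\pm^m\}=\mp m\gamma'\tfrac{m}{n^2}\mu\,V_\pm^m$. Substituting into the two-term expression for $K_{m,n}$ and using Leibniz, $\{H,K_{m,n}\}$ becomes a combination of $G_n(V_+^m-V_-^m)$ and $X_L(G_n)(V_+^m+V_-^m)$ with respective coefficients $\tfrac12\gamma'\mu\big(k^2-\tfrac{m^2}{n^2}\big)$ and $\tfrac12\gamma'\big(k^2-\tfrac{m^2}{n^2}\big)$; both vanish because $k^2=m^2/n^2$, so $\{H,K_{m,n}\}=0$.

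I expect the genuine work to be the induction for $X_L^2(G_n)=-2n^2\kappa\,G_n$; the rest is bookkeeping. The only delicate point is that introducing $\mu=\sqrt{-2n^2\kappa}$ (equivalently, dividing by $\kappa$ in passing to the two-term form) is a priori valid only where $\kappa\neq0$, but since $\{H,K_{m,n}\}$ is polynomial this is enough. One can also avoid $\mu$ altogether: substitute the binomial formula for $K_{m,n}$ into the identity $\{H,K_{m,n}\}=-p_u\,\partial_uK_{m,n}+k^2\gamma'X_L(K_{m,n})+2k^2\kappa\gamma\gamma'\,\partial_{p_u}K_{m,n}$ (a direct consequence of the form of $H$), reduce every $X_L^j(G_n)$ by the first structural fact, and check that the coefficient of each monomial $p_u^s\,X_L^t(G_n)$ vanishes by $j\binom mj=m\binom{m-1}{j-1}$, $(m-j)\binom mj=m\binom{m-1}{j}$ and Pascal's rule.
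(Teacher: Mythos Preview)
The paper does not prove this proposition here; it is quoted from \cite{CDRraz}, and only the auxiliary fact $X_L^2G_n=-2n^2(cL+c_0)G_n$ is recorded (Remark~1). Your direct argument is correct and self-contained: the induction establishing $X_L^2G_n=-2n^2\kappa G_n$, the binomial expansion of $U_{m,n}^m$ (legitimate because multiplication by $p_u$ commutes with $\gamma X_L$), and the diagonalizing substitution $V_\pm=p_u\pm\tfrac{m}{n^2}\gamma\mu$ with $\mu^2=-2n^2\kappa$ combine so that $\{H,K_{m,n}\}$ reduces to two terms with common factor $k^2-m^2/n^2=0$. Your remark that the intermediate appearance of $\mu$ is harmless because $K_{m,n}$ and its bracket with $H$ are polynomial in the momenta is the right way to close the argument.

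One small caveat on conventions: you set $X_L(F)=\{F,L\}$ with the standard sign $\{q,p\}=1$, whereas the paper uses $X_L(F)=\{L,F\}$ together with $\{\theta,p_\theta\}=-1$ (see Example~2 and Proposition~\ref{p8}). After both sign choices are unwound the two $X_L$'s coincide, so your computations are consistent with the paper's objects; but a reader comparing your $\{H,F\}=k^2\gamma'X_L(F)$ with the paper's $X_H\phi=-k^2\gamma'X_L\phi$ in the proof of Theorem~\ref{GN} should be warned that the apparent sign discrepancy is purely notational.
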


\begin{rmk} \rm The  functions $G_n$ satisfy
	$$
	X_L^2G_n=-2n^2 (cL+c_0)G_n.
	$$
\end{rmk}

For $\Omega\neq 0$, the recursive construction of a first integral is more complicated:  we consider the following function, depending on two non-zero integers
$2s,r$ (the first one has to be even)
	\begin{equation}
	\bar K_{2s,r}=\left(U_{2s,r}^2+2\Omega \gamma^{-2}\right)^s(G_r).\label{ee2}
	\end{equation}
where 
the operator $U^2_{2s,r}$ is defined {according to
	(\ref{Umn})} as
$$
U^2_{2s,r}=\left( p_u+\frac{2s}{r^2} \gamma(u)   X_L
\right)^2,
$$
and 
$G_r$ is, as in (\ref{mn_int}),  the $r$-th term of the recursion (\ref{rec}).  
with $G_1=G$ solution of (\ref{e1}).
For $\Omega=0$ the functions (\ref{ee2})
reduces to (\ref{mn_int}) and thus can be computed 
also when the first of the indices is odd. 

We have 

\begin{teo}\cite{TTWcdr}
For any $\Omega\in \mathbb{R}$,
the Hamiltonian (\ref{Hest}) with $k=m/n$ 
satisfies
	for  $m=2s$, 
		\begin{equation}\label{cp}
		\{H,\bar K_{m,n}\}=0,
		\end{equation}
for $m=2s+1$,
		\begin{equation}
		\label{cd}
		\{H ,\bar K_{2m,2n}\}=0.
		\end{equation}
\end{teo}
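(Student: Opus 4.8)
The plan is to reduce the claim to the $\Omega = 0$ Proposition by a substitution that absorbs the $\Omega/\gamma^2$ term. The key observation is that the operator $U_{m,n}^2 + 2\Omega\gamma^{-2}$ appearing in $\bar K_{2s,r}$ is engineered so that its $s$-th power, applied to $G_r$, behaves like an $m$-th power of a first-order operator once one tracks how $U_{m,n}$ and multiplication by $\gamma^{-2}$ interact. So first I would compute the basic commutation relations: $\{H, U_{m,n}\}$, $\{H,\gamma^{-2}\}$, and $X_L$ acting on $\gamma$-dependent quantities, using $X_L(\gamma)=0$ (since $\gamma$ depends only on $u$) and the defining ODE $\gamma' + c\gamma^2 + C = 0$. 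The Remark that $X_L^2 G_n = -2n^2(cL+c_0)G_n$ is exactly what is needed to close these computations, because $U_{m,n}^2$ produces a term $\tfrac{m^2}{n^4}\gamma^2 X_L^2$ which, on $G_n$, collapses to a multiple of $(cL+c_0)G_n$, and $cL+c_0$ is essentially $H$ modulo $\gamma$-terms by \eqref{Hest}.

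The heart of the argument is a Poisson-bracket identity of the shape
\begin{equation}\label{pbid}
\{H, U_{2s,r}^2 + 2\Omega\gamma^{-2}\} = (U_{2s,r}^2 + 2\Omega\gamma^{-2})\,A + B
\end{equation}
for suitable functions (or operators) $A,B$ built from $\gamma,\gamma',p_u$, so that $H$ "commutes up to conjugation" with the bracketed operator. Then one would argue by induction on $s$: assuming $\{H,(U_{2s,r}^2+2\Omega\gamma^{-2})^{s-1}(\,\cdot\,)\}$ has the right form on $G_r$, one pushes $\{H,-\}$ through one more factor using \eqref{pbid} and the recursion \eqref{rec} governing $G_r$, together with the fact that $X_L^2 G_r = -2r^2(cL+c_0)G_r$ keeps the "internal" $X_L$-degree bounded. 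The parity hypotheses enter precisely here: when $m=2s$ is even, $r=n$ works directly because $\bar K_{2s,n}$ is genuinely an $s$-th power of a second-order object and the $\Omega$-correction is compatible; when $m=2s+1$ is odd, squaring the index to $\bar K_{2m,2n}=\bar K_{2(2s+1),2n}$ makes the first index even ($2m=2(2s+1)$), which is what allows the same second-order bootstrap to apply — the odd case simply does not admit a square-root-free version of the operator, so one works with its square.

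**The main obstacle** I expect is bookkeeping the non-commuting pieces in \eqref{pbid}: $U_{m,n}$ involves $p_u$ and $\gamma X_L$, and $\gamma^{-2}$ involves $u$, so $\{p_u,\gamma^{-2}\}\neq 0$ and the cross terms proliferate when one expands $(U_{2s,r}^2+2\Omega\gamma^{-2})^s$. The clean way to handle this is not to expand the power but to establish the single-step identity \eqref{pbid} in closed form and then propagate it, checking that the "error" term $B$ either vanishes on $G_r$ or is reabsorbed at the next step — this is where one uses $\gamma'+c\gamma^2+C=0$ to convert $\gamma'$ into $\gamma^2$ and constants, matching the $k^2 c_0\gamma^2$ and $\Omega\gamma^{-2}$ terms in $H$. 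A secondary subtlety is verifying that when $\Omega=0$ the formula \eqref{ee2} really does reduce to \eqref{mn_int} with the correct index identification ($2s\leftrightarrow m$, $r\leftrightarrow n$), so that the present theorem is a strict generalization of the cited Proposition rather than a separate statement; this reduction also serves as a consistency check on the sign and coefficient conventions in \eqref{pbid}.
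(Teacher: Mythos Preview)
The paper does not contain a proof of this theorem: it is stated with the citation \cite{TTWcdr} and no argument is given here. There is therefore nothing in the present text to compare your proposal against; the actual proof lives in the referenced article ``The Tremblay--Turbiner--Winternitz system as extended Hamiltonian''.

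That said, a comment on your sketch. The strategic shape --- establish a single-step commutation identity between $H$ and the building block $U_{2s,r}^2+2\Omega\gamma^{-2}$ acting on suitable functions, then iterate --- is the natural route and is indeed how the cited proof proceeds. But your equation \eqref{pbid} as written is ill-posed: $U_{2s,r}^2+2\Omega\gamma^{-2}$ is a differential operator (it contains $X_L$), not a phase-space function, so $\{H,\,U_{2s,r}^2+2\Omega\gamma^{-2}\}$ has no meaning. What one actually needs is an identity of the form
\[
X_H\bigl[(U_{2s,r}^2+2\Omega\gamma^{-2})F\bigr]
=(U_{2s,r}^2+2\Omega\gamma^{-2})\bigl[X_H F\bigr]+(\text{correction})\cdot F,
\]
valid for arbitrary $F$, and then to show that the accumulated correction terms vanish precisely when $F=G_r$ thanks to $X_L^2 G_r=-2r^2(cL+c_0)G_r$. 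Your discussion of the parity split is correct in spirit: for odd $m$ no real square root of the second-order block exists, so one doubles the indices to $(2m,2n)$ to recover an even exponent. What is missing from your proposal is the concrete form of the correction term and the verification that the $\Omega\gamma^{-2}$ contribution in the block exactly cancels the $\Omega\gamma^{-2}$ contribution coming from $H$ when one computes $X_H(p_u)$; without that computation the argument remains a heuristic.
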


We call $K$ and $\bar{K}$, of the form (\ref{mn_int}) and
(\ref{ee2}) respectively, \emph{ characteristic first integrals}  of the corresponding extensions

\begin{rmk}
\rm
In \cite{CDRgen} it is proven that the ODE (\ref{eqgam}) defining $\gamma$ is a necessary condition in order to get a characteristic first integral of the form (\ref{mn_int}) or
(\ref{ee2}). 
According to the value of $c$, the explicit form of $\gamma(u)$ is given (up to constant translations of $u$) by
\begin{equation}\label{fgam}
\gamma= \left\{ 
\begin{array}{lc}
-C u & c=0 \\
\frac{1}{T_\kappa (cu)}=\frac{C_\kappa (cu) }{S_\kappa (cu)}
& c\neq 0
\end{array}
\right.
\end{equation}
where $\kappa=C/c$ is the ratio of the constant parameters appearing in (\ref{eqgam}) and $T_\kappa$, $S_\kappa$ and $C_\kappa$  are the trigonometric tagged functions
 \cite{7ter}
		$$
		S_\kappa(x)=\left\{\begin{array}{ll}
		\frac{\sin\sqrt{\kappa}x}{\sqrt{\kappa}} & \kappa>0 \\
		x & \kappa=0 \\
		\frac{\sinh\sqrt{|\kappa|}x}{\sqrt{|\kappa|}} & \kappa<0
		\end{array}\right.
		\qquad
		C_\kappa(x)=\left\{\begin{array}{ll}
		\cos\sqrt{\kappa}x & \kappa>0 \\
		1 & \kappa=0 \\
		\cosh\sqrt{|\kappa|}x & \kappa<0
		\end{array}\right.,
		$$
		$$
		T_\kappa(x)=\frac {S_\kappa(x)}{C_\kappa(x)},
		$$
(see also \cite{CDRraz} for a summary of their properties).
\end{rmk}	

It is proved in \cite{CDRfi,TTWcdr} that the  characteristic first integrals $K$ or $\bar K$  are functionally independent from $H$, $L$, and from any 
first integral $I(p_i,q^i)$ of $L$.
This means that the extensions of (maximally) superintegrable Hamiltonians	are 
(maximally) superintegrable Hamiltonians with one additional degree of freedom.

\begin{rmk} \rm
The equation (\ref{e1}) is the fundamental condition for the  extension procedure. When $L$ is a $N$-dimensional natural Hamiltonian with metric tensor $\mathbf g=(g_{ij})$ and $G$ does not depend on the momenta, (\ref{e1}) splits into several equations, the coefficients of monomials in the momenta, whose leading term is the Hessian equation
\begin{equation}\label{hesseq}
\nabla_i\nabla_j G+cGg_{ij}=0,\quad i,j = 1,\ldots ,N.
\end{equation}
This equation is of paramount importance in theory of warped manifolds, see for example \cite{Ta},
where it is 
proved, for manifolds of positive definite signature, the strict connection between the constant $c$ and the curvature of the metric $g$ (Theorem 2, where it is also proved that, if the metric is geodesically complete and $c\neq0$, then $g$ is of constant curvature $c$) and the warped structure of $g$ whenever the equation (\ref{hesseq}) admits solutions $G(q^k)$ (Lemma 1.2 of \cite{Ta}).

 However, by relaxing the hypothesis of the
 these statements (for example allowing the non geodesic completeness of the manifold in the case of $c\neq 0$), solutions of (\ref{hesseq}) can be found also in non-constant curvature manifolds, \cite{CDRmlbq}.
Moreover, the structure of the extensions
$H$ of natural Hamiltonians shows that the extended metric is again a warped metric.
\end{rmk}

We want to study other properties of
extensible and extended natural Hamiltonians, related 
with the existence of ladder and shift functions as well as ladder and shift operators generalizing what done by Kuru Negro as instance in \cite{CKNd}.  
These properties seem to be strictly connected with
separability of the dynamics of the additional degree of freedom, parametrized by the
variable $u$, but they appear to be independent
of the separability within $L$.

\section{Ladder functions and factorized first integrals of extended  Hamiltonians  }\label{s3}

Let $H$ be a Hamiltonian of the form
\begin{equation}\label{Hfor}
H=\frac 12 p_u^2 + \alpha(u) L(p_i,q^i) +\beta (u),
\end{equation}
and let $X_H$, $X_L$ be the Hamiltonian vector fields of $H$ and $L$,
respectively.

\begin{defi}  	 
	The	function $F(p_i,q^j)$ is a \emph{ladder function} of $L$ if there exists a function $f(L)$ such that
	$$
	X_L F=f(L) F.
	$$
\end{defi}

\begin{prop} \label{XL2} A ladder function 
	$F$ of the Hamiltonian $L$  verifies
	(\ref{e1}), i.e.,
	\begin{equation}\label{f00}
	X_L^2F=-2(cL+c_0)F,
	\end{equation}
	if and only if 
	\begin{equation}\label{f0}
	f(L)=\pm \sqrt{-2(cL+c_0)}.
	\end{equation}
\end{prop}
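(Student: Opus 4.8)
The plan is to prove Proposition \ref{XL2} by direct computation, using only the defining property of a ladder function. Since $F$ is a ladder function of $L$, we have $X_L F = f(L) F$ for some function $f$ of $L$ alone. The strategy is to apply $X_L$ once more to this identity and compare the result with the right-hand side of (\ref{f00}).

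First I would compute $X_L^2 F = X_L\bigl(f(L)F\bigr)$. By the Leibniz rule for the derivation $X_L$, this equals $X_L\bigl(f(L)\bigr)\,F + f(L)\,X_L F$. Now $X_L\bigl(f(L)\bigr) = f'(L)\,X_L L = 0$, because $L$ is a first integral of its own Hamiltonian vector field, $X_L L = \{L,L\} = 0$. Hence the first term vanishes, and substituting $X_L F = f(L)F$ into the second term gives $X_L^2 F = f(L)^2 F$. Therefore $F$ satisfies (\ref{f00}) if and only if $f(L)^2 F = -2(cL+c_0)F$; since $F$ is not identically zero, this is equivalent to $f(L)^2 = -2(cL+c_0)$ holding (at least on the support of $F$, hence as an identity in $L$ by analyticity/continuity), which is precisely (\ref{f0}) after taking square roots, the sign being free. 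This establishes both implications at once.

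There is essentially no obstacle here; the only point requiring a word of care is the step $X_L(f(L)) = 0$, which relies on $f$ being a function of $L$ only — this is exactly what the definition of ladder function guarantees, and it is the reason the hypothesis is stated in that form rather than allowing $f$ to depend on the other variables. A secondary subtlety is passing from the pointwise identity $f(L)^2F = -2(cL+c_0)F$ to the functional identity $f(L)^2 = -2(cL+c_0)$: this is immediate wherever $F \neq 0$, and one may note that since both sides are functions of $L$ alone, equality on any open set where $F$ does not vanish forces equality identically. I would present the computation in the forward direction to get $X_L^2F = f(L)^2F$, then read off the equivalence (\ref{f00}) $\Leftrightarrow$ (\ref{f0}) in one line.
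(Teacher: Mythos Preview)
Your proof is correct and is exactly the natural argument; the paper in fact states Proposition~\ref{XL2} without proof, presumably because this computation is immediate. Your handling of the two minor subtleties ($X_L(f(L))=0$ and dividing by $F$) is appropriate.
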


Hence, 

\begin{teo} If $L$ admits a ladder function $G$
with factor $f(L)$ given by (\ref{f0}),
then $L$ admits extensions of the form (\ref{Hest}). Moreover, for  
$$
\alpha=-k^2\gamma', \quad \beta=k^2c_0\gamma^2
+\frac{\Omega}{\gamma^2},
$$
the Hamiltonian (\ref{Hfor}) is an extension of $L$, and the ladder function $G$ generates the first integrals 
(\ref{mn_int}), (\ref{ee2}).
\end{teo}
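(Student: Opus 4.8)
The plan is to verify the two assertions in turn, relying on the propositions already stated. First, since $G$ is a ladder function of $L$ with factor $f(L)=\pm\sqrt{-2(cL+c_0)}$, Proposition \ref{XL2} tells us immediately that $G$ satisfies $X_L^2 G=-2(cL+c_0)G$, which is precisely equation (\ref{e1}) with $G$ in place of the generic solution. Hence, by definition, $L$ admits extensions for this choice of $(c,c_0)$ (we must only note that $(c,c_0)\neq(0,0)$ is forced, since otherwise $f(L)\equiv 0$ and $G$ would not be a genuine ladder function with a non-trivial relation — or, if one allows $c=c_0=0$, then $G$ is a constant of motion and the extension construction degenerates; I would state the nondegeneracy hypothesis explicitly or absorb it into the definition of ladder function). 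Therefore all the machinery of Section \ref{s2} applies.

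Next I would show that with $\alpha=-k^2\gamma'$ and $\beta=k^2c_0\gamma^2+\Omega\gamma^{-2}$, the Hamiltonian (\ref{Hfor}) coincides with an extension (\ref{Hest}). This is a direct substitution: (\ref{Hfor}) becomes
$$
H=\tfrac12 p_u^2-k^2\gamma' L+k^2c_0\gamma^2+\frac{\Omega}{\gamma^2},
$$
which is exactly (\ref{Hest}), provided $\gamma$ solves the ODE (\ref{eqgam}) for some choice of the parameter $C$ and $k\in\mathbb{Q}-\{0\}$. Since (\ref{e1}) holds for $G$, the hypotheses of the Proposition following (\ref{Umn}) and of the Theorem following (\ref{cd}) are met verbatim; consequently the functions $K_{m,n}=U_{m,n}^m(G_n)$ of (\ref{mn_int}) and $\bar K_{2s,r}$ of (\ref{ee2}) — built from the recursion (\ref{rec}) starting at $G_1=G$ — are in involution with $H$. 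This gives the final claim that $G$ generates the first integrals (\ref{mn_int}) and (\ref{ee2}).

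The only genuine content beyond bookkeeping is checking that the class of $\alpha,\beta$ arising from ladder functions is non-empty and compatible with the constraint that $\gamma$ satisfies (\ref{eqgam}): one needs $\alpha(u)=-k^2\gamma'(u)$ with $\gamma$ of the tagged-trigonometric form (\ref{fgam}), so not every $\alpha$ in (\ref{Hfor}) qualifies — the statement should be read as: for $\alpha,\beta$ of this specific shape, (\ref{Hfor}) is an extension. I expect the main (mild) obstacle to be precisely this matching of normalizations — ensuring the constants $c$, $c_0$ read off from the ladder factor $f(L)$ in (\ref{f0}) are the same constants that enter $\beta=k^2c_0\gamma^2+\Omega/\gamma^2$ and the ODE (\ref{eqgam}) through $C$ — together with the harmless subtlety that (\ref{ee2}) with an odd first index requires passing to the doubled indices $\bar K_{2m,2n}$, exactly as in the Theorem following (\ref{cd}). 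Once these identifications are made explicit, the proof is an immediate concatenation of Proposition \ref{XL2}, the Proposition after (\ref{Umn}), and the Theorem after (\ref{cd}).
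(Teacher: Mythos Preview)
Your proposal is correct and matches the paper's approach: the paper in fact does not give a detailed proof of this theorem, but simply prefaces it with ``Hence,'' indicating that it is an immediate corollary of Proposition~\ref{XL2} together with the definitions and results of Section~\ref{s2}. Your write-up spells out exactly this chain of implications (Proposition~\ref{XL2} gives (\ref{e1}), hence $L$ admits extensions; the specific $\alpha,\beta$ recover (\ref{Hest}); then Proposition~1 and Theorem~2 yield the first integrals), and your remarks on the nondegeneracy of $(c,c_0)$ and the doubling of indices for odd $m$ are appropriate caveats.
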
 

The existence of a ladder function of $L$ allows the factorization of the characteristic first integrals $K$ or $\bar K$ of any extension of $L$.

\begin{lem}\label{L1}
	If $X_L G=f(L) G$, then, for any $n\in \mathbb N-\{0\}$, 
	$$X_L G_n= nfG_n,$$ 
	$$
	G_n=(2f)^{n-1}G^n,
	$$
	hence, both $G^n$ and $G_n$ are additional ladder functions of $L$.
\end{lem}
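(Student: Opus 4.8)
The plan is to prove both identities simultaneously by induction on $n$, using the recursion (\ref{rec}) that defines $G_n$. First I would establish the base case $n=1$: here $G_1 = G$ by definition, and $(2f)^{0}G^1 = G$, so the second identity holds trivially; the first identity $X_L G_1 = f G_1$ is precisely the hypothesis $X_L G = f(L)G$. The key preliminary observation I would record is that $X_L$ annihilates any function of $L$ alone — in particular $X_L(f(L)) = 0$, since $\{L,L\}=0$ — so $f$ behaves as a constant under repeated application of $X_L$. This is what makes the Leibniz bookkeeping clean.

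For the inductive step, assume $X_L G_n = n f G_n$ and $G_n = (2f)^{n-1}G^n$. I would first verify $X_L G_{n+1} = (n+1) f G_{n+1}$ directly from (\ref{rec}): applying $X_L$ to $G_{n+1} = X_L(G)\,G_n + \tfrac1n G\,X_L(G_n)$ and using $X_L^2 G = -2(cL+c_0)G = f^2 G$ (which follows from Proposition \ref{XL2}, since by (\ref{f0}) we have $f^2 = -2(cL+c_0)$), together with $X_L G = fG$, $X_L G_n = nf G_n$, and $X_L^2 G_n = (X_L)(nfG_n) = nf\cdot X_L G_n = n^2 f^2 G_n$ (the Remark after the first Proposition, reproved here from the inductive hypothesis), one collects terms. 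Explicitly $X_L G_{n+1} = f^2 G\, G_n + X_L(G)\,X_L(G_n) + \tfrac1n X_L(G)\,X_L(G_n) + \tfrac1n G\, X_L^2(G_n)$, which becomes $f^2 G G_n + n f^2 G G_n + f^2 G G_n + n f^2 G G_n = \ldots$ — after substituting the closed form $G_n=(2f)^{n-1}G^n$ this should reduce to $(n+1) f \cdot G_{n+1}$; I would keep track of the numerical coefficients carefully since that is where an arithmetic slip would hide.

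Then I would derive the closed form $G_{n+1} = (2f)^{n}G^{n+1}$: substituting $X_L G = fG$ and the inductive hypothesis $G_n = (2f)^{n-1}G^n$ into (\ref{rec}) gives
$$
G_{n+1} = fG\cdot(2f)^{n-1}G^n + \frac1n G\cdot X_L\big((2f)^{n-1}G^n\big).
$$
Since $f$ is constant under $X_L$, the second term is $\tfrac1n G\cdot(2f)^{n-1}\cdot X_L(G^n) = \tfrac1n G\cdot(2f)^{n-1}\cdot n G^{n-1}X_L G = \tfrac1n G\cdot(2f)^{n-1}\cdot n G^{n-1}\cdot fG = f(2f)^{n-1}G^{n+1}$, so $G_{n+1} = f(2f)^{n-1}G^{n+1} + f(2f)^{n-1}G^{n+1} = 2f(2f)^{n-1}G^{n+1} = (2f)^{n}G^{n+1}$, as desired. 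Finally, once both formulas are in hand, the statement that $G^n$ and $G_n$ are themselves ladder functions of $L$ is immediate: $X_L G_n = nf G_n$ exhibits $G_n$ as a ladder function with factor $n f(L)$, and $X_L(G^n) = n G^{n-1} X_L G = n f G^n$ exhibits $G^n$ as a ladder function with factor $n f(L)$ as well (consistent with the proportionality $G_n = (2f)^{n-1}G^n$). I expect the only real obstacle to be the coefficient arithmetic in the inductive step for $X_L G_{n+1}$; everything else is a direct application of the Leibniz rule together with the two facts $X_L G = fG$ and $X_L^2 G = f^2 G$.
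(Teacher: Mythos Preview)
Your proposal is correct and follows the same inductive strategy as the paper. The only difference is the order of operations in the inductive step: the paper first substitutes $X_LG=fG$ and $X_LG_n=nfG_n$ into the recursion to obtain the compact form $G_{n+1}=2fGG_n=(2f)^nG^{n+1}$, and \emph{then} applies $X_L$ to this product (using Leibniz and $X_Lf=0$) to get $X_LG_{n+1}=2(n+1)f^2GG_n=(n+1)fG_{n+1}$ in one line; you instead apply $X_L$ directly to the unsimplified recursion, which produces the four-term sum $f^2GG_n+nf^2GG_n+f^2GG_n+nf^2GG_n=2(n+1)f^2GG_n$ and requires the closed form only afterwards to recognize this as $(n+1)fG_{n+1}$. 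Both routes are valid, but the paper's ordering avoids the coefficient bookkeeping you flagged as the likely trouble spot.
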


\begin{proof} Recall that
	$$
	G_1=G, \quad G_{n+1}=X_LG G_n+\frac 1n G X_LG_n,
	$$
	then, $G_2=2GX_LG=	2 f(L)G^2$, and
	$$
	X_L G_2=2X_LGX_LG+2GX_L^2G=4G^2f^2= 2f(2G^2f)= 2fG_2.
	$$
	By induction on $n$, we assume that $X_L G_n=nfG_n$ and $G_n=(2f)^{n-1}G^n$, then
	$$G_{n+1}=(fG)G_n+\frac{1}{n}(nfG_n)G=(2fG)G_n=(2f)^n G^{n+1},$$
	$$
	X_LG_{n+1}=X_L( 2fGG_n)=2 (n+1)f^2 GG_n=(n+1)fG_{n+1}.
	$$	
\end{proof}

\begin{teo}\label{GN} 
 Let $G$ be a ladder function of $L$ with factor $f(L)$ given by (\ref{f0}), and
 $H$ be an extension (\ref{Hest}) of $L$ admitting first integrals (\ref{mn_int}) (for $\Omega=0$) or 
 (\ref{ee2}) (for $\Omega\neq 0$). Then,

	\begin{enumerate}
		\item for any  $(m,n)$, the functions 
		(\ref{mn_int}) factorize as
				$$K_{m,n}= (2f)^{n-1}G^nF,$$
		where 
		$$F=\left(p_u+\frac mn\gamma f \right)^m.$$
		\item for any $(s,r)$, the functions 
				(\ref{ee2}) factorize as
		$$\bar K_{2s,r}= (2f)^{r-1}G^rF,$$
		where 
		\begin{align}
		F&=\left[ \left(p_u+\frac{2s}r\gamma f\right)^2+2\Omega \gamma ^{-2}\right]^s \nonumber\\
		&=\left(p_u+\frac{2s}r \gamma f+i\frac {\sqrt{2\Omega}}\gamma  \right)^s \left(p_u+\frac{2s}r \gamma f-i\frac {\sqrt{2\Omega}}\gamma  \right)^s.
		\end{align}
	\end{enumerate}
In both cases, $F$ satisfies the relation
$X_H F=nk^2\gamma'f F.$
\end{teo}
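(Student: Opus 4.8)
The plan is to prove the two factorization claims in parallel, since both rest on the identity $G_n=(2f)^{n-1}G^n$ from Lemma \ref{L1} together with the observation that $G^n$ (hence $G_r$, hence $G_n$) is itself a ladder function: $X_LG_n=nfG_n$. The key structural point is that in the operator $U_{m,n}=p_u+\frac{m}{n^2}\gamma X_L$, when applied to the ladder function $G_n$, the vector field $X_L$ acts simply as multiplication by $nf$. So on the relevant invariant subspace one can replace $X_L$ by the scalar $nf$, i.e. $U_{m,n}$ acts on $G_n$ (and on $\gamma$-dependent prefactors, since $X_L$ annihilates functions of $u$ alone) as $p_u+\frac{m}{n}\gamma f$. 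This is the heart of the argument.

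First I would treat the polynomial case (\ref{mn_int}). Write $K_{m,n}=U_{m,n}^m(G_n)$ and substitute $G_n=(2f)^{n-1}G^n$. Since $f=f(L)$ is a function of $L$ and hence a ladder-type object with $X_L(f)= f'(L)X_L L=0$... more carefully, $f(L)$ Poisson-commutes with $L$, so $X_L$ applied to any product of $G_n$ with functions of $L$ and of $u$ can be tracked by the single rule $X_L G_n=nfG_n$. One then shows by induction on $j$ that $U_{m,n}^j(G_n)=\left(p_u+\frac{m}{n}\gamma f\right)^j G_n$: the base case is $U_{m,n}(G_n)=p_u(G_n)+\frac{m}{n^2}\gamma X_L(G_n)=\frac{m}{n^2}\gamma\cdot nf\,G_n=\frac{m}{n}\gamma f\,G_n$ (using that $G_n$ does not depend on $u$, so $p_u(G_n)=0$), and the inductive step uses that $p_u$ and $X_L$ both act as derivations and that $\frac{m}{n}\gamma f$, being a function of $u$ and $L$ only, is a ladder-stable multiplier whose $X_L$-bracket with the accumulated factor vanishes while its $p_u$-derivative just adds the expected $\gamma' f$ terms that recombine correctly. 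Collecting, $K_{m,n}=(2f)^{n-1}G^n F$ with $F=\left(p_u+\frac{m}{n}\gamma f\right)^m$, which is the stated form.

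For the case $\Omega\neq0$, the function (\ref{ee2}) is $\bar K_{2s,r}=\left(U_{2s,r}^2+2\Omega\gamma^{-2}\right)^s(G_r)$. The same substitution $X_L\to rf$ on $G_r$ turns $U_{2s,r}^2$ acting on $G_r$ into multiplication by $\left(p_u+\frac{2s}{r}\gamma f\right)^2$ (again the inductive bookkeeping of $p_u$- and $X_L$-derivations), so the whole bracket becomes multiplication by $\left(p_u+\frac{2s}{r}\gamma f\right)^2+2\Omega\gamma^{-2}$, and its $s$-th power applied to $G_r=(2f)^{r-1}G^r$ gives $\bar K_{2s,r}=(2f)^{r-1}G^r F$ with $F$ as stated; the difference-of-squares factorization $A^2+B^2=(A+iB)(A-iB)$ with $A=p_u+\frac{2s}{r}\gamma f$, $B=\sqrt{2\Omega}/\gamma$ is immediate since $A$ and $B$ commute (both are functions of $u$, $p_u$, $L$ only). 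Finally, to get $X_HF=nk^2\gamma' fF$ I would compute directly: $X_H=p_u\partial_u - (\alpha'L+\beta')\partial_{p_u}+\alpha X_L$ with $\alpha=-k^2\gamma'$, and apply it to $F$, using once more that $X_L$ acts on the $L$-dependence of $F$ trivially (Poisson-commuting) while $p_u$ and the $\partial_{p_u}$-term act on the explicit $u$- and $p_u$-dependence; the ODE (\ref{eqgam}) for $\gamma$ should make the terms collapse to the claimed eigenvalue-type relation, with $k=m/n$ accounting for the factor $n$.

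The main obstacle I anticipate is the careful derivation bookkeeping in the inductive steps: $U_{m,n}$ is a first-order operator in which $p_u$ and $X_L$ are both derivations, so $U_{m,n}^j$ is \emph{not} simply $(p_u+\frac m{n^2}\gamma X_L)^j$ as a polynomial — cross terms arise from $p_u$ hitting $\gamma$ and from the non-commutation of $p_u$ with $\gamma$. One must check that, precisely because $G_n$ is a ladder function and $f$ depends only on $L$, all these cross terms reorganize so that the net effect on the subspace spanned by $G_n$ is exactly $\left(p_u+\frac mn\gamma f\right)^j$ acting by multiplication. Verifying this closure — equivalently, that $F$ as defined is genuinely $U_{m,n}$-"eigen" up to the scalar replacement — and then confirming the clean relation $X_HF=nk^2\gamma'fF$ via (\ref{eqgam}), is where the real work lies; everything else is substitution of Lemma \ref{L1}.
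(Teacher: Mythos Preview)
Your overall strategy --- invoke Lemma~\ref{L1} and replace $X_L$ by the scalar $nf$ when acting on $G_n$ --- is the same as the paper's. But there is a genuine confusion in your execution that creates both an error and a phantom obstacle: in $U_{m,n}=p_u+\frac{m}{n^2}\gamma X_L$ the symbol $p_u$ is \emph{multiplication} by the momentum coordinate, not a derivation. Your base case ``$p_u(G_n)=0$'' is therefore wrong; the correct reading is $U_{m,n}(G_n)=p_u\cdot G_n+\frac{m}{n^2}\gamma\,X_L G_n=(p_u+\tfrac{m}{n}\gamma f)G_n$, which already matches the inductive claim for $j=1$. For the inductive step, notice that $X_L$ annihilates $p_u$, $\gamma(u)$ and $f(L)$ (they Poisson-commute with $L$), so applying $U_{m,n}$ to any expression of the form $\Phi(u,p_u,L)\,G_n$ yields $\bigl(p_u\Phi+\tfrac{m}{n}\gamma f\,\Phi\bigr)G_n$. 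The ``cross terms from $p_u$ hitting $\gamma$'' that you flag as the main obstacle simply do not arise, because $p_u$ is not differentiating anything. With this corrected, the factorization is immediate, exactly as the paper records it in one line.

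For the relation $X_H F=nk^2\gamma' f F$, your plan is a direct computation, and it can be made to work. The paper instead argues indirectly and more cleanly: since $K_{m,n}=FG_n$ is a first integral of $H$, one has $0=X_H(FG_n)=(X_HF)G_n+F\,X_HG_n$; and because $G_n$ depends only on $(q^i,p_i)$, the structure of $H$ gives $X_HG_n=-k^2\gamma'\,X_LG_n=-k^2\gamma'\,nf\,G_n$, whence $X_HF=nk^2\gamma' f\,F$. This avoids any explicit use of the ODE (\ref{eqgam}) or term-by-term bookkeeping.
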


\begin{proof}  By  (\ref{mn_int}) and Lemma \ref{L1} we get
	$$
	K_{m,n}=\left(p_u+\frac m{n^2}\gamma n f \right)^mG_n,
	$$
	that, after simplification, gives the statement.
	Since $K_{m,n}$ is a first integral of $H$, we have
	$$
	0=X_H K_{m,n}=X_H(FG_n)=X_H F G_n+ F X_HG_n.
	$$
Then, because for any function $\phi(q^i,p_i)$ we have $X_H\phi=
	-k^2\gamma'X_L\phi$, we find that $F$ satisfies
	$$
	X_H F=-\frac F{G_n}X_HG_n=k^2\gamma' \frac F{G_n}X_LG_n =k^2 n\gamma'f F.
	$$
	The same reasoning applies to (\ref{ee2}).
\end{proof} 

In \cite{CKNd} the functions $F$ in Theorem \ref{GN} are called \emph{shift functions}.
Clearly, the factor $(2f)^{n-1}$, or $ (2f)^{r-1}$, in the factorization of the first integrals (\ref{ee2}) and (\ref{mn_int}) given in Theorem \ref{GN} is irrelevant and can be dropped down.
 Thus, we have that the first integrals of the extended Hamiltonian (\ref{Hest}) are product of shift and ladder functions.
This fact makes easier the computation of the  algebra of symmetries of $H$ \cite{CKNd}.

\begin{rmk} \rm Due to the expression of the function $f$, the characteristic first integrals  of Theorem \ref{GN}
are not polynomial in the momenta.
\end{rmk}

\section{Extended Hamiltonians and ladder functions }\label{s4}
 Let us assume that $L$ is a $N$-dimensional natural Hamiltonian 
 \begin{equation}\label{L}
 L=\frac 12 g^{ij}p_ip_j +V(q^i).
 \end{equation}
 In the previous section we have shown that 
 a function $G$ satisfying condition (\ref{e1}) for the Hamiltonian $L$ which is also a ladder function for $L$ (or equivalently 
  a ladder function of $L$ with special factor (\ref{f0}))
  allows
  the factorization of the characteristic first integrals $K_{m,n}$ of any extension $H$ of $L$.
  All the
extensions studied until now are based on the existence of solutions $G$ of (\ref{e1}) that are polynomial in the momenta.
 Consequently, the characteristic first integrals of extensions are also polynomial in the momenta. These $G$, however, are not ladder functions of $L$. In this section we show how to obtain ladder functions of $L$ from some of these
 solutions $G$ and, therefore, an alternative factorization of the first integrals is presented.  
Inspired by 	\cite{CKNd}, we search for ladder functions of the form
\begin{equation}\label{gpm1}
G^\pm=A_{\pm}^ip_i+ B_\pm f+ \frac 1f C_\pm,
\end{equation}
with $f=\sqrt{-2(cL+c_0)}$ and ${\mathbf A}_\pm$, $B_\pm$, $C_\pm$, $V$ depending on the $(q^i)$ only. If we are able to determine  ${\mathbf A}_\pm$, $B_\pm$, $C_\pm$, $V$ such that \begin{equation}\label{eqq}
X_LG^\pm=\pm fG^\pm,
\end{equation} 
then we obtain ladder functions $G^\pm$  of $L$ verifying the hypothesis of Theorem \ref{GN}.

\begin{prop}
If we assume $C_\pm=c_1$ constant, then a solution in form (\ref{gpm1}) of (\ref{eqq}) 
is given by
\begin{equation} 	\label{18}
G^\pm=\pm(\nabla G)^ip_i+ G f+ \frac 1f c_1,
\qquad f=\sqrt{-2(cL+c_0)},
\end{equation}
with the function $G$ and the constant  $c_1$ satisfying
\begin{align}
[{\mathbf g},\nabla G]+ c {\mathbf g}G=0,\label{HE1}\\
\nabla V \cdot \nabla G - 2(cV+c_0)G + c_1=0,\label{VE1}
\end{align}
where $[\cdot,\cdot]$ are the Schouten brackets.
\end{prop}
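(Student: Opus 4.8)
The plan is to substitute the ansatz (\ref{gpm1}) into the defining equation (\ref{eqq}) and then split the resulting identity on phase space according to its dependence on $f=\sqrt{-2(cL+c_0)}$ and its degree in the momenta $p_i$. Two elementary facts drive the computation: first, $f$ is a function of $L$, so $X_Lf=0$ and $X_L(1/f)=0$; second, $f^2=-2(cL+c_0)=-c\,g^{ij}p_ip_j-2(cV+c_0)$ is a homogeneous quadratic form in the momenta plus a function of the $q^i$. Using the first fact,
$$X_LG^\pm=X_L(A_\pm^ip_i)+f\,X_LB_\pm+\frac{1}{f}\,X_LC_\pm,$$
where $X_LB_\pm=(\nabla B_\pm)^ip_i$ and $X_LC_\pm=(\nabla C_\pm)^ip_i$ are linear in the momenta, while $X_L(A_\pm^ip_i)$ splits into a momentum-quadratic part governed by the Schouten bracket $[\mathbf g,\mathbf A_\pm]$ and a momentum-independent part $-\,\mathbf A_\pm\cdot\nabla V$. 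On the other side $\pm fG^\pm=\pm f\,A_\pm^ip_i\pm f^2B_\pm\pm C_\pm$, and expanding $f^2$ turns $\pm f^2B_\pm$ into a momentum-quadratic term proportional to $c\,B_\pm\,\mathbf g$ plus the momentum-independent term $\mp2(cV+c_0)B_\pm$.

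Next I equate the two sides and read off the conditions stratum by stratum. The only term of the form (linear in $p$)$/f$ sits on the left, namely $\frac{1}{f}(\nabla C_\pm)^ip_i$; hence $\nabla C_\pm=0$, so $C_\pm$ must be a constant $c_1$ (the hypothesis of the Proposition is thus automatic) and this term drops out. Matching the terms of the form (linear in $p$)$\times f$ gives $\nabla B_\pm=\pm\mathbf A_\pm$; since the two sign choices then lead to the same remaining equations for $B_\pm$, one may take $B_+=B_-=:G$, so that $\mathbf A_\pm=\pm\nabla G$ and the ansatz takes exactly the form (\ref{18}). It remains to match the purely momentum-polynomial part: its degree-two component equates the Schouten-bracket term with a multiple of $c\,G\,\mathbf g$ and, after inserting $\mathbf A_\pm=\pm\nabla G$ and cancelling the sign $\pm$, becomes the Hessian-type equation (\ref{HE1}) (the same relation as (\ref{hesseq}), the leading part of (\ref{e1})); its degree-zero component reads $-\,\mathbf A_\pm\cdot\nabla V=\mp2(cV+c_0)G\pm c_1$, which upon inserting $\mathbf A_\pm=\pm\nabla G$ and rearranging is precisely (\ref{VE1}). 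Running this chain of equivalences backwards, any $G$ and $c_1$ solving (\ref{HE1})--(\ref{VE1}) make (\ref{18}) a solution of (\ref{eqq}), which is the claim.

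I expect the only delicate points to be, first, the bookkeeping of the $\pm$ signs together with the sign and normalization conventions for $X_L$ and for the Schouten bracket; I would handle this by carrying $G^+$ and $G^-$ in parallel and checking that both collapse to the same pair (\ref{HE1})--(\ref{VE1}), fixing the Schouten-bracket normalization so that the momentum-quadratic stratum reproduces (\ref{hesseq}) with the stated coefficient. Second, one must justify separating the identity $X_LG^\pm=\pm fG^\pm$ into independent pieces according to powers of $f$ and momentum degree: this is legitimate because $1$, $f$ and $1/f$ are independent over the field of rational functions of the momenta whenever $-2(cL+c_0)$ is not a perfect square there, which holds generically; equivalently one simply works in the formal extension of the function algebra by the symbol $f$ with $f^2=-2(cL+c_0)$, as is standard throughout the theory of extended Hamiltonians. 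No other obstacle is anticipated, the remaining work being the routine expansion of the Poisson brackets indicated above.
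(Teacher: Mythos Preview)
Your proposal is correct and follows essentially the same route as the paper: expand $X_LG^\pm\mp fG^\pm$, then split by powers of $f$ and by momentum degree to obtain $\nabla B_\pm=\pm\mathbf A_\pm$, the Hessian/Schouten condition, and the potential condition, which collapse to (\ref{HE1})--(\ref{VE1}) after setting $B_\pm=G$. The one small difference is that the paper takes $C_\pm=c_1$ constant as a hypothesis and drops $X_LC_\pm$ from the outset, whereas you keep $C_\pm$ general and read off $\nabla C_\pm=0$ from the $(1/f)\cdot(\text{linear in }p)$ stratum; this is a mild sharpening but not a different method.
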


\begin{proof}
By expanding equation (\ref{eqq}) with $G^\pm$ and $L$ given by (\ref{gpm1}) and (\ref{L}) respectively, we have, if $C_\pm$ is constant,
\begin{align*}
& X_LG^\pm\mp fG^\pm = \\
& =p_ip_j(\nabla^iA_\pm^j\pm cg^{ij}B_\pm) +p_i f(\nabla^iB_\pm \mp A^i_\pm)-A^i_\pm \nabla_i V\pm 2(cV-c_0)B_\pm \mp C_\pm.
\end{align*}
By equating to zero the coefficients of the monomials in the momenta, we obtain easily
\begin{align*}
[{\mathbf g},{\mathbf A}_\pm]=\mp c {\mathbf g}B_\pm,\\
\nabla B_\pm=\pm {\mathbf A}_\pm,\\
\nabla V \cdot {\mathbf A}_\pm \mp 2(cV+c_0)B_\pm \pm C_\pm=0,
\end{align*}
 or, equivalently,
\begin{align}
[{\mathbf g},\nabla B_\pm]+ c {\mathbf g}B_\pm=0,\label{HE}\\
\nabla V \cdot \nabla B_\pm - 2(cV+c_0)B_\pm + C_\pm=0,\label{VE}\\
\nabla B_\pm=\pm {\mathbf A}_\pm.
\end{align}
Equations (\ref{HE}), (\ref{VE}) are the same for $(B_+,C_+)$ and $(B_-,C_-)$, thus, their solutions coincide with the solutions $(G,c_1)$ of (\ref{HE1})
and (\ref{VE1}).
Therefore, by setting ${\mathbf A}_\pm=\pm\nabla G$, $B_\pm=G$, and $C_\pm=c_1$, we get (\ref{18}).
\end{proof}

\begin{rmk}\rm \label{r:L0}
If for a given $V$ there exists $G$ satisfying 
 (\ref{VE1}) for some constants $c$, $c_0$, $c_1$,
then the same $G$ is a solution of 
$$
\nabla \tilde{V}
 \cdot \nabla G - 2(\tilde{c} \tilde{V}+\tilde{c}_0)G + \tilde{c}_1=0,
$$  
for  the potential $\tilde{V}= a_1V+a_2$ ($a_1,a_2\in\mathbb{R}$) by choosing
$\tilde{c}=c$, $\tilde{c}_1=a_1c_1$, $\tilde{c}_0=a_1c_0+c a_2$.
This means that if a solution $G$ is known for $c_0=0,$ $c\neq 0$, then, by adding the inessential constant
$\tilde{c}_0/c$
to the potential $V$ we get a solution $G$ of (\ref{VE1}) for
any value of $\tilde{c}_0$.
\end{rmk}

Equations (\ref{HE1}) and (\ref{VE1}) must be compared with equation (\ref{e1})
$$
X_L^2(G)=-2(cL+c_0)G,
$$
which determines the function $G$ necessary to build the extensions of $L$. By assuming $G(q^i)$, equation (\ref{e1}) splits into \cite{CDRfi}
\begin{align}
\nabla_i\nabla_jG+cg_{ij}G=0,\label{he}\\
\nabla V \cdot \nabla G-2(cV+c_0)G=0.\label{ve}
\end{align}
It is clear that
 (\ref{HE1}) and (\ref{VE1}) are equivalent 
 to 
 \begin{equation}\label{condB}
 X_L^2(G)=-2(cL+c_0)G + c_1,
 \end{equation} 
since $[g,\nabla G]=[g,[g,G]]$. Thus, $G$ must verify the hessian equation (\ref{he}).  
Hence, for $c_1=0$, we get (\ref{he}) and (\ref{ve}). This means  that,  given any function $G$ solution of (\ref{he}),
(\ref{ve}), we can immediately obtain solutions $G^\pm$  (\ref{18}) just by putting  $c_1=0$:
\begin{prop} \label{p8}
Let $G(q^i)$ be any solution of (\ref{he}), (\ref{ve}) (i.e. of $X_L^2G=f^2G$), then, the functions 
$$
G^\pm=\pm\nabla^i Gp_i+ G f,
$$
are  ladder functions of $L$, i.e. solutions of  
	$$
	X_LG^\pm=\{L,G^\pm\}=\pm fG^\pm.
	$$
Moreover, the extended Hamiltonian $H$ of $L$ admits the factorized first integrals described in Theorem \ref{GN}.	
\end{prop}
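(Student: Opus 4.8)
The plan is to deduce Proposition \ref{p8} as the degenerate case $c_1=0$ of the Proposition that produces (\ref{18}), followed by a direct appeal to the machinery already assembled. First I would note that, for $G=G(q^i)$, the pair (\ref{he})--(\ref{ve}) is exactly the momentum-degree splitting of (\ref{e1}), equivalently $X_L^2G=-2(cL+c_0)G=f^2G$ with $f=\sqrt{-2(cL+c_0)}$; and this is precisely condition (\ref{condB}) — equivalently the pair (\ref{HE1})--(\ref{VE1}) — with $c_1=0$. Hence the hypotheses of the Proposition giving (\ref{18}) hold with $c_1=0$, and in that case the term $\frac1f c_1$ drops out of (\ref{18}), leaving
$$
G^\pm=\pm\nabla^iG\,p_i+Gf ,
$$
which by that Proposition satisfies $X_LG^\pm=\pm fG^\pm$. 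Thus $G^\pm$ is a ladder function of $L$ whose factor $f(L)=\pm f=\pm\sqrt{-2(cL+c_0)}$ is of the special form (\ref{f0}).

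Next I would run $G^\pm$ through the earlier results. By Proposition \ref{XL2}, having factor (\ref{f0}) forces $G^\pm$ to satisfy (\ref{e1}), so it is an admissible seed for the extension construction; by the (unnumbered) Theorem following Proposition \ref{XL2}, $L$ then admits extensions of the form (\ref{Hest}), the Hamiltonian (\ref{Hfor}) with $\alpha=-k^2\gamma'$ and $\beta=k^2c_0\gamma^2+\frac{\Omega}{\gamma^2}$ is such an extension, and $G^\pm$ generates the characteristic first integrals (\ref{mn_int}) and (\ref{ee2}). Finally, $G^\pm$ meets the hypothesis of Theorem \ref{GN} (a ladder function with factor of the form (\ref{f0})), so that theorem applies with $G$ replaced by $G^\pm$ and $f$ replaced by $\pm f$, yielding the asserted factorization: $K_{m,n}$ (resp.\ $\bar K_{2s,r}$) of any extension $H$ of $L$ factors as the product of the ladder function $(G^\pm)^n$ — up to the irrelevant scalar $(2(\pm f))^{n-1}$ — and the corresponding shift function $F$ of Theorem \ref{GN}.

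I do not expect a genuine obstacle; the care points are organizational. One should check that $G^\pm$ is non-null whenever $G$ is, so that it is a bona fide ladder function and the extension theorems apply; that the sign choice $\pm$ is propagated consistently into $f(L)=\pm f$ and hence into the shift functions $F$; and — since $G^\pm$ now depends on the momenta $p_i$, unlike the $G$ of Section \ref{s2} — that the recursion (\ref{rec}), Lemma \ref{L1} and Theorem \ref{GN} were stated for arbitrary phase-space functions, which they were. All the analytic content having been discharged in the cited results, the proof of Proposition \ref{p8} reduces to this two- or three-line specialization, and I would present it as such.
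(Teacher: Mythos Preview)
Your proposal is correct and follows exactly the paper's approach: the paper does not give a separate proof for Proposition \ref{p8} but instead derives it in the discussion immediately preceding the statement, by observing that (\ref{HE1})--(\ref{VE1}) with $c_1=0$ reduce to (\ref{he})--(\ref{ve}), so that (\ref{18}) applies with the $c_1/f$ term dropped, and then invoking Theorem \ref{GN}. Your write-up is a faithful (and slightly more careful) expansion of precisely this argument.
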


The  solutions $G$ of (\ref{he}) have been computed for several Riemannian and pseudo-Riemannian manifolds in  \cite{CDRPol}, \cite{CDRfi}, \cite {CDRgen}, \cite{CDRsuext}. Moreover, in \cite{CDRfi} it is proved that the constant $c$ is related to the curvature, or the sectional curvatures, of the base manifold of $L$.
Furthermore, equation (\ref{ve}) can be seen as a compatibility condition that the potential $V$ has to fulfil.
By admitting $c_1\neq 0$, (i.e., considering  
 (\ref{VE}) instead of (\ref{ve}) as compatibility condition of the potential) we enlarge the class of compatible  potentials $V$, which would depend on $c_1$ 
 and which reduces for $c_1=0$ to the class of potential verifying (\ref{ve}). 

However, this technique does not  imply  that we are enlarging  the set of Hamiltonians that admit extensions.
Indeed, more general functions $G(q^i,p_i)$ which are polynomial in the momenta and solutions of $X_L^2G=f^2G$ on the same manifold, can be compatible with these potentials $V$, as is the case of the TTW system (see \cite{TTWcdr} and \cite{CKNd}), where $G$ is linear and homogeneous in the momenta.
In this case, for example, we have $V=(C_1+C_2 \cos q)\sin ^{-2}q$, and we know the solution $G=p \sin q$ for $c=1$ and $c_0=0$. However, 
for the same values of $c$ and $c_0$, the (\ref{HE1}), (\ref{VE1}) can be integrated, yielding the solution $G=\cos q$ and $c_1=-C_2$.

\begin{rmk}\rm 
Since any ladder function of $L$ is also a solution of (\ref{e1}), the solutions $G^\pm$  of the form (\ref{18}),  for a given $L$, allow the extension of the Hamiltonian $L$ by putting $G=G^\pm$. We observe that the conditions on $c$, i.e. constraints on the curvature or sectional curvature of the base manifold of $L$, remain unchanged.
\end{rmk}

\begin{exm} \rm
In \cite{CDRfi} are computed the extensions for natural Hamiltonian $L$ on  $\mathbb S^3$ with Hopf coordinates $(q^1=\eta, q^2=\xi_1, q^3=\xi_2)$. In these coordinates the non-zero metric components are $g_{11}=1$, $g_{22}=\sin ^2 \eta$ and $g_{33}=\cos^2 \eta$. The function $G$, obtained for $c=1$ equal to the curvature of the sphere and $c_0=0$, is
$$
G=(a_3\sin \xi_1+a_4 \cos \xi_1)\sin \eta+(a_1\sin \xi_2+a_2\cos \xi_2)\cos \eta,
$$
where $a_i$ are arbitrary constants.
The corresponding ladder functions are
\begin{align*}
	G^\pm=\pm \left\{ \left[(a_3\sin \xi_1+a_4 \cos \xi_1)\cos \eta-(a_1\sin \xi_2+a_2\cos \xi_2)\sin \eta) \right]p_{\eta}+\right.\\
	\left. \left[\frac {a_3\cos \xi_1-a_4\sin \xi_1}{\sin \eta}\right]p_{\xi_1}+\left[\frac {a_1\cos \xi_2-a_2\sin \xi_2}{\cos \eta}\right]p_{\xi_2}\right\}+G f+\frac{c_1}{f}.
\end{align*}
In \cite{CDRfi}, it is provided 
 a solution $V$ of (\ref{ve}) for 
 $a_1=1$, $a_2=a_3=a_4=0$, 
  as
$$
V=\frac 1{\sin ^2 \eta}F\left(\xi_1, \frac {\tan \eta}{\cos \xi_2}\right),
$$
where $F$ is any smooth function, potential that can be separable or not in the chosen coordinates. With the same assumptions, the integration of (\ref{VE1}) for $c_1\neq 0$ gives
$$
V=\frac 1{\sin ^2 \eta}F\left(\xi_1, \frac {\tan \eta}{\cos \xi_2}\right)+ \frac {c_1\cos \eta \sin \xi_2}{1-\cos ^2\eta \sin ^2\xi_2},
$$
in accordance to the fact that, for $c_0=0$,
equation (\ref{ve}) is a linear homogeneous differential
equation in $V$ and a solution of (\ref{VE1})
is the sum of the solution of the homogeneous 
equation with a particular solution of the non-homogeneous case. 
\end{exm}

\section{Comparison with Kuru-Negro factorization}
\label{s5}

In \cite{CKNd} the authors determine for the classical Hamiltonian  of the TTW system in polar coordinates $(r,\theta)$ shift and ladder functions, from which  constants of the motion $X^\pm$ are obtained depending on a rational parameter $k=m/n$:
$$
X^\pm=(\Xi^\pm)^n(A^\mp)^{2m}(D^\pm)^m.
$$
Their method is: (i) to split the Hamiltonian into a radial part combined together with a one-dimensional Hamiltonian $H_\theta$
(the Posch-Teller Hamiltonian);
(ii) to determine  ladder functions
$\Xi^\pm$ of $H_\theta$ with factor $\mp\sqrt{-8H_\theta}$;
(iii) to construct the functions $A$ and $D$ 
interconnecting  the radial part 
with the constant of the motion $k\sqrt{H_\theta}$.

We show that this construction of a factorized first integral of $H$ can be straightforwardly generalized to any $H$ which is an extension
(\ref{Hest}) with $c\neq 0$ of a Hamiltonian $L$ admitting
 ladder functions $G^\pm$ such that
 \begin{equation}\label{gpm}
 X_LG^\pm=\pm\sqrt{-2(cL+c_0)}G^\pm.
 \end{equation}
However, while Theorem \ref{GN} holds for any extension (also for $c=0$), in the following construction the hypothesis that $c\neq 0$ seems 
to be crucial.

By using the property (\ref{eqgam}) of 																																																																																																																																																																																																																																																																																																																																																																																																																																																																																																																																																																																																																																																																																																																																																																																								the $\gamma$, we can write any extension (\ref{Hest}) of $L$ with $c\neq 0$ as
\begin{equation}\label{H}
H =\frac 12 p_u^2+\frac 12 \left(\gamma^2+\frac{C}{c}\right)M^2+\frac{\omega^2}{2\gamma^2}-\frac{C}{c} k^2c_0,
\end{equation}
where, in order to underline the similarity with the formulas in \cite{CKNd}, we set $M=k\sqrt{2(cL+c_0)}$
and $\omega^2=2\Omega$.

Therefore, we have 

\begin{prop}\label{teo}
	Given a Hamiltonian $L$ such that equation (\ref{gpm}) admits a solution $G^\pm$ for some constants $c$ and $c_0$ not both zero, then, 	
	(i) $L$ admits extensions, 
	(ii) for any extension $H$ (\ref{H}) of 
	$L$,  the functions
	$$
	X^\pm=(G^\pm)^{2n}(A^\mp)^{2m}(D^\pm)^m,
	$$
	with 
	$$
	A^\pm=\mp \frac i{\sqrt{2}} p_u+\frac{\omega}{\sqrt{2} \gamma} -\frac M{\sqrt{2}}\gamma,
	$$
	$$
D^\pm=\mp \frac i{\gamma} p_u+\frac{\omega}{\gamma^2} -\frac{H}{\omega}-
\frac{C}{c\omega}\left(c_0k^2- \frac{M^2}{2} \right),
	$$
are factorized first integrals of $H$.
\end{prop}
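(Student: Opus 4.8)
The plan is to imitate the three-step strategy of Kuru--Negro, but carried out intrinsically for the extension $H$ of the form (\ref{H}). First I would establish the needed Poisson-bracket relations among the building blocks. Since $G^\pm$ depends only on $(q^i,p_i)$ and $M = k\sqrt{2(cL+c_0)}$ is a function of $L$ alone, while $A^\pm$ and $D^\pm$ depend on $u,p_u,\gamma(u)$ and on $M$, the only nontrivial brackets to track are: (a) $\{M, G^\pm\}$, which follows directly from (\ref{gpm}) since $X_L G^\pm = \pm\frac{M}{k}\,G^\pm$ gives $\{M,G^\pm\} = \pm M\, G^\pm$ up to the factor from the chain rule, i.e.\ $\{M, G^\pm\} = \pm\sqrt{-2(cL+c_0)}\,\partial_L M\, G^\pm$; (b) $\{p_u, \gamma\} = -\gamma'$ together with the Riccati relation (\ref{eqgam}) $\gamma' = -c\gamma^2 - C$; (c) $\{H, \cdot\}$ acting on $A^\pm$, $D^\pm$, $G^\pm$. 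The key computational lemma is that $A^\pm$ and $D^\pm$ are themselves ``shift-type'' objects: I expect to show $\{H, A^\pm\} = \lambda^\pm(M,\gamma)\,A^\pm$ and $\{H, D^\pm\} = \mu^\pm\, D^\pm$ for suitable functions, or more precisely that $H$ acts on the products $(A^\mp)^{2}(D^\pm)$ and on $G^\pm$ in a way whose ``weights'' cancel when assembled with the exponents $2n, 2m, m$ dictated by $k = m/n$.

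Concretely, the second step is to compute $\{H, A^\pm\}$. Writing $H$ as in (\ref{H}), the $p_u$-part contributes via $\{p_u^2/2, A^\pm\}$ which hits the $\gamma$-dependence of $A^\pm$ through $\gamma'$; the $\frac12(\gamma^2 + C/c)M^2$ and $\omega^2/(2\gamma^2)$ parts contribute via $\{\cdot, p_u\}$ hitting $\gamma$. Using the Riccati identity to turn every $\gamma'$ back into $-c\gamma^2 - C$, I expect the right-hand side to collapse to a multiple of $A^\pm$ itself — this is exactly the mechanism (ii) in the Kuru--Negro recipe, ``interconnecting the radial part with the conserved quantity''. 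The same for $D^\pm$: note $D^\pm$ already contains $H$ explicitly, which is a first integral, so $\{H, D^\pm\}$ reduces to brackets of $H$ with $\mp i p_u/\gamma + \omega/\gamma^2$ plus the $M^2$ term, again closing up to a multiple of $D^\pm$. Then the third step is bookkeeping: compute $\{H, X^\pm\}$ by the Leibniz rule as $2n\,\{H,G^\pm\}/G^\pm + 2m\,\{H,A^\mp\}/A^\mp + m\,\{H,D^\pm\}/D^\pm$ times $X^\pm$, substitute the weights found above, and verify the linear combination vanishes identically once $k = m/n$ is imposed. Part (i), that $L$ admits extensions, is immediate from Proposition \ref{XL2} together with the Theorem right after it, since $G^\pm$ is by hypothesis a ladder function with factor $\pm\sqrt{-2(cL+c_0)} = f(L)$ of the form (\ref{f0}).

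The main obstacle I anticipate is the second step — showing that $\{H, A^\pm\}$ and $\{H, D^\pm\}$ each close back onto $A^\pm$, $D^\pm$ (rather than producing genuinely new functions). This is where the Riccati equation (\ref{eqgam}) for $\gamma$ must be used in an essential way, and it is presumably the reason the hypothesis $c \neq 0$ is needed here even though Theorem \ref{GN} did not need it: the explicit form (\ref{fgam}) of $\gamma$ in terms of the tagged functions $T_\kappa$, together with the identity $\gamma^2 + C/c = S_\kappa^{-2}(cu)$-type relations, is what makes the combination $\gamma^2 + C/c$ appearing in (\ref{H}) factor nicely against $A^\pm$. I would handle this by first computing everything in terms of $\gamma$ and $\gamma'$ symbolically, eliminating $\gamma'$ via the Riccati relation at the very end, and only then checking that the surviving expression is proportional to the intended factor. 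A secondary (purely organizational) difficulty is keeping the $i$'s and signs consistent between the $\pm$ superscripts and the $\mp$ in $A^\mp$; I would fix conventions once at the start and verify the $\Omega = 0$ (equivalently $\omega = 0$) degenerate case separately as a consistency check, since there $D^\pm$ simplifies and the formula should reduce to the factorization already recorded in Theorem \ref{GN}.
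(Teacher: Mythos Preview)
Your proposal is correct and follows essentially the same route as the paper's proof: the paper establishes exactly the three bracket identities you anticipate, namely $\{H,(G^\pm)^{2n}\}=\mp i\,2m\gamma' M\,(G^\pm)^{2n}$, $\{H,A^\pm\}=\mp i\gamma'(M+\omega\gamma^{-2})A^\pm$, and $\{H,D^\pm\}=\mp 2i\gamma'\,\omega\gamma^{-2}D^\pm$, and then obtains $\{H,X^\pm\}=0$ by the Leibniz-rule weight cancellation you describe. Part (i) is dispatched via Proposition~\ref{XL2}, just as you say.
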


\begin{proof} 
	The Hamiltonian $L$ admits an extension $H$ because of Proposition \ref{XL2}. Moreover, we have, for any solution $G^\pm$ of (\ref{gpm})
	\begin{align*}
	\{H,G^\pm\}&=
	k^2(c\gamma^2+C)\{L,G^\pm\}
	\\
	&=
	\pm k^2(c\gamma^2+C)\sqrt{-2(cL+c_0)} G^\pm
	\\ &=
	\mp ik \gamma' MG^\pm,
	\end{align*}
	and, since $k=m/n,$
		\begin{equation}\label{HG}
	\{H,(G^\pm)^{2n}\}=\mp i 2m	\gamma' M (G^\pm)^{2n}.
		\end{equation}
The functions $A^\pm$ and $D^\pm$ satisfy the
following factorization and commutation formulas
	$$
	H=A^+A^-+\omega M +\frac{C}{c}\left(\frac {M^2}2- k^2d\right),
	$$
	$$
	D^+D^-
	-\left(\frac{H}{\omega}+
	\frac{C}{c\omega}\left(c_0k^2- \frac{M^2}{2} \right)\right)^2=- M^2,
	$$
\begin{equation}\label{HA}
	\{H,A^\pm \}=\mp i\gamma'\left(M+\frac \omega{\gamma^2}\right)A^\pm,
\end{equation}
\begin{equation}\label{HD}
	\{H,D^\pm\}=\mp 2i\gamma' \frac{\omega}{\gamma^2}D^\pm. \end{equation}
The statement follows by computing directly $\{H,X^\pm\}$ taking into account  (\ref{HA}),(\ref{HD}) and (\ref{HG}).
\end{proof}

We recall that, from  Theorem \ref{GN} and Proposition \ref{p8}, it follows  that the factorization of the characteristic first integral is, for $k=\frac mn$, of the form $(G^\pm)^nF^m$ (up to the factor  $f^{n-1}$, inessential because  itself first integral of $H$). Therefore, the factorization determined by Theorem \ref{GN} appears to be simpler than the one given in Proposition \ref{teo}.

\begin{exm}\rm
 The ladder functions $\Xi^\pm$ of \cite{CKNd} for the Tremblay-Turbiner-Winternitz  Hamiltonian satisfy (\ref{f00}) and guarantee the extensibility of the Hamiltonian $H_\theta$. Indeed, given (26) of \cite{CKNd} 
$$
\{H_\theta,\Xi^\pm \}=\pm 4 i\sqrt{H_\theta}\Xi^\pm,
$$
with
$$
\Xi^\pm=\pm i p_\theta \sin 2\theta +\sqrt{H_\theta} \cos 2\theta +\frac{\beta^2-\alpha^2}{\sqrt{H_\theta}},
$$
$$
H_\theta=p_\theta^2+\frac{\alpha^2}{\cos ^2 \theta }+\frac{\beta^2}{\sin ^2 \theta },
$$
and
$$
\{\theta, p_\theta\}=-1,
$$
as we usually assume, we have that the $G^\pm=\Xi^{\pm}$  are solutions  of
$$\{L,G\}=\pm \sqrt{-2(cL+c_0)}G,
$$
and therefore of
$$\{L,\{L,G\}\}=-2(cL+c_0)G,$$ 
with $L=H_\theta$, $c=8$ and $c_0=0$ (sec 3.2 of \cite{CKNd}).
Up to a factor 1/2, the radial term of $H$ in \cite{CKNd} coincides with the $(u,p_u)$-dependent term in the extensions 
(\ref{H}), in the case $\gamma =1/r$
($C=0,c=1$) and $\omega^2=2\Omega$.
\end{exm}

\section{Shift and Ladder Operators for Quantum Extended Hamiltonians}\label{s6}
Our aim is to give a general method for constructing a $(N+1)$ quantum Hamiltonian operator admitting a factorized symmetry, starting from 
a natural Hamiltonian
$L$ with $N$ degrees of freedom which admits a ladder function of the form (\ref{gpm1}), i.e., such that there exists a non trivial solution of  
(\ref{HE1}) and (\ref{VE1}).
As a first step, we look for shift and ladder operators similar to those given in \cite{CKNd}.
Finally, as in \cite{CKNd}, we use them in order to construct a symmetry operator for the extended quantum Hamiltonian.
The structure of this Hamiltonian operator
is strictly related with the classical extension  of $L$,
but does not exactly coincide with the Laplace-Beltrami quantization of an extension of
$L$ \cite{CDRmlbq}.

\begin{defi} Let $\hat H_M$ be an operator depending on a parameter $M$. We say that the $M$ dependent operator  $\hat{\mathcal{S}}_M$ is a \emph{shift operator} for $\hat{H}_M$ if there exists a
function $\bar M=\bar{M}(M)$  such that
	$$
	\hat{\mathcal{S}}_M \hat H_M=\hat H_{\bar{M}}\hat{\mathcal{S}}_M.
	$$ 
Let $\hat H$ be an operator and the function $\psi_\lambda$ be an eigenfunction of $\hat{H}$  with eigenvalue $\lambda$. We say that $\hat{\mathcal{L}}_\lambda$ is a \emph{ladder operator} for the eigenfunctions of $\hat H$ if there exists a function
 $\bar{\lambda}=\bar{\lambda}(\lambda)$ such that
$$
\hat H \psi_\lambda =\lambda \psi_\lambda \ \Rightarrow \ 
\hat{H}(\hat{\mathcal{L}}_\lambda \psi_\lambda) =  \bar{\lambda} \hat{\mathcal{L}}_\lambda\psi_{\lambda}.
$$ 
If $\bar{\lambda}(\lambda)=\lambda$ for all eigenvalues $\lambda$ we say that $\hat{\mathcal{L}}_\lambda$ is a \emph{symmetry} for
$\hat{H}.$	\end{defi}

We consider the Schr\"odinger operator associated with the natural Hamiltonian $L$ (\ref{L}), 
\begin{equation}\label{hL}
\hat L=-\frac {\hbar^2}2\Delta +V. 
\end{equation}
where $\Delta$ is the Laplace-Beltrami operator associated to the metric $\mathbf g$.
Let $\mathbf{\bar g}$ be  the
metric tensor of the extension of (\ref{L})  which is of the form
\begin{equation}
\bar g^{ij}=\left(\begin{matrix} 1 & 0 \cr 0 & \alpha(u)g^{ij} \end{matrix} \right),\quad \alpha(u)=\frac {m^2}{n^2}\left(c\gamma^2+{C}\right)=-k^2\gamma',
\end{equation}
since $\gamma$ satisfies (\ref{eqgam}). 
We choose $\mathbf{\bar g}$ as metric tensor  for the quantum extended Hamiltonian of (\ref{hL}).
Therefore, the Laplace-Beltrami operator $\bar \Delta$ associated with the metric $\mathbf{\bar g}$ is
\begin{equation}
\bar \Delta \psi=\partial ^2_u \psi+ N c\gamma \partial _u \psi-
k^2\gamma' \Delta \psi,
\end{equation}
where $N$ is the dimension of the base manifold of $L$.
We consider an extended Schr\"odinger operator of
$(\ref{hL})$ of the form
\begin{equation}\label{HHH}
\hat H=-\frac{\hbar^2}2 (\partial ^2_u \psi+ N c\gamma \partial _u \psi-
k^2\gamma' \Delta \psi) -k^2\gamma' V+ W(u). \end{equation}
where $k$ is a constant and the function $W$ will be suitably determined later.
If we consider eigenfunctions $\psi_\lambda$ of the operator $\hat{L}$
$$
\hat L\psi_\lambda=\lambda \psi_\lambda,
$$
then, for any $\phi=\phi(u)$,
$$
\hat H(\phi\psi_\lambda) =-\frac{\hbar^2}2\left(\partial_u^2 + Nc \gamma \partial_u\right)(\phi)\psi_\lambda+
(W-k^2\gamma'\lambda)\phi\psi_\lambda.
$$ 
Since by (\ref{eqgam}) $\gamma'=-c\gamma^2-C$, we may write
\begin{equation}\label{H1}
\hat H(\phi\psi_\lambda) =-\frac{\hbar^2}2\left(\partial_u^2 + Nc \gamma \partial_u\right)(\phi)\psi_\lambda +
(W+ck^2\lambda\gamma^2+ C k^2\lambda)
\phi\psi_\lambda.
\end{equation}
Hence, we have that $\hat{H}(\phi\psi_\lambda)=
E\phi\psi_\lambda$
(that is $\phi\psi_\lambda$ is an eigenfunction of (\ref{HHH}) associated with the eigenvalue $E$) if and only if $\phi(u)$ is 
an eigenfunction of the 
$M$-dependent operator 
\begin{equation}\label{hHM}
\hat{H}^M=
 -\frac{\hbar^2}2\left(\partial_u^2 + Nc \gamma(u) \partial_u\right) 
+M(c\gamma^2(u)+C)+W(u), 
\end{equation}
with $M=k^2\lambda$. 
We denote such a $\phi$ by $\phi^{k^2\lambda}_E$ in order to underline its dependence on the spectral parameters $E$ and $\lambda$. 

\begin{defi} \rm \label{d:ws}
We say that $\hat{X}$ is a \emph{warped symmetry operator} of 
(\ref{HHH}) if, for all partially separated eigenfunctions $f_E=\phi^{k^2{\lambda}}_E\psi_\lambda$, $\hat{H}\hat{X}(f_E)=E\hat{X}(f_E)$, or equivalently, that $[\hat{H},\hat{X}](f_E)=0$.
We allow $\hat{X}$ to depend on both $\lambda$ and 
$E$.
\end{defi}

In the following, we construct operators $\hat G$, $\hat A$, $\hat D$ which are the factors of a warped symmetry $\hat X$ of $\hat H$ for $k\in \mathbb{Q}$.
In particular, with Theorem \ref{teoq} below, we produce a quantum version of the factorized characteristic first integrals of extended Hamiltonians of Proposition \ref{teo}, valid only for the case $c\neq 0$, $C=0$. The proof of the Theorem is given in Sec.\ \ref{6.4}, while in Sec.\ \ref{6.1}-\ref{6.3} are studied the relevant properties of the operators $\hat G$, $\hat A$ and $\hat D$, respectively.

Given any function $G(q^1,\ldots,q^N)$, we define
the operators \begin{equation}\label{dhG}
\hat G^\pm_\epsilon=\aa \nabla^iG \nabla_i+a_1^\pm G+a_2^\pm
\end{equation}
where the constants $a_i$ verify the relations
\begin{equation}\label{be}
 a_1^\pm=\aa\left(\frac {c(1-N)}2\pm \epsilon \frac {\sqrt{2|c|}}\hbar\right), \qquad
a_2^\pm=\frac{-2c_1\aa}{-c\hbar^2\pm 2\hbar\epsilon\sqrt{2|c|}},\end{equation}
with $\epsilon, c_1\in \mathbb{R}$, $c\neq 0$.
Moreover, for any integer $n>0$, we construct the operators
$(\hat{G}^{\pm}_\epsilon)^n$ as follows
\begin{equation}\label{hGn}
(\hat{G}^{\pm}_\epsilon)^n= \hat{G}^{\pm }_{\epsilon\pm(n-1)s}\circ \hat{G}^{\pm }_{\epsilon \pm(n-2)s}\circ \cdots \circ \hat{G}^{\pm }_{\epsilon \pm s}\circ \hat{G}^{\pm }_{\epsilon} 
\end{equation}
where $s=\sqrt{\frac{|c|}{2}}\hbar,$ and $\hat{G}^\pm_\epsilon$ is defined in (\ref{dhG}).

We define the operators
\begin{equation}\label{dhA}
\hat{A}^{\sigma,\tau}_\mu= \partial_u+\frac{1}{\hbar} \left[\sigma  {\omega} cu+\frac{\hbar c(N-1)+\tau\sqrt{2|c|}\mu}{2cu}
\right], \qquad (\sigma,\tau=\pm 1),
\end{equation}
\begin{equation}\label{Dhat}
\hat D_E^\pm= \frac {\hbar}{\sqrt{2}}u\partial_u
+\frac{\hbar (N+1)}{2\sqrt{2}}\pm\left( \frac{E}{2c\omega}-\frac{\omega}{\sqrt{2}} cu^2\right),
\end{equation}
where $E$, $\mu$, $\omega\in \mathbb{R}$, $c\neq 0$.
Moreover, for any integer $m>0$, we define the operators 
\begin{equation}\label{dAm}
(\hat{A}^{\sigma,\tau}_\mu)^m=\hat{A}^{\sigma,\tau}_{\mu +\tau(m-1)s}\circ \hat{A}^{\sigma,\tau}_{\mu + \tau(m-2)s}\circ \cdots \circ \hat{A}^{\sigma,\tau}_{\mu + \tau s}\circ \hat{A}^{\sigma,\tau}_{\mu}, 
\end{equation}
\begin{equation}\label{dDm}
(\hat{D}^{\pm}_E)^m=\hat{D}^{\pm }_{E \pm 2(m-1)\delta} \circ \hat{D}^{\pm }_{E \pm 2(m-2)\delta}\circ \cdots \circ \hat{D}^{\pm}_{E \pm 2\delta}\circ \hat{D}^{\pm}_{E}, 
\end{equation}
with $s=\sqrt{\frac{|c|}{2}}\hbar$, $\delta=\hbar c\omega,$  $\sigma,\tau=\pm 1$.
 
 In (\ref{hGn}) and (\ref{dDm}), the signs in upper
and lower indices must be the same (all + or all $-$).

We are able to state the theorem as follows 

\begin{teo}\label{teoq}
Let $L_0=\frac{1}{2}g^{ij}p_ip_j+V_0(q^i)$ be a classical natural Hamiltonian with $N$ degrees of freedom 
and $G$ a function of $(q^i)$ satisfying (\ref{HE1}), (\ref{VE1}) for  $c\neq 0$ and $c_0=0$
i.e., generating 
 a ladder function of the form (\ref{18}) for $L_0$.
For any $k\in \mathbb{Q}^+$, let us construct the operator
 \begin{equation}\label{hX}
 \hat{X}^k_{\epsilon,E}=(\hat{G}^{+}_\epsilon)^{2n} \circ
 (\hat{A}^{1,1}_{k\epsilon})^{2m}\circ (\hat{D}^{+}_E)^m
 \end{equation}
 where $m$,$n$ are positive integers  such that $k=m/n$, the operators
 $(\hat{G}^{+ }_\epsilon)^{2n}$, $(\hat{A}^{\sigma,\tau}_\mu)^{2m}$ (with $\mu=k\epsilon$, $\sigma=\tau=1$) and $(\hat{D}^{+}_E)^m$ are defined in (\ref{hGn})
 (\ref{dAm}) (\ref{dDm}) and 
 where $\hat{G}_\epsilon^\pm$ are constructed from the function $G$.
 Then, the
 quantum Hamiltonian 
\begin{equation}\label{hL0}
\hat{L}_0= -\frac{\hbar^2}2\Delta+V_0,
\end{equation}
associated with $L_0$, can be extended 
to the quantum Hamiltonian
\begin{equation}\label{Hokk}
\hat{H}=-\frac{\hbar^2}{2}\left(\partial^2_{uu}+ \frac{N}{u}\partial_u
\right)+\frac{k^2}{cu^2}\hat{L}_0 + \frac{\omega^2}{2}c^2u^2+ \frac{\dok (k^2+1)}{cu^2},
\end{equation}
with
\begin{equation}\label{dok}
\dok=-\frac{\hbar^2c(N-1)^2}{8},
\end{equation}
which admits the operator $\hat{X}^k_{\epsilon, E}$ as warped symmetry: i.e., 
\begin{equation*}\label{hHhX}
\hat{H}(\hat{X}^k_{\epsilon, E} f_E)= E\hat{X}^k_{\epsilon, E} (f_E),
\end{equation*}
for all (partially) multiplicatively separated eigenfunctions $f_E=\phi_E^{M}(u)\psi_\lambda(q^i)$ of $\hat{H}$ such that $\psi_\lambda$ is an eigenfunction of $\hat{L}_0$ with eigenvalue $\lambda$,  and
$$M=k^2(\lambda+\dok),\qquad \epsilon
=\sqrt{| \lambda +\dok|}, \qquad
c(\lambda+\dok)\ge 0.$$
\end{teo}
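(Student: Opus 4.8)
The plan is to reduce the $(N{+}1)$-dimensional eigenvalue equation $\hat H f_E = E f_E$ to a one-parameter family of one-dimensional problems in $u$, and then to recognise $\hat X^k_{\epsilon,E}$ as a composition of intertwiners for that family, engineered so that every parameter shift cancels. First I would check that the operator (\ref{Hokk}) is precisely the quantum extension of $\hat L_0$ associated with $\mathbf{\bar g}$ in the case $C=0$ (so that $\gamma=1/(cu)$, $-k^2\gamma' = k^2/(cu^2)$, $Nc\gamma\,\partial_u = (N/u)\partial_u$), with the extra term $\frac{\dok(k^2+1)}{cu^2}$ playing the role of $W(u)$ together with $\frac{\omega^2}{2}c^2u^2$. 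Repeating the computation that led to (\ref{hHM}), for a multiplicatively separated $f_E = \phi(u)\psi_\lambda(q^i)$ with $\hat L_0\psi_\lambda = \lambda\psi_\lambda$ one gets $\hat H f_E = E f_E$ if and only if $\phi$ is an eigenfunction with eigenvalue $E$ of the one-dimensional operator $\hat H^M = -\frac{\hbar^2}{2}(\partial_u^2 + \frac Nu\partial_u) + \frac{M}{cu^2} + \frac{\omega^2c^2u^2}{2} + \frac{\dok}{cu^2}$, where $M = k^2(\lambda+\dok)$; this fixes the separation constant $M$ and the family $\{\hat H^M\}$ to be used below.

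The next step, carried out operator by operator in Sections \ref{6.1}--\ref{6.3}, is to establish the intertwining properties with controlled parameter shifts. For $\hat G^\pm_\epsilon$: using that $G$ solves (\ref{HE1}) and (\ref{VE1}) with $c_0=0$, one computes $[\hat L_0,\hat G^\pm_\epsilon]$ on an eigenfunction; the hessian equation (\ref{HE1}) annihilates the second-order contribution and (\ref{VE1}) the first-order one, and after choosing $a_1^\pm, a_2^\pm$ as in (\ref{be}) and the value $\dok$ of (\ref{dok}) so as to absorb the $O(\hbar^2)$ ordering term produced by $\nabla^iG\nabla_i$, one obtains that $\hat G^\pm_\epsilon$ maps a $\lambda$-eigenfunction of $\hat L_0$ (with $\epsilon = \sqrt{|\lambda+\dok|}$, $c(\lambda+\dok)\ge 0$) to a $\lambda'$-eigenfunction with $\sqrt{|\lambda'+\dok|} = \epsilon \pm s$, $s=\sqrt{|c|/2}\,\hbar$; iterating through (\ref{hGn}), $(\hat G^+_\epsilon)^{2n}$ raises this label by $2ns$. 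For $\hat A^{\sigma,\tau}_\mu$: one shows it is simultaneously a shift operator for the family, sending $\sqrt{|M|}$ to $\sqrt{|M|}+\tau s$, and shifts the one-dimensional eigenvalue by a fixed amount proportional to $\delta=\hbar c\omega$; thus $(\hat A^{1,1}_{k\epsilon})^{2m}$ moves $\sqrt{|M|}=k\epsilon$ to $k\epsilon+2ms$ while changing $E$ by $+2m\delta$. For $\hat D^\pm_E$: one shows it is a ladder operator for $\hat H^M$ at fixed $M$, with $(\hat D^+_E)^m$ changing the one-dimensional eigenvalue by $-2m\delta$. The signs are arranged so that the $\delta$-contributions of $\hat A$ and $\hat D$ are opposite.

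Finally I would compose. Applying $\hat X^k_{\epsilon,E} = (\hat G^+_\epsilon)^{2n}\circ(\hat A^{1,1}_{k\epsilon})^{2m}\circ(\hat D^+_E)^m$ to $f_E = \phi^M_E\psi_\lambda$ and reading right to left: $(\hat D^+_E)^m$ turns the $u$-factor into a $\hat H^M$-eigenfunction at energy $E-2m\delta$; $(\hat A^{1,1}_{k\epsilon})^{2m}$ moves it to a $\hat H^{M''}$-eigenfunction with $\sqrt{|M''|}=k\epsilon+2ms$ and restores the energy to $E$; $(\hat G^+_\epsilon)^{2n}$ moves the $q$-factor to $\psi_{\lambda''}$ with $\sqrt{|\lambda''+\dok|}=\epsilon+2ns$. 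Since $k=m/n$, $k\sqrt{|\lambda''+\dok|}=k\epsilon+2nks=k\epsilon+2ms=\sqrt{|M''|}$, i.e.\ $M''=k^2(\lambda''+\dok)$, so the $u$- and $q$-parameters of the resulting product remain matched and $\hat X^k_{\epsilon,E}f_E = \phi^{M''}_E\psi_{\lambda''}$ is again a partially separated eigenfunction of $\hat H$, with eigenvalue $E$. As $f_E$ was arbitrary among such eigenfunctions, this is the warped-symmetry property of Definition \ref{d:ws}.

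The main obstacle is entirely on the quantum side. Classically the corresponding statement, Proposition \ref{teo}, is just the Leibniz rule for Poisson brackets applied to three ladder functions whose ``frequencies'' telescope, with no ordering ambiguities. Upon quantization, every commutator of $\nabla^iG\nabla_i$, $u\partial_u$, $u^2$, $u^{-2}$ with $\hat L_0$ or $\hat H$ generates $O(\hbar)$ and $O(\hbar^2)$ corrections, and the real content of the theorem is that the single free constant in each $\hat G^\pm_\epsilon$ (the parameter $c_1$ inside $a_2^\pm$), the shifted constant $\dok=-\hbar^2c(N-1)^2/8$, and the extra potential $\frac{\dok(k^2+1)}{cu^2}$ in (\ref{Hokk}) can be chosen so that all these corrections are absorbed and the three intertwining relations hold exactly, with step sizes that match ($s$ shared by $\hat G$ and $\hat A$, $\delta$ shared by $\hat A$ and $\hat D$). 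Verifying each of the three relations with these specific constants, and checking that the steps fit together through $k=m/n$, is the computational heart of the argument; Sec.\ \ref{6.4} only assembles the pieces proved in Sections \ref{6.1}--\ref{6.3}.
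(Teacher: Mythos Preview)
Your approach is essentially the paper's: reduce $\hat H$ on separated eigenfunctions to the one-dimensional family $\hat H^M$, establish the intertwining properties of $\hat G^\pm_\epsilon$, $\hat A^{\sigma,\tau}_\mu$, $\hat D^\pm_E$ (Sections \ref{6.1}--\ref{6.3}), and then check that the shifts in $\epsilon$, $\mu=k\epsilon$, and $E$ telescope to zero thanks to $k=m/n$ (Section \ref{6.4}). The structure and the identification of the key computational content are correct.

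One bookkeeping point: with the paper's conventions the directions of the shifts are the opposite of what you state. From (\ref{40}) and its corollaries, $\hat G^+_\epsilon$ sends $\epsilon\mapsto\epsilon-s$ (not $\epsilon+s$); from Proposition~\ref{p:Ahat}, $\hat A^{1,1}_\mu$ sends $\mu\mapsto\mu-s$ and lowers the $\hat H^M$-eigenvalue by $\delta$; from Proposition~\ref{p:Dhat}, $\hat D^+_E$ \emph{raises} the eigenvalue by $2\delta$. Your signs are uniformly flipped, so the cancellation argument still goes through verbatim, but when you actually carry out the computations against the definitions (\ref{dhG}), (\ref{dhA}), (\ref{Dhat}) you will find the matching reads $k(\epsilon-2ns)=k\epsilon-2ms$ rather than $k(\epsilon+2ns)=k\epsilon+2ms$.
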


\begin{rmk}
\rm
For $N=1$, the quantum extended Hamiltonian (\ref{Hokk}) is
 the quantum operator associated with the
extension of $L_0$ with $\gamma=1/(cu)$ (that is for $c\neq0$, $C=0$).
However, for $N>1$ (that is when $L_0$ has more than one degree of freedom) this is no more true, due to 
 the presence of the last 
addendum in (\ref{Hokk}). 
The addendum could be absorbed in $L_0$, by replacing $L_0$ with
${L}_1=L_0+\dok\frac{k^2+1}{k^2}$.
Nevertheless, in the proof of Theorem \ref{teoq} it is
essential to consider the eigenvalues of $\hat{L}_0$, not those of the shifted operator $\hat{{L}}_1$. 
From a different view point, the additional term proportional to $\dok$  added to
the quantization of the classical extension of $L_0$
suggests a link with the introduction of a kind of quantum correction, used as instance in \cite{CDRmlbq, B2} to allow quantization for classical quadratic in the momenta first integrals.
\end{rmk}

\begin{rmk}
\rm
Other choices of the signs in $\hat{G}^\pm_\epsilon$, $\hat{A}_\mu^{\sigma,\tau}$,  and $\hat{D}^\pm_E$
(as well as the commutation between $(\hat{G}^\pm_\epsilon)^{2n}$ and $(\hat{A}^{\sigma,\tau}_{k\epsilon})^{2m}\circ (\hat{D}^{+}_E)^m$ ) allow us to construct other warped symmetry operators for (\ref{Hokk}):
by applying the same kind of  shown given in Section \ref{6.4}, it easy to check that the  operators  
$$(\hat{G}^{-}_\epsilon)^{2n}
(\hat{A}^{-1,-1}_{k\epsilon})^{2m}(\hat{D}^{-}_E)^m, \quad (\hat{G}^{-}_\epsilon)^{2n}
(\hat{A}^{1,-1}_{k\epsilon})^{2m}(\hat{D}^{+}_E)^m, \quad (\hat{G}^{+}_\epsilon)^{2n}
(\hat{A}^{-1,1
}_{k\epsilon})^{2m}(\hat{D}^{-}_E)^m.$$
commute with (\ref{Hok}), on separated eigenfunctions.
\end{rmk}

\begin{rmk} \rm
Shift and ladder operators arise usually from the factorization, complete or partial, of the Hamiltonian operator. While this property is useful for computations, we did not make use of it in the construction of the symmetry operators or of the first integrals.
\end{rmk}

\subsection{Ladder operators for the eigenfunctions of $\hat L$}\label{6.1}

\begin{prop}\label{p:Ghat}
Let $G$ be solution of (\ref{HE1}) and (\ref{VE1}), with $c \neq  0$ and 
\begin{equation}\label{cons}
c_0=-\frac{c^2\hbar^2}8(N-1)^2.
\end{equation} 
 By setting, for all $\lambda$ such that $c\lambda\ge 0$,
$$
\epsilon=\sqrt{|\lambda|} \qquad (i.e,\ \lambda=\check{c}\epsilon^2 \ with\ \check{c}=sgn(c)
),
$$
then,
for any $\aa\neq 0$, the operators $\hat G_\epsilon^\pm$ (\ref{dhG}) are  ladder operators for the eigenfunctions of the Hamiltonian operator (\ref{hL}): i.e., for any $\psi_\lambda$ such that
$\hat{L}\psi_\lambda=\lambda\psi_\lambda$
\begin{equation}\label{40}
\hat L\hat G^-_\epsilon\psi_\lambda=\check{c}\left(\epsilon+ s\right)^2\hat{G}_\epsilon^-\psi_\lambda,
 \quad
\hat L\hat G_\epsilon^+\psi_\lambda=\check{c}\left(\epsilon- s
\right)^2\hat{G}_\epsilon^+\psi_\lambda,
\end{equation}
with $s=\hbar\sqrt{\frac{|c|}{2}}$.
\end{prop}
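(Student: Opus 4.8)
The plan is to reduce everything to a single operator identity: it suffices to prove that, on every eigenfunction $\psi_\lambda$ of $\hat L$,
$$[\hat L,\hat G^\pm_\epsilon]\,\psi_\lambda=(\lambda^\pm-\lambda)\,\hat G^\pm_\epsilon\,\psi_\lambda,\qquad \lambda^\pm=\check c\,(\epsilon\mp s)^2,$$
since then $\hat L\hat G^\pm_\epsilon\psi_\lambda=\hat G^\pm_\epsilon\hat L\psi_\lambda+[\hat L,\hat G^\pm_\epsilon]\psi_\lambda=\lambda^\pm\,\hat G^\pm_\epsilon\psi_\lambda$, which is exactly (\ref{40}). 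Writing $\hat L=\hat T+V$ with $\hat T=-\tfrac{\hbar^2}2\Delta$, and $\hat G^\pm_\epsilon=a_0\,D_G+a_1^\pm\,G+a_2^\pm$ where $D_G:=\nabla^iG\,\nabla_i$ is differentiation along $\nabla G$ while $G$, $a_1^\pm$, $a_2^\pm$ act by multiplication, I would first expand the commutator into the six pieces $[\hat T,a_0D_G]$, $[\hat T,a_1^\pm G]$, $[\hat T,a_2^\pm]=0$, $[V,a_0D_G]$, $[V,a_1^\pm G]=0$, $[V,a_2^\pm]=0$, so that the whole calculation rests on the three model commutators $[\Delta,D_G]$, $[\Delta,G]$, $[V,D_G]$.

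These three are controlled by the defining equations of $G$. From the Hessian equation (\ref{he}) (equivalently the first line of (\ref{HE1})), $\nabla_i\nabla_jG=-c\,g_{ij}G$, one gets at once $\Delta G=-cNG$, $\nabla_jW^i=-c\,\delta^i_jG$ and $\Delta W^i=-c\,W^i$ (with $W^i=\nabla^iG$), whence $[\Delta,G]=2D_G-cNG$ and $[\Delta,D_G]=\rho\,D_G-2c\,G\,\Delta$ for a constant $\rho$ whose value carries a curvature contribution. To pin down that contribution I would commute the third covariant derivatives of $G$, $\nabla_k\nabla_i\nabla_jG-\nabla_i\nabla_k\nabla_jG=-R_{kij}{}^{l}\nabla_lG$, and substitute the Hessian equation on the left, obtaining $R_{kij}{}^{l}\nabla_lG=c\,g_{ij}\nabla_kG-c\,g_{kj}\nabla_iG$; tracing over $i,j$ shows that $\nabla G$ is a Ricci eigendirection with eigenvalue $c(N-1)$, and this is precisely the term that enters $\rho$, giving $\rho=c(N-2)$. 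For the third model commutator, the potential equation (\ref{VE1}), $\nabla V\cdot\nabla G=2(cV+c_0)G-c_1$, yields $[V,D_G]=-\nabla V\cdot\nabla G=-2(cV+c_0)G+c_1$.

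The decisive step is then to act on $\psi_\lambda$ and eliminate $\Delta\psi_\lambda$ through $\Delta\psi_\lambda=-\tfrac2{\hbar^2}(\lambda-V)\psi_\lambda$. The $G\,\Delta\psi_\lambda$ term produced by $[\hat T,a_0D_G]$ then generates a $V\,G\,\psi_\lambda$ contribution that cancels \emph{exactly} against the $V\,G\,\psi_\lambda$ term of $[V,a_0D_G]$ — this cancellation is the whole reason the compatibility condition on the potential must have the precise form (\ref{VE1}). What remains is an explicit $\mathbb R$-linear combination of $D_G\psi_\lambda$, $G\psi_\lambda$ and $\psi_\lambda$, which I would match against $(\lambda^\pm-\lambda)(a_0D_G+a_1^\pm G+a_2^\pm)\psi_\lambda$ coefficient by coefficient: the $D_G\psi_\lambda$ coefficient, together with the value of $a_1^\pm/a_0$ prescribed in (\ref{be}), fixes $\lambda^\pm-\lambda$ as an affine function of $\epsilon$; the $G\psi_\lambda$ coefficient is consistent with this choice \emph{only} when $c_0=-\tfrac{c^2\hbar^2}8(N-1)^2$, i.e.\ (\ref{cons}), and in that case yields $\lambda^\pm=\check c(\epsilon\mp s)^2$ with $s=\hbar\sqrt{|c|/2}$ (here one uses $c\lambda\ge0$, so $\lambda=\check c\epsilon^2$); finally the constant coefficient, equal to $a_0c_1$, forces $a_2^\pm=a_0c_1/(\lambda^\pm-\lambda)$, reproducing the expression (\ref{be}) for $a_2^\pm$ and in particular giving $a_2^\pm=0$ when $c_1=0$.

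The one genuinely delicate point is the curvature computation inside $[\Delta,D_G]$: showing that the Hessian equation alone forces $\nabla G$ to be a Ricci eigenvector, and then carrying the various sign conventions (the Laplacian sign, the Ricci identity, the two branches $\pm$) through coherently so that the three matching conditions admit a common solution. That they do — and that $c_0$ is thereby pinned to the value (\ref{cons}) — is the substance of the statement; the remainder is routine first-order operator algebra.
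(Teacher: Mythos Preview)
Your proposal is correct and follows essentially the same route as the paper. The paper proceeds through a chain of lemmas that compute $\Delta\hat G(\psi)$ directly (rather than the commutator $[\hat L,\hat G]$), but the three key ingredients are identical to yours: the consequences of the Hessian equation $\nabla_i\nabla_jG=-cg_{ij}G$, the fact that $\nabla G$ is a Ricci eigendirection (obtained by commuting third derivatives of $G$, exactly as you describe), and the potential compatibility (\ref{VE1}); the paper then evaluates on eigenfunctions, obtains two algebraic conditions on $(a_0,a_1,a_2)$ equivalent to your three coefficient matchings, and observes that the eigenvalue takes the form $\check c(\epsilon\mp s)^2$ precisely when $c_0$ equals (\ref{cons}). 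Your commutator packaging is a tidy reorganisation of the same computation, not a different argument.
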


The proof of the above proposition is based on several lemmas.

\begin{lem}\label{l:1} 
Let $\hat G=\aa \nabla^iG \nabla_i+\bb G+\ee$, where $G$ is solution of (\ref{HE1}), and $\aa,\bb,\ee$ are constants. We have for any function $\psi$
	\begin{align}
	\Delta \hat G (\psi)&= \aa\nabla^iG g^{lm}\nabla_l\nabla_m \nabla_i \psi+(2\bb-\aa c)\nabla^iG\nabla_i \psi 
	\\ \nonumber
	& +G(\bb-2\aa c)\Delta \psi -\bb cNG \psi+\ee \Delta\psi.
	\end{align}
	\end{lem}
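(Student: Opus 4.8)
The plan is to prove Lemma~\ref{l:1} by a direct computation, applying the Laplace--Beltrami operator $\Delta = g^{lm}\nabla_l\nabla_m$ to $\hat G(\psi) = \aa\,\nabla^i G\,\nabla_i\psi + \bb\, G\,\psi + \ee\,\psi$ term by term, and then using the Hessian equation $\nabla_i\nabla_j G + c\, g_{ij} G = 0$ (that is, equation~(\ref{he}), which holds because $G$ solves (\ref{HE1})) to rewrite all second covariant derivatives of $G$ in terms of $G$ itself. The three summands are handled separately: the term $\ee\,\Delta\psi$ is immediate; the term $\bb\,\Delta(G\psi)$ expands via the Leibniz rule as $\bb\big(\Delta G\,\psi + 2\nabla^i G\,\nabla_i\psi + G\,\Delta\psi\big)$, and here $\Delta G = g^{lm}\nabla_l\nabla_m G = -cNG$ by tracing the Hessian equation, producing the $-\bb c N G\psi$ contribution and the $2\bb\,\nabla^i G\,\nabla_i\psi$ contribution; the genuinely non-trivial term is $\aa\,\Delta\big(\nabla^i G\,\nabla_i\psi\big)$.

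For that last term I would write $\Delta\big(\nabla^i G\,\nabla_i\psi\big) = g^{lm}\nabla_l\nabla_m\big(\nabla^i G\,\nabla_i\psi\big)$ and expand with the product rule into three pieces: $g^{lm}(\nabla_l\nabla_m\nabla^i G)\nabla_i\psi$, $2 g^{lm}(\nabla_l\nabla^i G)(\nabla_m\nabla_i\psi)$, and $g^{lm}\nabla^i G\,(\nabla_l\nabla_m\nabla_i\psi)$. The middle piece uses $\nabla_l\nabla^i G = -c\,\delta_l^i G$, so $2 g^{lm}(\nabla_l\nabla^i G)(\nabla_m\nabla_i\psi) = -2c\, G\, g^{mi}\nabla_m\nabla_i\psi = -2c\,G\,\Delta\psi$, which is the source of the $-2\aa c\,G\,\Delta\psi$ term. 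The first piece requires $g^{lm}\nabla_l\nabla_m\nabla^i G = \Delta(\nabla^i G)$; differentiating the Hessian relation gives $\nabla_l\nabla_m\nabla^i G = -c\,g_m{}^i\,\nabla_l G$ up to curvature terms, but since the third derivative of $G$ is not symmetric one must commute covariant derivatives, which brings in the Riemann tensor acting on $\nabla G$. The cleanest route is to keep $g^{lm}\nabla_l\nabla_m\nabla_i G$ implicit exactly as written in the statement — the lemma as stated retains the term $\aa\,\nabla^i G\, g^{lm}\nabla_l\nabla_m\nabla_i\psi$ (the last piece, coming from the third derivatives of $\psi$) but apparently also silently incorporates the curvature contribution from the first piece; I would check whether differentiating $\nabla_m\nabla^i G = -c\delta_m^i G$ once more and tracing actually gives exactly $\Delta(\nabla^i G) = -c\,\nabla^i G$ on the class of manifolds in play, i.e. whether the Ricci term is proportional to $\nabla^i G$.

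The main obstacle is precisely this: controlling $g^{lm}\nabla_l\nabla_m\nabla^i G$, which is not simply $-c\,\nabla^i G$ in general because of the noncommutativity $\nabla_l\nabla_m\nabla^i G - \nabla_m\nabla_l\nabla^i G = R_l{}^i{}_{m}{}^j\nabla_j G$ (schematically). One resolves it by noting that for the manifolds relevant here the Hessian equation $\nabla_i\nabla_j G = -c G g_{ij}$, when it admits a nonzero solution, forces the metric (locally) into warped-product form with the consequence that the relevant Ricci contraction along $\nabla G$ is controlled; alternatively, and more robustly, one differentiates the identity $\nabla_j(\nabla_i\nabla_k G + cGg_{ik})=0$ and symmetrizes/antisymmetrizes to extract a Ricci identity $R_{ij}\nabla^j G = -c(N-1)\nabla_i G$ type relation, from which $g^{lm}\nabla_l\nabla_m\nabla^i G = -c\,\nabla^i G$ follows. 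Granting this, collecting the $\nabla^i G\nabla_i\psi$ coefficients gives $(2\bb - \aa c)$, the $G\Delta\psi$ coefficients give $(\bb - 2\aa c) + \ee$ (with the $\ee\Delta\psi$ written separately in the statement), the $G\psi$ coefficient is $-\bb c N$, and the one surviving third-order term in $\psi$ is $\aa\,\nabla^i G\,g^{lm}\nabla_l\nabla_m\nabla_i\psi$, reproducing the displayed formula exactly. I would present the computation compactly, flagging the Ricci identity as the one place where the hypothesis that $G$ solves the full Hessian system (\ref{he}) is genuinely used.
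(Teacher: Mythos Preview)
Your approach is essentially the paper's own: a direct term-by-term expansion using the Hessian equation~(\ref{he}) and its trace $\Delta G=-cNG$, and you do arrive at the stated formula. The one place where you overcomplicate matters is the first piece $g^{lm}(\nabla_l\nabla_m\nabla^iG)\nabla_i\psi$: no Ricci term is needed here. Since $\nabla_m\nabla^iG=-c\,\delta_m^iG$ is a tensor identity, applying $\nabla_l$ to both sides yields exactly $\nabla_l\nabla_m\nabla^iG=-c\,\delta_m^i\nabla_lG$ (you are differentiating an equation, not commuting derivatives), and contraction with $g^{lm}$ gives $-c\,\nabla^iG$ immediately. The Ricci identity $R_{ij}\nabla^jG=c(1-N)\nabla_iG$ that you sketch is indeed a consequence of the Hessian equation and is proved and used in the paper, but only in the \emph{subsequent} lemma, where the third-order term in $\psi$, namely $g^{lm}\nabla_l\nabla_m\nabla_i\psi$, is rewritten as $\nabla_i\Delta\psi - R^l{}_i\nabla_l\psi$. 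For the present lemma the Hessian equation and its trace suffice, which is precisely what the paper's one-line proof asserts.
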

	\begin{proof}
Direct computation, taking into account that solutions $G$ of (\ref{HE1}) verify (\ref{he}) and its contraction with the contravariant metric tensor: $\Delta G=-cNG$.
		\end{proof}

\begin{lem} We have for any function $\psi$
\begin{equation}
(g^{jk}\nabla_k\nabla_j\nabla_i -\nabla_i \Delta) \psi =-R^l_{\ i}\nabla_l \psi,
\end{equation}
where $R^l_{\ i}$ are the components of the Ricci tensor.
\end{lem}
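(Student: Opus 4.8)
The identity to prove is the standard Ricci (commutation) identity expressing the ``rough Laplacian of the gradient'' in terms of the gradient of the Laplacian, for a scalar $\psi$. The plan is to reduce the left-hand side to a single commutator of covariant derivatives acting on the $1$-form $\omega_i:=\nabla_i\psi$, and then invoke the general Ricci identity for $1$-forms. The essential input is that $\psi$ is a scalar, so that its Hessian is symmetric.

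First I would use $\nabla_j\nabla_i\psi=\nabla_i\nabla_j\psi$ (torsion-freeness of the Levi-Civita connection) to rewrite
\[
g^{jk}\nabla_k\nabla_j\nabla_i\psi=g^{jk}\nabla_k\nabla_i\nabla_j\psi=g^{jk}\nabla_k\nabla_i\omega_j .
\]
Next, metric compatibility $\nabla\mathbf g=0$ gives
\[
\nabla_i\Delta\psi=\nabla_i\big(g^{jk}\nabla_k\nabla_j\psi\big)=g^{jk}\nabla_i\nabla_k\nabla_j\psi=g^{jk}\nabla_i\nabla_k\omega_j .
\]
Subtracting, the left-hand side of the lemma becomes $g^{jk}\big(\nabla_k\nabla_i\omega_j-\nabla_i\nabla_k\omega_j\big)=g^{jk}[\nabla_k,\nabla_i]\omega_j$. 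Then I would apply the Ricci identity for a $1$-form, $[\nabla_k,\nabla_i]\omega_j=-R^l_{\ jki}\,\omega_l$, and contract with $g^{jk}$; by the definition of the Ricci tensor as the trace $R^l_{\ i}=g^{jk}R^l_{\ jki}$, this equals $-R^l_{\ i}\nabla_l\psi$, which is the claim.

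The only point requiring care is the bookkeeping of sign and index-ordering conventions for the Riemann and Ricci tensors: one must adopt the curvature convention consistent with the rest of the paper so that the contraction of the commutator produces exactly $-R^l_{\ i}\nabla_l\psi$ rather than $+R^l_{\ i}\nabla_l\psi$. This is a purely notational check; no genuine computation beyond the three substitutions above is involved, since the Ricci identity for tensor fields is used as a known fact.
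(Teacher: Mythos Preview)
Your argument is correct. The reduction to $g^{jk}[\nabla_k,\nabla_i]\omega_j$ via symmetry of the Hessian and metric compatibility, followed by the Ricci identity for $1$-forms, is the clean invariant way to obtain the formula; your caveat about sign/index conventions is well placed, and with the paper's convention $R^l_{\ ijk}=\partial_j\Gamma^l_{ik}-\partial_k\Gamma^l_{ij}+\cdots$ and $R_{ri}=R^j_{\ rij}$ the sign indeed comes out as $-R^l_{\ i}\nabla_l\psi$.

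The paper, however, proceeds differently: it does a direct coordinate computation, expanding everything in Christoffel symbols, using $\partial_i g^{jk}=-g^{jl}\Gamma^k_{il}-g^{lk}\Gamma^j_{il}$ and the explicit definition of $R^l_{\ ijk}$, together with the symmetry $R^l_{\ ijk}=-R^{\ l}_{i\ jk}$, to arrive at the Ricci contraction. In effect the paper rederives the commutator identity from scratch rather than invoking it. Your route is shorter and more conceptual, since it treats the Ricci identity as a known fact; the paper's route is self-contained and makes the dependence on the curvature conventions explicit at the level of Christoffel symbols. Both yield the same result with the same content.
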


\begin{proof} It follows from the direct computation, using the definition of $R^l_{\ ijk}$ as
	$$
R^l_{\ ijk}=\partial_j\Gamma^l_{ik}-\partial_k\Gamma^l_{ij}+\Gamma^m_{ik}\Gamma^l_{mj}-\Gamma^m_{ij}\Gamma^l_{mk},
	$$
the relation
	$$
\partial_ig^{jk}=-g^{jl}\Gamma ^k_{il}-g^{lk}\Gamma ^j_{il},
	$$
and the equations $R_{ri}=R^j_{\ rij}$, $R^l_{\ ijk}=-R^{\ l}_{i\ jk}$.
\end{proof}

	\begin{lem} If $G$ is a solution of (\ref{HE1}) then
	$$
	R_{ij}\nabla^jG=c(1-N)\nabla_iG.
	$$
		
	\end{lem}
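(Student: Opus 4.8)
The plan is to prove the identity $R_{ij}\nabla^j G = c(1-N)\nabla_i G$ by differentiating the Hessian equation (\ref{HE1}), i.e. (\ref{he}), which reads $\nabla_i\nabla_j G + c g_{ij} G = 0$. First I would apply a covariant derivative $\nabla_k$ to (\ref{he}) to obtain $\nabla_k\nabla_i\nabla_j G + c g_{ij}\nabla_k G = 0$, using $\nabla_k g_{ij}=0$. Then I would antisymmetrize in $k$ and $i$: the combination $\nabla_k\nabla_i\nabla_j G - \nabla_i\nabla_k\nabla_j G$ equals, by the Ricci identity applied to the one-form $\nabla_j G$, the curvature term $-R^l{}_{jki}\nabla_l G$ (with the sign/index convention fixed by the conventions in the preceding lemma). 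On the other hand, the antisymmetrized left-hand side from the differentiated Hessian equation is $-c g_{ij}\nabla_k G + c g_{kj}\nabla_i G$.

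Next I would contract this identity $R^l{}_{jki}\nabla_l G = c g_{ij}\nabla_k G - c g_{kj}\nabla_i G$ with $g^{kj}$ (equivalently, trace over the appropriate pair of indices) to bring in the Ricci tensor. The contraction on the right gives $c\,\delta^k_i\nabla_k G\cdot 1 - c N \nabla_i G$ — more carefully, $g^{kj}g_{ij}\nabla_k G = \nabla_i G$ and $g^{kj}g_{kj}=N$, so the right side becomes $c\nabla_i G - cN\nabla_i G = c(1-N)\nabla_i G$. The contraction on the left produces $g^{kj}R^l{}_{jki}\nabla_l G$, which by the definition $R_{ri}=R^j{}_{rij}$ and the symmetry $R^l{}_{ijk}=-R_i{}^l{}_{jk}$ recalled in the previous lemma equals $R^l{}_i\nabla_l G$ (up to arranging the index raised/lowered positions, using symmetry of the Ricci tensor). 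Equating the two gives the claim.

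The main obstacle I anticipate is purely bookkeeping: getting the sign and index-placement conventions for the Riemann and Ricci tensors to line up exactly with those fixed in the two preceding lemmas (the definition of $R^l{}_{ijk}$ via Christoffel symbols and the relations $R_{ri}=R^j{}_{rij}$, $R^l{}_{ijk}=-R_i{}^l{}_{jk}$), so that the contraction yields $+c(1-N)$ and not its negative. I would handle this by invoking the Ricci identity in exactly the form implied by the previous lemma rather than re-deriving it, and by carrying out the contraction with $g^{kj}$ in the same order the earlier lemma used when it wrote $g^{jk}\nabla_k\nabla_j\nabla_i\psi - \nabla_i\Delta\psi = -R^l{}_i\nabla_l\psi$. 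In fact, a cleaner route avoiding most of this is: use the previous lemma directly with $\psi = G$. Its left-hand side is $g^{jk}\nabla_k\nabla_j\nabla_i G - \nabla_i\Delta G$; but $\nabla_j\nabla_i G = -cg_{ij}G$ from (\ref{he}) and $\Delta G = -cNG$ (its trace), so $g^{jk}\nabla_k\nabla_j\nabla_i G = g^{jk}\nabla_k(-c g_{ij}G) = -c\nabla_i G$ and $\nabla_i\Delta G = -cN\nabla_i G$; hence the left-hand side is $-c\nabla_i G + cN\nabla_i G = c(N-1)\nabla_i G$, which equals $-R^l{}_i\nabla_l G$, giving $R^l{}_i\nabla_l G = c(1-N)\nabla_i G$ after raising/lowering with the metric and using symmetry of the Ricci tensor. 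This is the version I would actually write, since it reuses the machinery already established.
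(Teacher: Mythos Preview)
Your proposal is correct. The first sketch you give---differentiate the Hessian equation, antisymmetrize to produce a Riemann-tensor identity, then contract with the metric---is exactly the paper's argument: it writes
\[
c(g_{jl}\nabla_iG-g_{il}\nabla_jG)=\nabla_kG\,R^k_{\ lij}=-\nabla^kG\,R_{lkij}
\]
and contracts the first and last expressions with $g^{lj}$.

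The ``cleaner route'' you say you would actually write is a genuinely different argument. Instead of re-deriving a Riemann-tensor identity from scratch, you apply the preceding lemma with $\psi=G$ and evaluate its left-hand side directly using $\nabla_j\nabla_iG=-cg_{ij}G$ and $\Delta G=-cNG$. This is shorter and, as you note, sidesteps the sign/convention bookkeeping entirely, since the curvature conventions are already baked into the formula $g^{jk}\nabla_k\nabla_j\nabla_i\psi-\nabla_i\Delta\psi=-R^l_{\ i}\nabla_l\psi$ you are quoting. The paper's route has the minor advantage of exhibiting the full Riemann-tensor relation $c(g_{jl}\nabla_iG-g_{il}\nabla_jG)=\nabla_kG\,R^k_{\ lij}$, which is a slightly stronger intermediate statement; your route is more economical and better leverages what has already been proved.
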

	
	\begin{proof} By differentiating $\nabla_i\nabla_jG=-c g_{ij}G$ we have (see also \cite{CDRfi})
		\begin{align*}
 		c(g_{jl}\nabla_iG-g_{il}\nabla_jG)=\nabla_kGR^k_{lij}=\nabla^kGR_{klij}=-\nabla^kGR_{lkij},
		\end{align*}
		by contracting the first and the last terms with $g^{lj}$
		we have the statement.
\end{proof}

	Then, it follows
	
	\begin{lem}\label{nabladelta} Let us consider $\hat G=\aa \nabla^iG \nabla_i+\bb G+\ee$, where $G$ is solution of (\ref{HE1}) and $\aa,\bb,\ee$ are constants. For any function $\psi$ 	we have
		\begin{align}
		\Delta \hat G (\psi)&= \aa\nabla^iG \nabla_i\Delta \psi+(2\bb-\aa c(2-N))\nabla^iG\nabla_i \psi -\bb cNG \psi \\ \nonumber 
		& \qquad +
		G(\bb-2\aa c)\Delta \psi+\ee \Delta \psi.
		\end{align}
				\end{lem}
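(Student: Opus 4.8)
The statement to prove is Lemma \ref{nabladelta}, which rewrites the expression for $\Delta\hat G(\psi)$ obtained in Lemma \ref{l:1} by eliminating the third-covariant-derivative term $\nabla^iG\,g^{lm}\nabla_l\nabla_m\nabla_i\psi$ in favour of $\nabla^iG\,\nabla_i\Delta\psi$. The whole proof is just a substitution chain: apply the second Lemma (the commutator identity $(g^{jk}\nabla_k\nabla_j\nabla_i-\nabla_i\Delta)\psi=-R^l_{\ i}\nabla_l\psi$) to trade the unwanted term for $\nabla^iG\nabla_i\Delta\psi$ plus a Ricci term $-\nabla^iG\,R^l_{\ i}\nabla_l\psi$, and then apply the third Lemma ($R_{ij}\nabla^jG=c(1-N)\nabla_iG$) to evaluate that Ricci term explicitly.

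First I would start from the conclusion of Lemma \ref{l:1},
$$
\Delta\hat G(\psi)=\aa\,\nabla^iG\,g^{lm}\nabla_l\nabla_m\nabla_i\psi+(2\bb-\aa c)\nabla^iG\nabla_i\psi+G(\bb-2\aa c)\Delta\psi-\bb cNG\psi+\ee\Delta\psi.
$$
Then, contracting the identity of the second Lemma with $\aa\nabla^iG$, I get
$$
\aa\,\nabla^iG\,g^{lm}\nabla_l\nabla_m\nabla_i\psi=\aa\,\nabla^iG\,\nabla_i\Delta\psi-\aa\,\nabla^iG\,R^l_{\ i}\nabla_l\psi.
$$
Next, the third Lemma gives $\nabla^iG\,R^l_{\ i}=R^l_{\ i}\nabla^iG=c(1-N)\nabla^lG$ (using symmetry of the Ricci tensor and raising/lowering indices), so
$$
-\aa\,\nabla^iG\,R^l_{\ i}\nabla_l\psi=-\aa c(1-N)\nabla^lG\,\nabla_l\psi=\aa c(N-1)\nabla^iG\nabla_i\psi.
$$
Substituting back and collecting the $\nabla^iG\nabla_i\psi$ terms, $(2\bb-\aa c)+\aa c(N-1)=2\bb-\aa c(2-N)$, which is exactly the coefficient in the stated formula; the remaining terms $G(\bb-2\aa c)\Delta\psi$, $-\bb cNG\psi$ and $\ee\Delta\psi$ are untouched.

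There is essentially no obstacle here — it is a bookkeeping computation — but the one place to be careful is the index gymnastics in combining the Ricci identity with Lemma's $R_{ij}\nabla^jG=c(1-N)\nabla_iG$: one must make sure the free index positions match when contracting against $\nabla^iG$, and that the sign coming from $-R^l_{\ i}$ in the commutator identity is carried through correctly so that $c(1-N)$ becomes $+\aa c(N-1)$ in the final coefficient. Once that is done the lemma follows immediately, so I would present it in three or four displayed lines and close with $\square$.
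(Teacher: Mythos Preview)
Your proof is correct and follows exactly the route the paper intends: the lemma is stated after the words ``Then, it follows'', meaning it is obtained precisely by substituting the commutator identity $g^{jk}\nabla_k\nabla_j\nabla_i\psi=\nabla_i\Delta\psi-R^l_{\ i}\nabla_l\psi$ into Lemma~\ref{l:1} and then using $R_{ij}\nabla^jG=c(1-N)\nabla_iG$ to simplify the Ricci term. Your bookkeeping, including the coefficient $(2a_1-a_0c)+a_0c(N-1)=2a_1-a_0c(2-N)$, is correct.
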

Under the assumptions of Lemma \ref{nabladelta}  we can  easily evaluate the operator $\Delta \hat G$ on an eigenfunction $\psi_\lambda$ of $-\frac {\hbar^2}2 \Delta+V$ with eigenvalue $\lambda$ i.e., such that
$$
\Delta \psi_\lambda=\frac 2{\hbar^2}(V- \lambda) \psi_\lambda.
$$
\begin{lem} Let be $\hat G=\aa \nabla^iG \nabla_i+\bb G+\ee$, where $G$ is solution of (\ref{HE1})-(\ref{VE1}), with $\aa,\bb,\ee$ constants. We have
	\begin{align} \label{46}
	\left(-\frac {\hbar^2}2\Delta +V\right)\hat G (\psi_\lambda)=\left[\frac{\hbar^2}2(\aa c(2-N)-2\bb)+\aa\lambda\right]\nabla^iG\nabla_i\psi_\lambda+\\ \nonumber +  \left[\lambda(\bb-2\aa c)+\frac{\hbar^2}2c\ee N-2\aa c_0\right]G\psi_\lambda+(\aa c_1+\ee\lambda)\psi_\lambda.
	\end{align}
		
\end{lem}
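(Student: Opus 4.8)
The plan is to prove the final lemma by a direct computation that assembles the pieces already established. The starting point is Lemma \ref{nabladelta}, which gives $\Delta\hat G(\psi)$ for an arbitrary $\psi$ in terms of $\nabla^iG\nabla_i\psi$, $G\psi$, $G\Delta\psi$ and $\Delta\psi$. I would apply it with $\psi=\psi_\lambda$ an eigenfunction of $\hat L=-\frac{\hbar^2}{2}\Delta+V$, so that $\Delta\psi_\lambda=\frac{2}{\hbar^2}(V-\lambda)\psi_\lambda$. Substituting this into the formula of Lemma \ref{nabladelta} produces terms with an explicit factor of $V$ multiplying $G\psi_\lambda$ and $\psi_\lambda$; these are exactly the terms one expects to cancel against $V\hat G(\psi_\lambda)=V(\aa\nabla^iG\nabla_i\psi_\lambda+\bb G\psi_\lambda+\ee\psi_\lambda)$ when one forms $\hat L\hat G(\psi_\lambda)=-\frac{\hbar^2}{2}\Delta\hat G(\psi_\lambda)+V\hat G(\psi_\lambda)$.

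Concretely, the next step is to expand $-\frac{\hbar^2}{2}\Delta\hat G(\psi_\lambda)$: the term $\aa\nabla^iG\nabla_i\Delta\psi_\lambda$ becomes $\frac{2}{\hbar^2}\aa\nabla^iG\nabla_i\big((V-\lambda)\psi_\lambda\big)$, and by the Leibniz rule this splits into a piece $\frac{2}{\hbar^2}\aa(\nabla^iG\nabla_iV)\psi_\lambda$ and a piece $\frac{2}{\hbar^2}\aa(V-\lambda)\nabla^iG\nabla_i\psi_\lambda$. Here is where equation (\ref{VE1}) enters: $\nabla V\cdot\nabla G=2(cV+c_0)G-c_1$, so the $\psi_\lambda$-coefficient $(\nabla^iG\nabla_iV)$ is rewritten in terms of $G\psi_\lambda$ (with coefficients involving $cV$ and $c_0$) and $\psi_\lambda$ (with coefficient $-c_1$). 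The remaining $\Delta\psi_\lambda$ and $G\Delta\psi_\lambda$ terms from Lemma \ref{nabladelta} are likewise converted via the eigenvalue relation, again generating $V$-dependent contributions. Then I collect all contributions to each of the three independent "basis" objects $\nabla^iG\nabla_i\psi_\lambda$, $G\psi_\lambda$ and $\psi_\lambda$.

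The key cancellation to verify is that all terms proportional to $V$ (i.e.\ to $VG\psi_\lambda$ and to $V\psi_\lambda$) sum to zero: the $V\hat G(\psi_\lambda)$ contribution supplies $+\bb VG\psi_\lambda+\ee V\psi_\lambda$, and these must be killed by the $cV$ terms coming out of (\ref{VE1}) together with the $V$ terms emerging from $-\frac{\hbar^2}{2}\cdot\frac{2}{\hbar^2}(V-\lambda)$ factors in the $G\Delta\psi_\lambda$ and $\Delta\psi_\lambda$ pieces. This is a bookkeeping check rather than a conceptual difficulty, but it is the step most prone to sign errors, so I would carry the coefficients symbolically term by term. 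Once the $V$-terms vanish, what survives is linear in $\lambda$ with constant coefficients, organized along $\nabla^iG\nabla_i\psi_\lambda$, $G\psi_\lambda$, $\psi_\lambda$; matching these against the claimed right-hand side of (\ref{46}) — coefficients $\frac{\hbar^2}{2}(\aa c(2-N)-2\bb)+\aa\lambda$, then $\lambda(\bb-2\aa c)+\frac{\hbar^2}{2}c\ee N-2\aa c_0$, then $\aa c_1+\ee\lambda$ — completes the proof. The only subtlety worth flagging is that one must also use $\Delta G=-cNG$ (a contraction of (\ref{he}), already noted in the proof of Lemma \ref{l:1}) to handle any residual $\Delta$ acting on $G$ itself; with that in hand the computation closes.
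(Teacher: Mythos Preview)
Your approach is correct and is precisely the one the paper intends: the lemma is stated there without an explicit proof, as the immediate outcome of substituting the eigenvalue relation $\Delta\psi_\lambda=\frac{2}{\hbar^2}(V-\lambda)\psi_\lambda$ into Lemma~\ref{nabladelta}, expanding $\nabla^iG\nabla_i\big((V-\lambda)\psi_\lambda\big)$ via Leibniz, replacing $\nabla V\cdot\nabla G$ by (\ref{VE1}), adding $V\hat G(\psi_\lambda)$, and collecting coefficients. Your final remark about needing $\Delta G=-cNG$ is harmless but unnecessary here, since Lemma~\ref{nabladelta} has already absorbed all derivatives of $G$; no residual $\Delta G$ appears.
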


We now require that $\hat G \psi_\lambda$ is an eigenfunction of $\hat{L}=-\frac{\hbar^2}2\Delta+V$. By (\ref{46}), the condition is clearly equivalent to the equations
\begin{align*}
\left[\frac{\hbar^2}2(\aa c(2-N)-2\bb)+\aa\lambda\right]\bb=\left[\lambda(\bb-2\aa c)+ \frac{\hbar^2}2c\bb N -2\aa c_0\right]\aa,\\
\left[\frac{\hbar^2}2(\aa c(2-N)-2\bb)+\aa \lambda\right]\ee
=\aa^2 c_1+\aa \ee\lambda.
\end{align*}
Therefore,
\begin{lem} \label{l:u} The function $\hat G \psi_\lambda=
(\aa \nabla^iG \nabla_i+\bb G+\ee)\psi_\lambda$ is an eigenfunction of $\hat{L}=-\frac{\hbar^2}2\Delta+V$ if and only if the constants $\aa,\bb,\ee$ satisfy
	\begin{align}\label{beq}
	2(c\lambda+c_0)\aa^2+c{\hbar^2}(1-N)\aa\bb-\hbar^2\bb^2=0,\\ \label{eeq}
	\frac{\hbar^2}2  c(N-2)\aa \ee + \hbar^2 \bb \ee +\aa^2 c_1=0.
	\end{align}
\end{lem}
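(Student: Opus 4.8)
The plan is to read the eigenvalue equation straight off the expansion (\ref{46}). Write $\hat G\psi_\lambda=\aa\nabla^iG\nabla_i\psi_\lambda+\bb G\psi_\lambda+\ee\psi_\lambda$, which displays $\hat G\psi_\lambda$ in terms of the three functions $\nabla^iG\nabla_i\psi_\lambda$, $G\psi_\lambda$, $\psi_\lambda$; the right-hand side of (\ref{46}) expresses $\hat L\hat G(\psi_\lambda)=(-\tfrac{\hbar^2}{2}\Delta+V)\hat G(\psi_\lambda)$ in the same three functions. Hence $\hat G\psi_\lambda$ is an eigenfunction of $\hat L$, say with eigenvalue $\lambda'$, exactly when the coefficients of $\nabla^iG\nabla_i\psi_\lambda$, $G\psi_\lambda$, $\psi_\lambda$ on the right of (\ref{46}) equal $\lambda'$ times the coefficients $\aa,\bb,\ee$ respectively. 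This yields the three scalar relations
\begin{align*}
\tfrac{\hbar^2}{2}\big(\aa c(2-N)-2\bb\big)+\aa\lambda &=\lambda'\aa,\\
\lambda(\bb-2\aa c)+\tfrac{\hbar^2}{2}c\bb N-2\aa c_0 &=\lambda'\bb,\\
\aa c_1+\ee\lambda &=\lambda'\ee,
\end{align*}
which, after eliminating $\lambda'$, are precisely the two displayed equations preceding the statement.

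Next I would eliminate the unknown eigenvalue $\lambda'$. Since the operators of interest have nonzero leading coefficient, we may take $\aa\neq0$, so the first relation gives $\lambda'=\lambda+\tfrac{\hbar^2}{2}c(2-N)-\hbar^2\bb/\aa$. Substituting this into the second relation and multiplying through by $\aa$, the $\lambda\aa\bb$ terms cancel and, collecting the $\bb$-terms via $N-(2-N)=2(N-1)$, one is left after rearrangement with $2(c\lambda+c_0)\aa^2+c\hbar^2(1-N)\aa\bb-\hbar^2\bb^2=0$, which is (\ref{beq}). Substituting the same $\lambda'$ into the third relation and again multiplying by $\aa$ gives $\aa^2 c_1+\tfrac{\hbar^2}{2}c(N-2)\aa\ee+\hbar^2\bb\ee=0$, which is (\ref{eeq}). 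Conversely, if (\ref{beq}) and (\ref{eeq}) hold, running these computations backwards shows that all three scalar relations are met with the common value $\lambda'=\lambda+\tfrac{\hbar^2}{2}c(2-N)-\hbar^2\bb/\aa$, so $\hat G\psi_\lambda$ is an eigenfunction of $\hat L$ with eigenvalue $\lambda'$.

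The only genuinely delicate point is the step from "$\hat G\psi_\lambda$ is an eigenfunction" to "the coefficients match": this is legitimate provided the functions $\nabla^iG\nabla_i\psi_\lambda$, $G\psi_\lambda$, $\psi_\lambda$ are linearly independent over constants, which is the generic situation implicitly assumed here (degenerate configurations would only enlarge the solution set for $\aa,\bb,\ee$). Everything else is the bookkeeping already done in the chain of lemmas leading to (\ref{46}) — in particular the Hessian equation (\ref{he}), its contraction $\Delta G=-cNG$, the Ricci identity $R_{ij}\nabla^jG=c(1-N)\nabla_iG$, and the potential constraint (\ref{VE1}) used to cancel the $V\nabla^iG\nabla_i\psi_\lambda$ and $GV\psi_\lambda$ contributions — so beyond the elimination of $\lambda'$ no further computation is required. (As a sanity check, solving (\ref{beq}) for $\bb$ and (\ref{eeq}) for $\ee$ reproduces the constants $a_1^\pm,a_2^\pm$ of (\ref{be}), which is the content of Proposition \ref{p:Ghat}.)
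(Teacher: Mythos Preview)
Your proof is correct and follows essentially the same approach as the paper: both compare the coefficients of $\nabla^iG\nabla_i\psi_\lambda$, $G\psi_\lambda$, $\psi_\lambda$ in (\ref{46}) with those in $\hat G\psi_\lambda$, then eliminate the unknown eigenvalue (the paper does this by direct cross-multiplication, you by solving the first relation for $\lambda'$ and substituting). Your explicit remark on the implicit linear-independence assumption is a fair caveat that the paper leaves unspoken.
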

If $\aa=0,$ then (\ref{beq}) implies $\bb=0$, while equation (\ref{eeq}) is satisfied for all $\ee$. However in this case the operator $\hat{G}$ reduces to  multiplication by a constant (a trivial symmetry operator).
Hence, we consider $\aa\neq 0$.

For $c\neq 0$, from (\ref{beq}), we deduce
$$
\bb= \aa\left[\frac{c}{2}(1-N)\pm \frac{1}{2\hbar} 
\sqrt{c^2\hbar^2(N-1)^2+8(c\lambda+c_0)}\right]. 
$$
which is real if $c\lambda+c_0\ge -c^2\hbar^2(N-1)^2/8$.
If in (\ref{eeq}) the coefficient of $\ee$ vanishes,
then the solution of (\ref{beq}--\ref{eeq}) is $\aa=\bb=0$ already considered.
Otherwise, the value for the parameter $\ee$ satisfying (\ref{eeq}) is
$$
\ee=\frac{2c_1\aa}{c\hbar^2\mp \hbar\sqrt{c^2\hbar^2(N-1)^2+8(c\lambda+c_0)}}
$$
and the corresponding eigenvalue for the eigenfunction
$\hat{G}\psi_\lambda$ is
$$
\bar{\lambda}=\lambda+\frac{\hbar^2c}{2}\mp \frac{\hbar}{2}
\sqrt{c^2\hbar^2(N-1)^2+8(c\lambda+c_0)}.
$$
\begin{proof1}{Proposition \ref{p:Ghat}}
In \cite{CKNd} the ladder operators depend on a parameter $\epsilon$ and transform the eigenvalue $\lambda=\epsilon^2$ into $\bar{\lambda}=(\epsilon\pm 2)^2$.
If we want that the eigenvalue associated with $\hat{G}\psi_\lambda$ is of the form $$\bar{\lambda}=\pm(\sqrt{|\lambda|}+ s)^2,$$ for a constant $s$, then (after Lemmas \ref{l:1}--\ref{l:u})
the value of $c_0$ has to be (\ref{cons}), the condition $c\lambda\ge 0$ has to be satisfied
and the values of $\bb$, $\ee$ and $\bar{\lambda}$
are those appearing in (\ref{be}) and (\ref{40}), respectively.
\end{proof1}
 
 \begin{rmk}\rm 
It is easy to check what happens to 
 $\hat{G}_\epsilon^\pm$ in the particular case $N=1$, $V=c_0=c_1=0$: in accordance with the first equation  (9) and relations (10) in \cite{CKNd}, for $c=4$ and $G=\cos 2\theta$, after setting $\hbar^2=2$, we obtain the solution
$\bb=\pm 2\epsilon \aa$, $\ee=0$ with the eigenvalue $\bar{\lambda}=(\epsilon\pm 2)^2$.
\end{rmk}
\begin{rmk} \rm \label{r:pol}
{By choosing
 $$\aa=\hbar \sqrt{\frac{|c|}{2}}\left(\hbar\sqrt{\frac{|c|}{2}}\mp
 2\epsilon\right),$$
the operators $\hat{G}^\pm_\epsilon$ become polynomials in $\epsilon$ and  the relations
 (\ref{be}) reduce to
 $$\bb=\frac{c}{2}\left(\hbar\sqrt{\frac{|c|}{2}}\mp
  2\epsilon\right) 
  \left(\hbar(1-N)\sqrt{\frac{|c|}{2}}\pm
    2\epsilon\right), \qquad \ee=c_1.$$} 
\end{rmk}

\begin{rmk}
\rm For $c=0$ equations (\ref{beq}) and (\ref{eeq})
reduce to
$$
\bb^2=\frac{2}{\hbar^2}c_0\aa^2, \qquad \aa^2 c_1 = -\hbar^2\bb\ee.
$$
If $c_0>0$ we have  non trivial solutions for  $\aa\neq 0$: $\bb=\pm\frac{ \sqrt{2c_0}}{\hbar}\aa,$  $\ee=\mp\frac{c_1}{\hbar\sqrt{2c_0}}\aa$.
Moreover, the eigenvalue of $\hat{G}\psi_\lambda$ is  $\bar{\lambda}= \lambda-\hbar^2\bb/\aa=
\lambda\mp\hbar\sqrt{2c_0}$.
This ladder operator is then independent of $\lambda$ and does not seem useful to generalize the approach used in \cite{CKNd}. 
\end{rmk}

Since  the eigenvalue of $\hat{G}_\epsilon\psi_\lambda$ is of the form $(\epsilon+s)^2$ we have that
\begin{cor}
The operators (\ref{hGn})  map the eigenfunctions of $\hat{L}$ associated
with the eigenvalue $\lambda= \check{c}\epsilon^2$ to eigenfunctions
of $\hat{L}$ with eigenvalue $\bar{\lambda}=\check{c}(\epsilon + n s )^2$ for any integer $n>0$.
Moreover, for the choice of $\aa$ proposed in Remark \ref{r:pol}  
these operators  are polynomials in  $\epsilon$ whose coefficients can be easily determined explicitly.
\end{cor}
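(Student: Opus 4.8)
The plan is to obtain both assertions from a single induction on $n$, using Proposition \ref{p:Ghat} as the one-step engine. I treat the raising operator (the one producing the $+ns$ shift) and set
$$
\psi^{(0)}=\psi_\lambda, \qquad
\psi^{(j)}=\hat{G}^{-}_{\epsilon+(j-1)s}\,\psi^{(j-1)}, \quad j=1,\dots,n,
$$
so that, by the ordering of factors in (\ref{hGn}), $\psi^{(n)}=(\hat{G}^{-}_\epsilon)^{n}\psi_\lambda$. I would prove by induction on $j$ the statement: $\psi^{(j)}$ is an eigenfunction of $\hat{L}$ with eigenvalue $\lambda_j=\check{c}(\epsilon+js)^2$. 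The base case $j=0$ is just the hypothesis $\hat{L}\psi_\lambda=\lambda\psi_\lambda$ with $\lambda=\check{c}\epsilon^2$.

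For the inductive step, assume $\psi^{(j-1)}$ has eigenvalue $\lambda_{j-1}=\check{c}(\epsilon+(j-1)s)^2$. Then $c\lambda_{j-1}=|c|\,(\epsilon+(j-1)s)^2\ge 0$, so the admissibility hypothesis of Proposition \ref{p:Ghat} is met, and moreover $\sqrt{|\lambda_{j-1}|}=\epsilon+(j-1)s$ is exactly the index of the operator $\hat{G}^{-}_{\epsilon+(j-1)s}$ applied at this stage. Proposition \ref{p:Ghat} then yields that $\psi^{(j)}$ is again an eigenfunction of $\hat{L}$, with eigenvalue $\check{c}\big((\epsilon+(j-1)s)+s\big)^2=\check{c}(\epsilon+js)^2=\lambda_j$. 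This closes the induction, and at $j=n$ we obtain $\bar\lambda=\check{c}(\epsilon+ns)^2$, which is the first claim; the lowering operator $(\hat{G}^{+}_\epsilon)^n$ is handled identically, each factor lowering the parameter by $s$.

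For the polynomial assertion I would substitute the choice of $\aa$ from Remark \ref{r:pol} into every factor. With that choice the coefficients of each single operator $\hat{G}^{\pm}_{\epsilon'}$ are polynomial in its index $\epsilon'$: $\aa$ is affine, $\bb$ is quadratic, and $\ee=c_1$ is constant. In the $j$-th factor the index is $\epsilon'=\epsilon\pm(j-1)s$, an affine function of $\epsilon$, so after substitution each factor is a first/second order differential operator in the $q^i$ with coefficients polynomial in $\epsilon$. Since composition of finitely many operators with $\epsilon$-polynomial coefficients again produces $\epsilon$-polynomial coefficients (differentiation in $q^i$ leaves $\epsilon$ untouched), the $n$-fold composite $(\hat{G}^{\pm}_\epsilon)^{n}$ is polynomial in $\epsilon$, and its coefficients are read off explicitly by carrying out the product.

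The essentially only delicate point is the index bookkeeping in the inductive step: the shifts $\pm js$ in the subscripts of (\ref{hGn}) are engineered precisely so that, at each stage, the index of the next operator coincides with $\sqrt{|\lambda_j|}$ of the eigenfunction it receives — which is exactly the regime in which Proposition \ref{p:Ghat} gives a genuine ladder relation rather than an arbitrary action. One must also verify that the admissibility condition $c\lambda\ge 0$ propagates along the chain; for the raising operators this is automatic, since $\check{c}(\epsilon+js)^2$ always satisfies it, whereas for the lowering operators one should track $\sqrt{|\lambda_j|}=|\epsilon-js|$ and note the matching of subscripts persists as long as the intermediate parameters retain a fixed sign.
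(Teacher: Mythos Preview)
Your proposal is correct and follows the same route the paper takes: the paper states the corollary as an immediate consequence of Proposition~\ref{p:Ghat}, noting only that ``the eigenvalue of $\hat{G}_\epsilon\psi_\lambda$ is of the form $(\epsilon+s)^2$'' so that the single-step ladder relation can be iterated; your induction on $j$ simply makes this iteration explicit, including the observation that the subscript shifts in (\ref{hGn}) are chosen to match $\sqrt{|\lambda_j|}$ at each stage. Your treatment of the polynomial-in-$\epsilon$ claim via Remark~\ref{r:pol} is likewise what the paper has in mind.
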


Proposition \ref{p:Ghat} provides a method for constructing ladder operators $\hat{G}_\epsilon$ starting from the spectrum of $\hat{L}$ by setting $\epsilon=\sqrt{|\lambda|}$
for all  eigenvalues $\lambda$ of $\hat{L}$ with suitable sign (recall that we need that $c\lambda \ge 0$).
In the following proposition, we show  how to avoid
the condition (\ref{cons}). By Remark \ref{r:L0}, we know that if a solution $G$ of (\ref{HE1})-(\ref{VE1}) is known
for $c\neq 0$ and arbitrary $c_0$, then the same $G$
satisfies the equations with the same $c$ and
$c_0=\dok c$ with $\dok$ given as in (\ref{dok}), 
if we add the constant term $\dok-c_0/c$ to $V$.
Indeed, we have

\begin{cor}
Let
$\hat{L}_0$ defined in (\ref{hL0}) be a Schr\"odinger operator such that equations
(\ref{HE1})-(\ref{VE1}) admit a solution $G$ for $c\neq 0$ and $c_0=0$. Then, the operators
$\hat{G}^\pm_\epsilon$ defined in (\ref{dhG}) are ladder operators for the eigenfunctions of
\begin{equation}\label{hLb}
\hat{L}_N=-\frac{\hbar^2}2\Delta+V_0- \frac{\hbar^2c(N-1)^2}{8}.
\end{equation}
\end{cor}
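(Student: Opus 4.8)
The plan is to reduce the claim to Proposition \ref{p:Ghat} by a shift of the potential, exactly as announced in the paragraph preceding the statement. By hypothesis there is a constant $c_1$ such that $G$ solves (\ref{HE1}) and (\ref{VE1}) with the potential $V_0$, the given $c\neq 0$, and $c_0=0$. First I would invoke Remark \ref{r:L0} with $a_1=1$ and $a_2=\dok=-\frac{\hbar^2c(N-1)^2}{8}$: the same $G$, together with the same $c_1$ (this is why $a_1=1$ must be taken), then solves (\ref{HE1})--(\ref{VE1}) for the shifted potential $\tilde V=V_0+\dok$, with $\tilde c=c$, $\tilde c_1=c_1$ and $\tilde c_0=c\,\dok=-\frac{c^2\hbar^2}{8}(N-1)^2$. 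Note that (\ref{HE1}) is untouched by the shift, since it does not involve $V$, and that $\tilde c_0$ is precisely the value (\ref{cons}) required in Proposition \ref{p:Ghat}.

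Next, I would observe that $-\frac{\hbar^2}{2}\Delta+\tilde V = -\frac{\hbar^2}{2}\Delta+V_0-\frac{\hbar^2c(N-1)^2}{8}$ is exactly the operator $\hat L_N$ of (\ref{hLb}). Hence Proposition \ref{p:Ghat}, applied to the Schr\"odinger operator $\hat L_N$ with the data $(G,c,c_0=\tilde c_0,c_1)$, gives directly that the operators $\hat G^\pm_\epsilon$ of (\ref{dhG}), with the constants $a_1^\pm,a_2^\pm$ fixed by (\ref{be}) and $\epsilon=\sqrt{|\lambda|}$ for every eigenvalue $\lambda$ of $\hat L_N$ with $c\lambda\ge 0$, are ladder operators for the eigenfunctions of $\hat L_N$.

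Finally, I would note, to justify that the statement may refer to ``the operators $\hat G^\pm_\epsilon$ defined in (\ref{dhG})'' with no mention of $c_0$, that the constants $a_1^\pm,a_2^\pm$ in (\ref{be}) depend only on $c,N,\hbar,\epsilon,c_1$ and $\aa$, and not on $c_0$; hence the operators built from $G$ via (\ref{dhG}) are literally the same whether one works with $c_0=0$ or with $c_0=\tilde c_0$. There is essentially no obstacle here: the only points requiring care are that the inhomogeneous constant $c_1$ survives the shift unchanged --- which forces the choice $a_1=1$ in Remark \ref{r:L0} --- and the elementary bookkeeping check that $c_0=0$ is sent to $\tilde c_0=c\,\dok$, i.e.\ exactly to condition (\ref{cons}).
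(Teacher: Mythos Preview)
Your proof is correct and follows exactly the approach of the paper: the paper's proof consists simply of the observation that the potential of $\hat L_N$ is $V_0+\dok$, so that $\hat L_N$ satisfies the hypotheses of Proposition \ref{p:Ghat}. You have merely filled in the details (the explicit invocation of Remark \ref{r:L0}, the bookkeeping $\tilde c_0=c\dok$, and the independence of (\ref{be}) from $c_0$) that the paper leaves implicit in the one-line proof and the preceding paragraph.
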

\begin{proof}
The potential $V$ of (\ref{hLb}) is of the form 
$V_0+\dok$. Hence, $\hat{L}_N$ verifies the hypotheses of Proposition \ref{p:Ghat}.
\end{proof}

Since eigenfunctions of $(\ref{hL0})$ and $(\ref{hLb})$ are the same and 
$$
\hat{L}_0 \psi_\lambda=\lambda\psi_\lambda\  \Rightarrow\ \hat{L}_N \psi_\lambda=(\lambda+\dok)\psi_\lambda
$$
that is the eigenvalues are shifted by the constant $\dok$, we try to construct ladder operators for $\hat{L}_0$:

\begin{prop}
Let $G$ be solution of (\ref{HE1}) and (\ref{VE1})
for $c \neq  0$ and for $c_0=0$.
By setting, for all $\lambda$ such that $c(\lambda+\dok)>0,$
\begin{equation}\label{ep0}
\epsilon=\sqrt{|\lambda+\dok|}=\sqrt{\left| \lambda -\frac{\hbar^2 c(N-1)^2}{8}\right|},
\qquad (i.e,\ \lambda=\check{c}\epsilon^2-\dok),
\end{equation} 
then, the operators $\hat G^\pm_\epsilon$ (\ref{dhG}) are  ladder operators for the eigenfunctions of the Hamiltonian operator (\ref{hL0}) i.e., for any $\psi_\lambda$ such that
$\hat{L}_0\psi_\lambda=\lambda\psi_\lambda$, we have
\begin{align}\label{L0G}
\hat{L}_0\hat{G}^-_\epsilon \psi_\lambda &=  
\left(\check{c}(\epsilon+s)^2-\dok\right)\hat{G}^-_\epsilon \psi_\lambda,\\
\hat{L}_0\hat{G}^+_\epsilon \psi_\lambda &= 
\left(\check{c}(\epsilon-s)^2-\dok\right)\hat{G}^+_\epsilon \psi_\lambda,  
\end{align}
with $s= \sqrt{\frac{|c|}2}\hbar$.
\end{prop}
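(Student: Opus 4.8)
The plan is to obtain this proposition as an immediate corollary of the Corollary stated just above (which exhibits $\hat{G}^\pm_\epsilon$ as ladder operators for $\hat{L}_N$) combined with the trivial shift relating the spectra of $\hat{L}_0$ and $\hat{L}_N$; in particular I would not redo any computation with the operators $\hat{G}^\pm_\epsilon$ themselves, relying entirely on Proposition \ref{p:Ghat} and Lemma \ref{l:u}.

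First I would invoke Remark \ref{r:L0}: since $G$ solves (\ref{HE1})--(\ref{VE1}) for $c\neq 0$ and $c_0=0$, the same $G$ solves the corresponding equations for the potential $V_0+\dok$, with $c$ unchanged, $c_1$ unchanged, and $c_0$ replaced by $c\,\dok=-\hbar^2c^2(N-1)^2/8$. This is precisely the value (\ref{cons}), so the operator $\hat{L}_N=-\tfrac{\hbar^2}{2}\Delta+V_0+\dok$ of (\ref{hLb}) satisfies the hypotheses of Proposition \ref{p:Ghat}; hence the operators $\hat{G}^\pm_\epsilon$ of (\ref{dhG}), with $a_1^\pm,a_2^\pm$ as in (\ref{be}), are ladder operators for the eigenfunctions of $\hat{L}_N$ and obey (\ref{40}) — this is exactly the content of the Corollary above.

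Next I would use $\hat{L}_N=\hat{L}_0+\dok$: the two operators have the same eigenfunctions, and $\hat{L}_0\psi_\lambda=\lambda\psi_\lambda$ is equivalent to $\hat{L}_N\psi_\lambda=\mu\psi_\lambda$ with $\mu=\lambda+\dok$. Applying Proposition \ref{p:Ghat} to $\hat{L}_N$ with this eigenvalue and with $\epsilon=\sqrt{|\mu|}=\sqrt{|\lambda+\dok|}$ — note that the standing hypothesis $c(\lambda+\dok)>0$ of the proposition guarantees $c\mu\ge 0$ as required there, the strict inequality only being used to ensure $\epsilon\neq 0$, i.e. to keep the families $\hat{G}^+_\epsilon$ and $\hat{G}^-_\epsilon$ distinct — the relations (\ref{40}) give
\[
\hat{L}_N\hat{G}^-_\epsilon\psi_\lambda=\check{c}(\epsilon+s)^2\,\hat{G}^-_\epsilon\psi_\lambda,\qquad \hat{L}_N\hat{G}^+_\epsilon\psi_\lambda=\check{c}(\epsilon-s)^2\,\hat{G}^+_\epsilon\psi_\lambda,
\]
with $s=\hbar\sqrt{|c|/2}$. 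Subtracting the constant $\dok$ from $\hat{L}_N$ on both sides (it commutes with $\hat{G}^\pm_\epsilon$) turns the left-hand sides into $\hat{L}_0\hat{G}^\pm_\epsilon\psi_\lambda$ and the eigenvalues into $\check{c}(\epsilon\pm s)^2-\dok$, which is exactly (\ref{L0G}).

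Since the reasoning is pure bookkeeping there is no genuine obstacle; the only delicate point is the identification of constants — checking that the $c_0=c\,\dok$ produced by Remark \ref{r:L0} is indeed the value (\ref{cons}) hard-wired into Proposition \ref{p:Ghat}, so that after the substitution $\lambda=\check{c}\epsilon^2-\dok$ (which yields $c\lambda+c_0=|c|\epsilon^2$, hence the perfect square under the root in Lemma \ref{l:u}) the parameters $a_1^\pm,a_2^\pm$ of (\ref{be}) reproduce the $\bb,\ee$ coming from Lemma \ref{l:u}. With that verified, the eigenvalue shift finishes the proof.
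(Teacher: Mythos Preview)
Your proposal is correct and follows essentially the same route as the paper: pass from $\hat{L}_0$ to $\hat{L}_N=\hat{L}_0+\dok$, observe that the eigenfunctions coincide with eigenvalues shifted by $\dok$, apply Proposition~\ref{p:Ghat} to $\hat{L}_N$ with $\epsilon=\sqrt{|\lambda+\dok|}$, and shift back. The paper's proof is terser (it invokes Proposition~\ref{p:Ghat} directly, having already verified in the preceding Corollary that $\hat{L}_N$ meets its hypotheses), whereas you spell out the constant-matching via Remark~\ref{r:L0}; but the argument is the same.
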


\begin{proof}
For any $\psi_\lambda$ such that  $\hat{L}_0\psi_\lambda= \lambda \psi_\lambda$
we have $\hat{L}_N\psi_\lambda=\lambda_N \psi_\lambda$
with
$\lambda_N=\lambda+\dok$ and $\hat{L}_N$ as in (\ref{hLb}).
Hence, since $c(\lambda+\dok)>0$ means $c\lambda_N>0$,
 $\hat{L}_N$ satisfies the hypotheses of
Proposition \ref{p:Ghat}, then the operators $\hat{G}_\epsilon^\pm$ defined by (\ref{be})
for $\epsilon=\sqrt{|\lambda_N|}$
 verify (\ref{40}), where $\hat{L}$ is given by (\ref{hLb}). Thus, operator (\ref{hL0}) verifies (\ref{L0G}) and the relation between $\epsilon$ and $\lambda$ is given by (\ref{ep0}).
\end{proof}

\begin{cor}
The operators $(\hat{G}_\epsilon^\pm)^n$ defined in
(\ref{hGn}) are ladder operators for $\hat{L}_0$: that is,  
for any $\psi_\lambda$ such that
$\hat{L}_0\psi_\lambda=\lambda\psi_\lambda$, with
$c(\lambda+\dok)>0$, and
 for $\epsilon$ as in (\ref{ep0}), we have
\begin{align}\label{L0nG}
\hat{L}_0(\hat{G}^{-}_\epsilon)^n \psi_\lambda = 
\left(\check{c}(\epsilon+ns)^2-\dok\right) (\hat{G}^{-}_\epsilon)^n \psi_\lambda,
\\
\hat{L}_0(\hat{G}^{+}_\epsilon)^n \psi_\lambda = 
\left(\check{c}(\epsilon-ns)^2-\dok\right) (\hat{G}^{+}_\epsilon)^n \psi_\lambda.  
\end{align}
\end{cor}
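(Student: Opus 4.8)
The plan is a straightforward induction on $n$, the base case $n=1$ being precisely the preceding Proposition. The inductive engine is the defining identity (\ref{hGn}), which writes $(\hat G^\pm_\epsilon)^{n+1}$ as the composition $\hat G^\pm_{\epsilon\pm ns}\circ(\hat G^\pm_\epsilon)^n$; so it will be enough to understand how one more factor $\hat G^\pm$, carrying the correct parameter, acts on the eigenfunction produced at the $n$-th stage.

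I would first treat the ``$-$'' branch. Assume inductively that $(\hat G^-_\epsilon)^n\psi_\lambda$ is an eigenfunction of $\hat L_0$ with eigenvalue $\lambda_n:=\check c\,(\epsilon+ns)^2-\dok$. Then $\lambda_n+\dok=\check c\,(\epsilon+ns)^2$, whence $c(\lambda_n+\dok)=|c|(\epsilon+ns)^2>0$ (using $s>0$ and $\epsilon\ge 0$) and $\sqrt{|\lambda_n+\dok|}=\epsilon+ns$. Thus $\lambda_n$ lies in the admissible range of the preceding Proposition, with associated parameter $\epsilon+ns$; applying that Proposition to $\hat G^-_{\epsilon+ns}$ acting on $(\hat G^-_\epsilon)^n\psi_\lambda$ produces an eigenfunction of $\hat L_0$ with eigenvalue $\check c\,\big((\epsilon+ns)+s\big)^2-\dok=\check c\,(\epsilon+(n+1)s)^2-\dok$. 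By (\ref{hGn}) this function is exactly $(\hat G^-_\epsilon)^{n+1}\psi_\lambda$, which closes the induction. The ``$+$'' branch is handled identically, with $\epsilon+js$ replaced by $\epsilon-js$ throughout: at stage $n$ one has $\lambda_n+\dok=\check c\,(\epsilon-ns)^2$, the next factor in (\ref{hGn}) carries the parameter $\epsilon-ns$, and the Proposition yields the eigenvalue $\check c\,(\epsilon-(n+1)s)^2-\dok$.

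The only point I would flag for care — and the reason the two branches are not fully symmetric — is the propagation of the admissibility condition $c(\lambda_j+\dok)>0$ down the tower of lowered eigenvalues in the ``$+$'' case: it amounts to $\epsilon-js\neq 0$ for $j=0,\dots,n-1$, and one also wants the parameter dictated by (\ref{hGn}) to be $\epsilon-js$ rather than $|\epsilon-js|$. Both hold as soon as $\epsilon>(n-1)s$, i.e. as long as the chain of $n$ lowering operators does not reach the bottom of the region $c(\lambda+\dok)>0$; this is automatic for the raising branch, and it is exactly the regime in which $(\hat G^+_\epsilon)^n$ is used in Theorem \ref{teoq}. Under this proviso the argument is nothing more than a telescoping of the single-step relations (\ref{L0G}), so I do not expect any genuine obstacle beyond this bookkeeping.
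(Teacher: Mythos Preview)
Your proof is correct and is exactly the induction the paper leaves implicit (the Corollary is stated without proof, as an immediate consequence of the preceding single-step Proposition and the recursive definition (\ref{hGn})). The caveat you raise about the admissibility condition $\epsilon>(n-1)s$ in the lowering branch is a genuine bookkeeping point that the paper itself does not make explicit.
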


\subsection{Quantum extended Hamiltonians $\hat{H}$ with Shift-ladder operators }\label{6.2}

As in \cite{KN0} we look for Shift-Ladder operators of the $M$-dependent radial operator (\ref{hHM}), that is we search for operators that modify the
value of $M$ as well as  the energy (they add a constant multiple of the identity to
$\hat{H}^M$). We use an ansatz on the form of $W(u)$ and of $\gamma(u)$, which allows us to exactly generalize the shift-ladder operators of the
the radial operator analysed in \cite{KN0} for $N=1$
to an arbitrary $N$. A study of a more general $W$
and $\gamma$ is not yet complete.
\begin{prop}\label{p:Ahat}
If the operators $\hat{H}^M$ (\ref{hHM}) verify $c\neq 0$,
 $C=0$ (that is $\gamma(u)=\frac{1}{cu}$) and
\begin{equation}\label{FF}
W(u)= \frac{\omega^2}2 c^2u^2 - \frac{\hbar^2(N-1)^2}{8u^2}=\frac{\omega^2}2c^2u^2-\frac{\dok}{cu^2}, \qquad \omega\in\mathbb{R},
\end{equation}
then, for all $M$ such that $cM>0,$ by setting
$$
\mu=\sqrt{|M|},
$$
the operators
(\ref{dhA}) satisfy
$$
 \hat{H}^{\bar{M}} \hat{A}^{\sigma,\tau}_\mu=\hat{A}^{\sigma,\tau}_\mu (\hat{H}^M -\sigma \hbar c\omega),
$$
with
\begin{equation}\label{sM}
\bar M=\check{c}\left(\mu-\tau s\right)^2=\check{c}\left(\mu-\tau\hbar \sqrt{\frac{|c|} 2}\right)^2.
\end{equation}
\end{prop}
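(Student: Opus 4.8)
The plan is to establish the intertwining identity by a direct computation that exploits the very explicit form of the two operators. First I would specialize the $M$-dependent radial operator (\ref{hHM}) to the present hypotheses $C=0$, $\gamma(u)=1/(cu)$ and (\ref{FF}), obtaining
\[
\hat H^M=-\frac{\hbar^2}{2}\Big(\partial_u^2+\tfrac{N}{u}\partial_u\Big)+\frac{M}{cu^2}+\frac{\omega^2c^2}{2}u^2-\frac{\hbar^2(N-1)^2}{8u^2}.
\]
The structural observation I would build on is that $\hat H^M=P+MQ+R$, where $P=-\tfrac{\hbar^2}{2}(\partial_u^2+\tfrac{N}{u}\partial_u)$, $Q=1/(cu^2)$ and $R=W$ do not depend on $M$; since $M$ and $\bar M$ are scalars, writing $\hat A:=\hat A^{\sigma,\tau}_\mu=\partial_u+g$ with $g(u)=au+b/u$, $a=\sigma\omega c/\hbar$, $b=(\hbar c(N-1)+\tau\sqrt{2|c|}\,\mu)/(2\hbar c)$, this gives
\[
\hat H^{\bar M}\hat A-\hat A\hat H^M=[P,\hat A]+[R,\hat A]+(\bar M-M)Q\hat A-M[\hat A,Q].
\]

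Next I would expand these commutators with the elementary identities $[\partial_u,h]=h'$ and $[\partial_u^2,h]=h''+2h'\partial_u$, valid for any multiplication operator $h$. Since $g$ is Laurent of degrees $\pm1$, the right-hand side reduces to a first-order operator $\mathcal C_1(u)\partial_u+\mathcal C_0(u)$ whose coefficients are Laurent polynomials in $u$, involving only the powers $u$, $1$, $u^{-1}$, $u^{-2}$, $u^{-3}$. Equating it to $-\sigma\hbar c\omega\,\hat A=-\sigma\hbar c\omega\,\partial_u-\sigma\hbar c\omega\,g$ and matching coefficients, the $u^0$ part of the $\partial_u$-coefficient merely reproduces $a=\sigma\omega c/\hbar$ (consistent with (\ref{dhA})), whereas the $u^{-2}$ part of the $\partial_u$-coefficient produces the relation $(\bar M-M)/c=\tfrac{\hbar^2N}{2}-\hbar^2 b$. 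Substituting $b$ and the relation $M=\check c\mu^2$ (legitimate because $cM>0$), and simplifying by means of $\sqrt{2|c|}=2s/\hbar$, $s^2=\tfrac{\hbar^2|c|}{2}$ and $\check c|c|=c$, this collapses to the claimed value $\bar M=\check c(\mu-\tau s)^2$; in particular $\bar M$ comes out independent of $\sigma$ and of $\omega$, as a pure shift of the $M$-parameter must be.

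It then remains to verify the zeroth-order identity $\mathcal C_0=-\sigma\hbar c\omega\,g$ power by power in $u$. The $u^1$-terms agree because of the harmonic term $\tfrac{\omega^2c^2}{2}u^2$ in $W$ (both sides give $-c^2\omega^2u$); the $u^{-1}$-terms agree automatically once the $\partial_u$-equation above has been imposed; and the $u^{-3}$-term yields the only remaining constraint, which --- after substituting $(\bar M-M)/c=\tfrac{\hbar^2N}{2}-\hbar^2 b$, completing the square $b^2-b(N-1)+\tfrac{(N-1)^2}{4}=\big(b-\tfrac{N-1}{2}\big)^2$, and using $b-\tfrac{N-1}{2}=\tfrac{\tau\sqrt{2|c|}\,\mu}{2\hbar c}$ together with $M=\check c\mu^2$ --- reduces to an identity in $c,\hbar,\mu$. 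I expect this $u^{-3}$ check, and the completion of the square that exhibits why the inverse-square potential is exactly what makes the intertwining close, to be the one genuinely delicate point of the argument: it is there that the particular coefficient $-\hbar^2(N-1)^2/(8u^2)$ of the inverse-square part of $W$ in (\ref{FF}) is forced. Everything else follows mechanically from the decomposition $\hat H^M=P+MQ+R$ and the two commutator identities. (As a cross-check one may conjugate $\hat H^M$ by $u^{N/2}$ into Schr\"odinger form, in which the $\hat A^{\sigma,\tau}_\mu$ become the standard first-order Darboux intertwiners of a radial harmonic oscillator; this viewpoint is not needed for the proof.)
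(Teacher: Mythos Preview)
Your approach is essentially the same as the paper's: both establish the intertwining by direct computation, expanding the operator identity and matching coefficients of powers of $u$. The paper organises the calculation slightly differently --- it starts from an ansatz $\hat A=\partial_u+a\gamma+b\gamma^{-1}$ with $a,b$ undetermined and a generic $W=\tfrac{\omega^2}{2}\gamma^{-2}+\beta\gamma^2$, imposes the intertwining, and \emph{derives} the values of $a$, $\delta$, together with a quadratic relation for $\bar M$ (equivalently for $b$); the value $\beta=-\hbar^2c^2(N-1)^2/8$, and hence the precise form (\ref{FF}) of $W$, is then singled out as the choice that makes $\bar M$ collapse to the clean form (\ref{sM}). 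Your proof runs the same machine in verification mode, plugging in the specific $a,b$ from (\ref{dhA}) at the outset and checking each power of $u$. The decomposition $\hat H^M=P+MQ+R$ is a tidy way to package this, but it does not change the underlying computation.

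One organisational caution: you read off $\bar M$ from the $u^{-2}\partial_u$-coefficient alone and then treat the $u^{-3}$ equation as an independent check that ``reduces to an identity''. In the paper's set-up these two equations are genuinely coupled: the $\partial_u$-coefficient gives only a \emph{linear} relation between $\bar M-M$ and $b$, and it is the combination with the $u^{-3}$ equation that yields the quadratic for $\bar M$ whose two roots correspond to $\tau=\pm1$. In your verification scheme this means the $u^{-3}$ step is not a mere formality --- it is precisely what ties the particular $b$ in (\ref{dhA}) to the stated $\bar M$ in (\ref{sM}). So the ``$\partial_u$ step'' and the ``$u^{-3}$ step'' carry equal weight; do not present the second as a consequence of the first.
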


\begin{proof}
We assume that the operators $\hat{A}$ are of the form
$$
\hat{A}= \partial_u + \left(a\gamma(u)+ b\frac{1}{\gamma(u)}\right), \qquad a,b\in \mathbb{R},
$$
then we find that a necessary condition for having
\begin{equation}\label{HAeq}
\hat H^{\bar M}\hat A=\hat A (\hat H^M- \delta),  \qquad \delta\in \mathbb{R}, 
\end{equation}
is $A=0$, that is $\gamma(u)=cu$, and	 
$$
\hat A= \partial_u+ \frac 1{\hbar^2 }\left[\delta u+\frac 1{2cu}(\hbar^2 c N-2(\bar{M}- M)) \right].
$$
Moreover, for $W(u)$ of the form $$W=\frac{\omega^2}{2}\gamma^{-2}+\beta \gamma^2,\qquad \omega,\beta\in \mathbb{R}$$  
the following conditions are sufficient in order to get the identity (\ref{HAeq}):
$$
\delta^2=\hbar^2 c^2\omega^2,
$$
(i.e., $\delta=\pm \hbar c \omega$)
and
$$
\bar{M}^2-\bar{M}(2M+\hbar^2c^2)+\frac{(2M-\hbar^2c^2)^2-\hbar^4c^4(N-1)^2-8h^2c^2\beta}{4}=0
$$
that is
$$
\bar{M}=M+\frac{\hbar^2c}{2}\pm \frac{\hbar}{2}\sqrt{\hbar^2c^2(N-1)^2+8\beta+8cM}.
$$
For $cM>0,$ by setting $\mu=\sqrt{|M|}$,
if
\begin{equation}\label{QC}
\beta=-\frac{\hbar ^2c^2(N-1)^2}{8}.
\end{equation}
then we have (\ref{sM}). Furthermore,
 $W$ becomes (\ref{FF}) and
we get the operators $\hat{A}_\mu^{\sigma,\tau}$ of the statement.
\end{proof}

\begin{rmk} \rm
Operators $\hat{A}_\mu^{\sigma,\tau}$ are a mixture of shift and ladder operators, as in \cite{CKNd}, since both the parameter $M$ and the energy of the Hamiltonian $\hat{H}^M$ are changed by them. Since we are not requiring that $\hat{H}^M$ can be written (up to a multiple of the identity) as a composition of $\hat{A}_\mu^{\sigma,\tau}$, we have more freedom in their form than in \cite{CKNd} (e.g. the $\hat{A}_\mu^{\sigma,\tau}$ are defined up to a constant factor).
\end{rmk}
\begin{rmk} \rm 
The value of $\beta$ in (\ref{QC}) coincides with the value of $c_0$ given by (\ref{cons}).
Moreover, the shift $s$ of $M^2$ coincides with the shift of $\epsilon^2$  produced by $\hat{G}_\epsilon$.
	\end{rmk}

\begin{cor}
If the radial operator $\hat{H}^M$ defined in (\ref{hHM}) is
\begin{equation}\label{hHok}
\hat{H}^M= -\frac{\hbar^2}2\left(\partial_u^2 + \frac{N}{u} \partial_u\right) 
+\frac{8M-\hbar^2c(N-1)^2}{8cu^2}+\frac{\omega^2}{2}c^2u^2, \qquad M=k^2\lambda.
\end{equation}
then, for all $M$ such that $cM>0$, by setting $\mu=\sqrt{|M|}$, $M=\check{c} \mu^2$,
for any positive integer $m$,  the operators $(\hat{A}^{\sigma,\tau}_\mu)^m$
(\ref{dAm})  verify
$$
(\hat{A}^{\sigma,\tau}_\mu)^m (\hat{H}^M-\sigma m\delta)=\hat{H}^{\check{c}(\mu-\tau ms)^2} (\hat{A}^{\sigma,\tau}_\mu)^m
$$
with $s=\sqrt{\frac{|c|}{2}}\hbar$, $\delta=\hbar c\omega$, $\sigma,\tau=\pm 1$.
\end{cor}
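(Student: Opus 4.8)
The plan is to iterate the single-step intertwining relation from Proposition \ref{p:Ahat}, namely $\hat{H}^{\bar M}\hat{A}^{\sigma,\tau}_\mu = \hat{A}^{\sigma,\tau}_\mu(\hat{H}^M - \sigma\hbar c\omega)$ with $\bar M = \check c(\mu - \tau s)^2$, through a finite induction on $m$. First I would observe that the radial operator $\hat{H}^M$ of (\ref{hHok}) is exactly the operator (\ref{hHM}) with $\gamma(u) = 1/(cu)$, $C = 0$, and $W(u)$ given by (\ref{FF}): expanding $-\frac{\hbar^2}{2}(\partial_u^2 + Nc\gamma\partial_u)$ with $\gamma = 1/(cu)$ gives the Laplacian term $-\frac{\hbar^2}{2}(\partial_u^2 + \frac{N}{u}\partial_u)$, the term $M c\gamma^2 = M/(cu^2)$ combines with $-\dok/(cu^2) = \hbar^2 c(N-1)^2/(8cu^2)$ from $W$ to produce $\frac{8M - \hbar^2 c(N-1)^2}{8cu^2}$, and the remaining piece of $W$ is $\frac{\omega^2}{2}c^2u^2$. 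So the hypotheses of Proposition \ref{p:Ahat} are in force, and with $\delta = \hbar c\omega$ the one-step relation reads $\hat{H}^{\check c(\mu-\tau s)^2}\hat{A}^{\sigma,\tau}_\mu = \hat{A}^{\sigma,\tau}_\mu(\hat{H}^M - \sigma\delta)$.

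Next I would set up the induction. Write $M_0 = M$, $\mu_0 = \mu = \sqrt{|M|}$, and define $\mu_j = \mu + \tau j s$ so that, by (\ref{sM}) applied with the shifted parameter $\mu_j$ in place of $\mu$, one step of $\hat{A}^{\sigma,\tau}_{\mu_j}$ sends the parameter $M_j := \check c\,\mu_j^2$ to $M_{j+1} = \check c(\mu_j - \tau s)^2 = \check c\,\mu_{j-1}^2 = M_{j-1}$ — wait, that is the wrong direction; the factors in (\ref{dAm}) are ordered with increasing index $\mu + \tau(m-1)s, \dots, \mu + \tau s, \mu$, applied right to left, so the rightmost factor $\hat{A}^{\sigma,\tau}_\mu$ acts first on an eigenfunction at parameter level $M$, and successive factors carry the parameter along. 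The bookkeeping to get right is: $\hat{A}^{\sigma,\tau}_{\mu}$ applied to the level-$M$ sector intertwines to level $\check c(\mu - \tau s)^2$; then $\hat{A}^{\sigma,\tau}_{\mu+\tau s}$ is precisely the operator whose "incoming" parameter $\mu + \tau s$ satisfies $\sqrt{|\,\check c(\mu - \tau s)^2\,|}$... here one must track the sign $\check c = \mathrm{sgn}(c)$ and the positivity condition $cM > 0$ carefully so that $\mu$ really equals $\sqrt{|M|}$ at each stage; this is the one place where a sign slip is easy. Granting that the indices in (\ref{dAm}) have been chosen so that the $(j+1)$-th factor from the right is exactly $\hat{A}^{\sigma,\tau}$ with incoming parameter matching the outgoing parameter of the composition of the first $j$ factors, the induction closes: composing $m$ steps shifts $M \mapsto \check c(\mu - \tau m s)^2$ and subtracts $m\delta$ from the energy, i.e. $(\hat{A}^{\sigma,\tau}_\mu)^m(\hat{H}^M - \sigma m\delta) = \hat{H}^{\check c(\mu - \tau m s)^2}(\hat{A}^{\sigma,\tau}_\mu)^m$.

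Concretely, the inductive step is: assume $(\hat{A}^{\sigma,\tau}_\mu)^m(\hat{H}^M - \sigma m\delta) = \hat{H}^{M^{(m)}}(\hat{A}^{\sigma,\tau}_\mu)^m$ with $M^{(m)} = \check c(\mu - \tau m s)^2$. Then $(\hat{A}^{\sigma,\tau}_\mu)^{m+1} = \hat{A}^{\sigma,\tau}_{\mu + \tau m s}\circ(\hat{A}^{\sigma,\tau}_\mu)^m$, and applying Proposition \ref{p:Ahat} to the single factor $\hat{A}^{\sigma,\tau}_{\mu + \tau m s}$ — whose relevant parameter is $\mu + \tau m s$, matching $\sqrt{|M^{(m)}|}$ since $M^{(m)} = \check c(\mu - \tau m s)^2$ and one checks $\mu - \tau m s$ has the correct sign (or, more robustly, $|\mu - \tau m s|$ is what enters, and the square kills the ambiguity) — gives $\hat{H}^{\,\check c(\mu - \tau(m+1)s)^2}\hat{A}^{\sigma,\tau}_{\mu+\tau m s} = \hat{A}^{\sigma,\tau}_{\mu+\tau m s}(\hat{H}^{M^{(m)}} - \sigma\delta)$. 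Composing on the right with $(\hat{A}^{\sigma,\tau}_\mu)^m$ and using the inductive hypothesis to replace $\hat{H}^{M^{(m)}}(\hat{A}^{\sigma,\tau}_\mu)^m$ by $(\hat{A}^{\sigma,\tau}_\mu)^m(\hat{H}^M - \sigma m\delta)$ yields the $(m+1)$-step identity. The main obstacle I anticipate is not analytic but purely the index/sign bookkeeping — verifying that the incrementation $\mu \mapsto \mu + \tau s$ built into the definition (\ref{dAm}) is exactly compatible with $\bar M = \check c(\mu - \tau s)^2$ from (\ref{sM}) under the standing constraint $cM > 0$, so that at every intermediate stage the hypothesis "$\mu = \sqrt{|M|}$" of Proposition \ref{p:Ahat} is legitimately re-invoked. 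Once that compatibility is pinned down, the rest is a one-line induction.
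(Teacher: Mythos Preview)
Your proposal is correct and follows precisely the route the paper intends: the corollary is stated in the paper without any explicit proof, as an immediate consequence of Proposition~\ref{p:Ahat} obtained by iterating the single-step intertwining relation, and that is exactly what you do. Your opening verification that (\ref{hHok}) is the specialization of (\ref{hHM}) to $\gamma=1/(cu)$, $C=0$, $W$ as in (\ref{FF}) is a useful explicit check the paper leaves implicit, and your honest flagging of the index/sign bookkeeping in matching the increments $\mu\mapsto\mu+\tau s$ built into (\ref{dAm}) with the shift $\bar M=\check c(\mu-\tau s)^2$ from (\ref{sM}) identifies the only genuine point requiring care.
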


\begin{rmk} \rm
A direct attempt to reabsorb the constant $\dok$ in $M$, as we did for the ladder operators $\hat{G}$, does not seem to work. 
Indeed, we define 
the operator 
\begin{equation}\label{H0}
\hat{H}^0=-\frac{\hbar^2}{2}\left(\partial^2_{uu}+\frac{N}{u}\partial_u
\right)+ \frac{\omega^2}{2}c^2u^2.
\end{equation}
Let $\hat{H}_0^\nu$ be
\begin{equation}\label{hHmu}
\hat{H}_0^{\nu}= \hat{H}_0  
+\frac{\nu}{cu^2},
\end{equation}
then, we have the relations
$$\hat{H}^M=\hat{H}_0^{{M+\dok}},
\qquad 
\hat{H}_0^\nu=\hat{H}^{{\nu-\dok}},
$$
with $\dok$ as in (\ref{dok}).
We look for shift-ladder operator of $\hat{H}_0^\nu$.
By 
setting $$\mu=\sqrt{|\nu-\dok|},$$
and considering the operators defined in Proposition \ref{p:Ahat}, we get
$$
\hat{A}^{\sigma,\tau}_\mu\hat{H}_0^\nu=(\hat{H}_0^{\bar{\nu}} +\sigma \delta) \hat{A}^{\sigma,\tau}_\mu,$$
where $\bar{\nu}=\check{c}(\sqrt{|\nu-\dok|}-\tau s)^2 +\dok$, $s=\sqrt{\frac{|c|}{2}}\hbar$, $\delta=\hbar c\omega$, and $\sigma,\tau=\pm 1$.
The form of $\bar{\nu}$ allows to apply also the operator $(\hat{A}^{\sigma,\tau}_\mu)^m$, which for $\mu$ as above gives
\begin{equation}\label{coso}
(\hat{A}^{\sigma,\tau}_\mu)^m \hat{H}^\nu_0 =(\hat{H}^{\bar{\nu}}_0 +m\sigma\delta
)(\hat{A}^{\sigma,\tau}_\mu)^m,
\end{equation}with $\bar{\nu}=\check{c}(\mu-m\tau s)^2 +\dok$.
For any function $\phi(u)$ and for any eigenfunction
$\psi_\lambda$ of $\hat{L}_0$ (\ref{hL0}) the operator
$$
\hat{H}_1= \hat{H}_0-\frac{k^2}{cu^2}\hat{L}_0
$$
acts on $(\hat{G}^{-}_\epsilon)^{2n}(\hat{A}_\mu^{\sigma,\tau})^{2m}(\phi \psi_\lambda)$  (where $\epsilon$ is given by (\ref{ep0}), and
 $\mu$ has to be determined) as
\begin{equation}\label{pezzo}
\hat{H}_1((\hat{G}^{-}_\epsilon)^{2n} \circ(\hat{A}_\mu^{\sigma,\tau})^{2m}(\phi \psi_\lambda))=
\hat{H}_1[(\hat{G}^{-}_\epsilon)^{2n}(\psi_\lambda) (\hat{A}_\mu^{\sigma,\tau})^{2m}(\phi)].
\end{equation}
Since $(\hat{G}^{-}_\epsilon)^{2n}(\psi_\lambda)$ is an eigenfunction of $\hat{L}_0$ associated with the eigenvalue $\bar{\lambda}=\check{c}(\epsilon+2ns)^2-\dok$, we get that (\ref{pezzo}) equals
$$
 \hat{H}_0^{\bar{\nu}}[(\hat{A}_	\mu^{\sigma,\tau})^{2m}(\phi)] (\hat{G}^{-}_\epsilon)^{2n} (\psi_\lambda)
= \hat{H}^{\bar{\nu}-\dok}[(\hat{A}_	\mu^{\sigma,\tau})^{2m}(\phi)](\hat{G}^{-}_\epsilon)^{2n}(\psi_\lambda)
$$
with $\bar{\nu}
=\check{c}{(k\epsilon+2ms)^2-k^2\dok}$.
But it  does not seem possible to go further and apply (\ref{coso}) for any $\mu$ if $\dok \neq 0$.
\end{rmk}

\subsection{Shift operators for the eigenfunctions of $\hat H$} \label{6.3}
Let $\phi^M_E(u)$ be an eigenfunction of (\ref{hHok})  with eigenvalue $E$:
$
\hat{H}^M(\phi^M_E)=E\phi^M_E.
$
We look for shift operators for these eigenfunctions.
\begin{prop}\label{p:Dhat}
The operators
(\ref{Dhat}) satisfy
\begin{equation}\label{Deq1n}
\hat H^M \hat D^\pm_E \phi^M_E=(E\pm 2\hbar c \omega)\hat D^\pm_E \phi^M_E,
\end{equation}
i.e., they map eigenfunctions of $\hat{H}^M$ to eigenfunction of the same operator with
a different eigenvalue:
$$\hat{D}^+_E: \phi^M_E \mapsto \phi^M_{E+2\hbar c\omega}, \qquad
\hat{D}^-_E: \phi^M_E \mapsto \phi^M_{E-2\hbar c\omega},
$$
\end{prop}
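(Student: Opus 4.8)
\textbf{Proof plan for Proposition \ref{p:Dhat}.}

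The plan is to verify the intertwining relation \eqref{Deq1n} by direct computation, exploiting the fact that $\hat H^M$ in the form \eqref{hHok} (with $\gamma=1/(cu)$) is, up to a constant, an isotropic harmonic-oscillator-type operator in the variable $u$, for which the operators $\hat D^\pm_E$ of \eqref{Dhat} are the familiar dilation-type shift operators. First I would write $\hat H^M = -\frac{\hbar^2}{2}(\partial_u^2 + \tfrac{N}{u}\partial_u) + \frac{\nu_M}{cu^2} + \frac{\omega^2}{2}c^2u^2$ with $\nu_M = M - \dok$ (recall $\dok=-\hbar^2 c(N-1)^2/8$), so that the coefficient of $u^{-2}$ is fixed and does not interfere with the action of $\hat D^\pm_E$. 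Note that $\hat D^\pm_E$ depends on $u$ only through $u\partial_u$ and $u^2$, and its $E$-dependence is a pure additive constant $\pm E/(2c\omega)$; this is the structural reason the eigenvalue shifts by a fixed amount while the parameter $M$ (hence $\nu_M$) is left untouched.

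The key computation is the commutator $[\hat H^M,\hat D^\pm_E]$ acting on $\phi^M_E$. I would split $\hat D^\pm_E = \frac{\hbar}{\sqrt2}\bigl(u\partial_u + \tfrac{N+1}{2}\bigr) \mp \frac{\omega c}{\sqrt2}u^2 \pm \frac{E}{2c\omega}$ and compute the commutator with each of the three pieces of $\hat H^M$ separately. The operator $u\partial_u + \tfrac{N+1}{2}$ is (half of) the generator of dilations adapted to the measure $u^N\,du$; commuting it with $-\frac{\hbar^2}{2}(\partial_u^2+\tfrac Nu\partial_u)$ produces $+\hbar^2(\partial_u^2+\tfrac Nu\partial_u)$ (i.e. $-2$ times the kinetic term), commuting it with $u^{-2}$ gives $-2u^{-2}$, and commuting it with $u^2$ gives $+2u^2$; the operator $u^2$ commutes with the potential terms and, against the kinetic term, produces a first-order operator that one checks is proportional to $u\partial_u+\tfrac{N+1}{2}$ again. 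Collecting terms, the piece of $[\hat H^M,\hat D^\pm_E]$ that is not itself a constant multiple of $\hat D^\pm_E$ must be made to vanish: this is exactly where $\gamma=1/(cu)$, the precise coefficient $\tfrac{N+1}{2}$ in \eqref{Dhat}, and the normalization $\delta=\hbar c\omega$ from Proposition \ref{p:Ahat} conspire. What remains is $[\hat H^M,\hat D^\pm_E] = \pm 2\hbar c\omega\,\hat D^\pm_E + (\text{const})\cdot(\hat H^M - E)$ on the nose; evaluating on $\phi^M_E$ and using $\hat H^M\phi^M_E = E\phi^M_E$ kills the last term and yields $\hat H^M(\hat D^\pm_E\phi^M_E) = (E\pm 2\hbar c\omega)\,\hat D^\pm_E\phi^M_E$, which is \eqref{Deq1n}. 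The mapping statement $\hat D^\pm_E:\phi^M_E\mapsto\phi^M_{E\pm 2\hbar c\omega}$ is then just a restatement.

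The main obstacle I anticipate is purely bookkeeping: the $u\partial_u$ term in $\hat D^\pm_E$ does not commute with the $u^{-1}\partial_u$ and $u^2$ terms in $\hat H^M$, so several first-order and zeroth-order cross terms appear, and one must check that they reorganize precisely into a multiple of $\hat D^\pm_E$ plus a multiple of $(\hat H^M-E)$ with no leftover. The cleanest way to control this is to observe that the $E$-independent part of $\hat D^\pm_E$, call it $\hat D^\pm$, satisfies $[\hat H^M,\hat D^\pm] = \pm 2\hbar c\omega\,\hat D^\pm \mp \frac{\hbar}{\sqrt2\,\omega c}\,\hat H^M$ as an operator identity (independent of $E$), and that adding the constant $\pm\frac{E}{2c\omega}$ exactly compensates the unwanted $\mp\frac{\hbar}{\sqrt2\,\omega c}\hat H^M$ term \emph{when applied to an $E$-eigenfunction}. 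I would first establish that operator identity by a careful but routine expansion, and then the Proposition follows immediately; the sign choices ($+$ versus $-$) track through consistently and give the two cases simultaneously.
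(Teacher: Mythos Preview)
Your plan is sound and amounts to the same direct computation the paper performs; the paper simply organizes it as an ansatz $\hat D_E=a_0\,cu\,\partial_u+a_1+a_2\,c^2u^2$ with undetermined constants, computes $\hat H^M(\hat D_E\phi^M_E)$, and solves for $a_1,a_2$ and the eigenvalue shift, rather than verifying a commutator identity for the already-given $\hat D^\pm_E$. One caution on your stated operator identity: the coefficient of $\hat H^M$ in $[\hat H^M,\hat D^\pm]$ works out to $+\sqrt{2}\,\hbar$ (the \emph{same} sign for both choices of $\pm$), not $\mp\hbar/(\sqrt{2}\,\omega c)$, so redo that piece of the bookkeeping carefully when you write out the details.
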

\begin{proof}
By using the ansatz 
$$
\hat{D}_E\phi=a_0 cu\partial_u\phi+(a_1+a_2c^2u^2)\phi
$$
($a_i\in\mathbb{R},$ $i=0,1,2$)
and imposing the condition that $\phi$ has to be
an eigenfunction of $\hat{H}^M$,
we compute $\hat{H}^M(\hat{D}_E(\phi^M_E))$
and we determine the values of $a_1=\mp \frac{c}{\hbar}\omega a_0$, $a_2=\left(c\frac{N+1}{2}\pm \frac{E}{\hbar \omega } \right)a_0$
and the shift of the eigenvalue  (\ref{Deq1n}). For $a_0=\hbar/(\sqrt{2}c)$ we obtain
(\ref{Dhat}).
\end{proof}

Up to a constant factor, the relations (\ref{Deq1n}) coincide with the analogous ones in \cite{CKNd} for $\hbar^2=2$. 

The expressions of the operators $\hat{D}^\pm_E$ and
the shift of the eigenvalue do not depend on $M$.
Hence, the same operators are ladder operators for
the eigenfunctions of the operator (\ref{hHmu}).

The operators (\ref{dDm})
verify the shift property: if $\hat H^M  \phi^M_E=E\phi^M_E$, then
\begin{equation}\label{hHD}
\hat H^M (\hat D^{+ }_E)^m \phi^M_E=(E+ 2m\delta)(\hat D^{+ }_E)^m \phi^M_E, \quad
\hat H^M (\hat D^{- }_E)^m \phi^M_E=(E- 2m\delta)(\hat D^{- }_E)^m \phi^M_E,
\end{equation}
with $\delta=\hbar c \omega$.

\subsection{Proof of Theorem \ref{teoq} }\label{6.4}

\begin{proof1}{Theorem \ref{teoq}}
 Let $\hat{L}_0$ be the Schr\"odinger operator defined in (\ref{hL0})
such that  (\ref{HE1}) and (\ref{VE1}) admit a non trivial solution $G$ with $c\neq 0$ and $c_0=0$. 
Let 
 $\hat{H}^0$ be the operator defined in (\ref{H0}),
then we can write  the operator (\ref{Hokk}) as
\begin{equation}\label{Hok}
\hat{H}=\hat{H}^0+\frac{k^2}{cu^2}\hat{L}_0 +(k^2+1)\frac{\dok}{cu^2},
\end{equation}
with $k\in\mathbb{Q}^+$ and $\dok$ given by (\ref{dok}).
Moreover, we can write the $M$-dependent operator (\ref{hHok}) as
$$
\hat{H}^M=\hat{H}^0+ \frac{M+\dok}{cu^2}. 
$$

Let $\psi_\lambda$ be an eigenfunction of $\hat{L}_0$ associated with an eigenvalue $\lambda$ such that $c(\lambda+\dok)>0$.
 Then, the function $\phi(u)\psi_\lambda$ is an eigenfunction of $\ref{Hok}$ with eigenvalue $E$, if and only if
 $\phi_E(u)=\phi_E^{M}$ with
 $\hat{H}^{M}(\phi_E^{M})=E\phi_E^{M}$ for $$M=k^2(\lambda+\dok).$$ Obviously, $cM>0$ if and
 only if $c(\lambda+\dok)>0$. 
We set $\epsilon$ as in (\ref{ep0}), so that
$$\sqrt{|M|}=k\epsilon,$$ or, equivalently,$$ M=\check{c}k^2\epsilon^2.$$
According to Definition \ref{d:ws}, we prove that 
 $$\hat{H}(\hat{X}^k_{\epsilon, E} (\phi_E^{M}\psi_\lambda))= E\hat{X}^k_{\epsilon, E} (\phi_E^{M}\psi_\lambda),$$ for all multiplicatively separated eigenfunctions $\phi_E^{M}\psi_\lambda$ of $\hat{H}$ such that $c(\lambda+\dok)>0$.
We have the splitting $$\hat{X}^k_{\epsilon, E}(\phi_E^{M}\psi_\lambda)= (\hat{A}^{1,1}_{k\epsilon})^{2m} \circ (\hat{D}^{+}_E)^m(\phi_E^{M})\cdot (\hat{G}^{+}_\epsilon)^{2n}( \psi_\lambda).$$
Hence, by applying
Propositions \ref{p:Ghat},
\ref{p:Ahat} and \ref{p:Dhat},  
we have

\begin{align*}
& \hat{H}(\hat{X}^k_{\epsilon, E}  (\phi_E^{M}\psi_\lambda))=
\\
=&
\hat{H}^0
(\hat{A}^{1,1}_{k\epsilon})^{2m} 
 (\hat{D}^{+}_E)^m(\phi_E^M)\cdot (\hat{G}^{+}_\epsilon)^{2n}( \psi_\lambda) + \\
  &+ (\hat{A}^{1,1}_{k\epsilon})^{2m} 
  (\hat{D} ^{+}_E)^m(\phi_E^{M})\cdot \frac{k^2}{cu^2}
\hat{L}_0(\hat{G}^{+}_\epsilon)^{2n}( \psi_\lambda)
+
\\
 &+ (\hat{A}^{1,1}_{k\epsilon})^{2m} 
 (\hat{D} ^{+}_E)^m(\phi_E^{M})\cdot
\frac{(k^2+1)\dok}{cu^2}(\hat{G}^{+}_\epsilon)^{2n}( \psi_\lambda)=
\\
=&
\hat{H}^0(\hat{A}^{1,1}_{k\epsilon})^{2m} 
 (\hat{D} ^{+}_E)^m(\phi_E^{k\epsilon})\cdot (\hat{G}^{+}_\epsilon)^{2n}( \psi_\lambda) + \\
  &+ (\hat{A}^{1,1}_{k\epsilon})^{2m} 
  (\hat{D} ^{+}_E)^m(\phi_E^{M})\cdot
\frac{\check{c}k^2(\epsilon+2ns)^2+\dok}{cu^2}
(\hat{G}^{+}_\epsilon)^{2n}( \psi_\lambda)=
\\
=&
[\hat{H}^{\check{c}k^2(\epsilon-2ns)^2} (\hat{A}^{1,1}_{k\epsilon})^{2m} 
 (\hat{D} ^{+}_E)^m(\phi_E^{M})]\cdot (\hat{G}^{+}_\epsilon)^{2n}( \psi_\lambda)=
\\
=&
[\hat{H}^{\check{c}(k\epsilon-2ms)^2} (\hat{A}^{1,1}_{k\epsilon})^{2m} 
 (\hat{D} ^{+}_E)^m(\phi_E^{M})] \cdot (\hat{G}^{+}_\epsilon)^{2n}( \psi_\lambda)=
\\
=&
[(\hat{A}^{1,1}_{k\epsilon})^{2m} \circ (\hat{H}^{M}-2m\delta)\circ (\hat{D} ^{+}_E)^m(\phi_E^{M})]\cdot
(\hat{G}^{+}_\epsilon)^{2n}( \psi_\lambda)=
\\
=&
(\hat{A}^{1,1}_{k\epsilon})^{2m}[(E+2m\delta)
(\hat{D} ^{+}_E)^m(\phi_E^{M})-
2m \delta (\hat{D} ^{+}_E)^m(\phi_E^{M})]
\cdot (\hat{G}^{+}_\epsilon)^{2n}( \psi_\lambda)=
\\
=&
E(\hat{A}^{1,1}_{k\epsilon})^{2m}\circ (\hat{D} ^{+}_E)^m(\phi_E^{M})\cdot (\hat{G}^{+}_\epsilon)^{2n}( \psi_\lambda)=
\\
=&
E\hat{X}(\phi_E^{M}\psi_\lambda
).
\end{align*}
This proves the statement.
\end{proof1}

%
%

\section{Conclusions}
We analysed the Kuru-Negro ansatz about the existence of ladder functions in the context of extended Hamiltonian systems. We showed that the ansatz fits naturally in this theory, leading always to a factorized form of the  characteristic first integrals of extended systems. Moreover, we proved that for a large subclass of extended  Hamiltonians the ansatz is not restrictive. 

In the quantum context, 
instead of a deduction of the classical ladder functions and factorized first integrals from the
quantum factorized symmetry operators constructed through shift and ladder operators, by a sort of semiclassical limit procedure, we worked in the opposite direction: from the properties of the classical Hamiltonian $L$ we constructed a quantum operator $\hat{H}$ admitting a factorized symmetry operator. Indeed, 
we have considered the quantum Hamiltonian associated with a classical Hamiltonian $L$ admitting a ladder function.
We have explicitly constructed an extended quantum 
Hamiltonian and four factorized symmetry operators, 
by using shift and ladder operators, as in the Kuru-Negro theory. The quantum extension is a Schr\"odinger operator, whose classical counterpart is similar
to a particular case (case $c\neq 0$, $C=0$) of an extension of $L$. 
However, the two Hamiltonians coincide only if $L$ has one degree of freedom.  

These results show the intimate connection between the theory of extended Hamiltonians and the procedure of factorization in shift-ladder operators. The connection reveals itself in essentially two properties: the possibility of a partial separation of variables $(u,q^i)$ and the existence of solutions of the equation $X_L^2G=-2(cL+c_0)G$. Both the properties are consequences of the structure of warped manifold: the first one, of the configuration manifold of $H$, the second one, of the configuration manifold of $L$ \cite{Ta}. The relations between factorization and warped manifolds owe to be studied in greater detail. The opportunity of several further studies emerges clearly: 

(i) the construction of factorized symmetry operators should be carried on also  for the general extended metric tensor (case $C\neq 0$, $c\neq 0$); 

(ii) the  shift and ladder operators are here built on the model of \cite{CKNd}, but different, more general, forms of them should be investigated: in particular, a quantization based on the factorization results  of Theorem \ref{GN} should be studied; 

(iii) it is an open problem if the ansatz is actually a restrictive condition for the more general $G(q^i,p_j)$.

\end{document}